\let\LaTeXKernelLabel\label
\documentclass[aps,pra,reprint,superscriptaddress,nofootinbib]{revtex4-2}

\usepackage{amsmath,amssymb,amsfonts,amsthm,mathtools,bm}
\usepackage{microtype}
\usepackage{xcolor}
\makeatletter
\let\label\LaTeXKernelLabel
\let\ltx@label\label
\makeatother
\usepackage{hyperref}

\hypersetup{colorlinks=true,citecolor=blue!55!black,linkcolor=blue!55!black,urlcolor=blue!55!black,pdftitle={Analytic Maximal Violation of Extended MABK Inequalities for Generalized GHZ States}}

\newtheorem*{theorem*}{Theorem}
\newtheorem{theorem}{Theorem}

\newtheorem*{corollary*}{Corollary}
\newtheorem{lemma}{Lemma}
\newtheorem*{lemma*}{Lemma}

\newtheorem*{proposition*}{Proposition}

\newtheorem*{conjecture*}{Conjecture}
\theoremstyle{definition}
\newtheorem{definition}{Definition}
\newtheorem*{definition*}{Definition}
\theoremstyle{remark}
\newtheorem{remark}{Remark}
\newtheorem*{remark*}{Remark}

\newcommand{\Real}{\mathbb R}

\newcommand{\EquRef}[1]{Eq.~\eqref{#1}}

\newcommand{\ThmRef}[1]{\textbf{Theorem}~\ref{#1}}
\newcommand{\LemRef}[1]{\textbf{Lemma}~\ref{#1}}

\newcommand{\Tr}{\operatorname{Tr}}
\DeclareMathOperator{\diag}{diag}
\DeclareMathOperator{\rank}{rank}

\begin{document}

\title{Analytic Maximal Violation of Extended MABK Inequalities for Generalized GHZ States}

\author{Kun-Peng Wu}
\affiliation{School of Physics, Nankai University, Tianjin 300071, People's Republic of China}

\author{Hui-Xian Meng}
\affiliation{School of Mathematics and Physics, North China Electric Power University, Beijing 102206, China}

\author{Jie Zhou}
\affiliation{College of Physics and Materials Science, Tianjin Normal University, Tianjin
  300387, China}

\author{Jing-Ling Chen}
\email{chenjl@nankai.edu.cn}
\affiliation{Theoretical Physics Division, Chern Institute of Mathematics, Nankai University, Tianjin 300071, People's Republic of China}

\date{\today}

\begin{abstract}
  We analytically characterize the maximal quantum violation of the extended Mermin-Ardehali-Belinskii-Klyshko (EMABK) family of inequalities by $n$-qubit generalized Greenberger-Horne-Zeilinger (GHZ) states. We develop a correlation-tensor approach in which the Bell value is expressed as the  Frobenius inner product of the quantum correlation tensor and an effective coefficient tensor. A rank constraint on the latter, together with the von Neumann trace inequality, yields an upper bound governed by the two largest singular values of the reshaped correlation-tensor. For generalized GHZ states, we determine the complete singular spectrum and obtain a piecewise analytic upper bound. We then construct two complementary measurement strategies that saturate the bound for EMABK throughout the entire parameter range.
  The first is a purely MABK strategy in which all measurement directions lie in the equatorial plane of the Bloch sphere. The second is a hybrid strategy that recursively combines lower-order MABK anti-diagonal operators with the fully-$\sigma_z$ tensor product. The resulting piecewise analytic expression recovers the standard MABK Tsirelson bound $2^{(n-1)/2}$ in the maximally entangled limit and approaches the classical local-hidden-variable bound in the product-state limits. It further shows that every entangled generalized GHZ state exhibits a strict quantum-classical separation under the EMABK inequality, thereby eliminating the nonviolation region of the standard MABK inequality in the partially entangled regime.
\end{abstract}

\maketitle

\section{Introduction}

Bell nonlocality is a fundamental feature of quantum theory, revealing correlations that are incompatible with any local-realistic description~\cite{2014RMPBN, EPR1935}. Since Bell's theorem~\cite{Bell1964} and the subsequent Clauser-Horne-Shimony-Holt (CHSH) inequality~\cite{CHSH1969}, Bell inequalities have become indispensable tools for testing quantum nonlocality, with applications ranging from device-independent quantum key distribution~\cite{Ekert1991,Acin2006,Acin2007} to randomness certification~\cite{Pironio2010}. The degree of Bell violation provides a quantitative measure of nonlocal correlations, with the CHSH inequality attaining the Tsirelson bound $2\sqrt{2}$ for maximally entangled two-qubit states~\cite{Cirelson1980}. Experimental tests, from Aspect's pioneering experiments~\cite{Aspect1982} to recent loophole-free demonstrations~\cite{Hensen2015,Giustina2015,Shalm2015}, have firmly established the
violation of Bell inequalities.

In multipartite scenarios, the Mermin-Ardehali-Belinskii-Klyshko (MABK) family provides a natural generalization of the CHSH inequality, with quantum violations growing exponentially with the number of parties $n$~\cite{Mermin1990,GHZ1989,Ardehali1992,BelinskiiKlyshko1993}. The complete class of full-correlation Bell inequalities with two dichotomic observables per party was characterized by Werner and Wolf and independently by {\. Z}ukowski and Brukner~\cite{WernerWolf2001,ZukowskiBrukner2002}. MABK inequalities are powerful tools for detecting multipartite nonlocality and entanglement, and their maximal violations have also been exploited in the self-testing of Greenberger-Horne-Zeilinger (GHZ) states ~\cite{Kaniewski2016,Panwar2023}.


Previous theoretical studies have primarily focused on the maximal MABK violation $2^{(n-1)/2}$ attained by maximally entangled GHZ states at $\theta=\pi/4$. For partially entangled generalized GHZ states, however, the dependence of the maximal MABK value on the parameter $\theta$ has not been fully characterized analytically~\cite{ScaraniGisin2001}. Tailored Bell inequalities can nevertheless reveal the nonlocality of generalized three-qubit GHZ states throughout the entangled parameter range~\cite{ChenWuKwekOh2004}, and other two-setting multipartite families can detect states that are not detected by standard Bell inequalities~\cite{ChenAlbeverioFei2006}. More generally, every multipartite pure entangled state violates an appropriately constructed two-setting Bell inequality~\cite{YuEtAl2012}. The question addressed here is more specific: determining the exact maximal value within the recursively defined extended MABK (EMABK) family. This family has recently been constructed through recursive combinations of MABK operators~\cite{Fan2023} and remains violated throughout the partially entangled generalized-GHZ regime. Nevertheless, a systematic analytic expression for its maximal violation has remained unavailable.

In this paper, we provide a complete analytic characterization of the maximal quantum violation of the EMABK family for $n$-qubit generalized GHZ states.
We develop a unified correlation-tensor framework in which the quantum value of a Bell expression is written as the Frobenius inner product of the state-dependent correlation tensor $\mathcal{T}$ and the measurement-dependent effective coefficient tensor $\mathcal{C}$. By applying the von Neumann trace inequality together with rank constraints, we derive a tight upper bounds governed by the two largest singular values of the reshaped correlation tensor.
For generalized GHZ states, we determine the full singular spectrum and obtain a piecewise analytic bound. We then
construct two complementary measurement strategies that saturate this bound for EMABK throughout the entire parameter range: a purely equatorial MABK strategy and a hybrid strategy that combines transverse anti-diagonal correlations with the fully-$\sigma_z$ tensor product.
The result reaches $2^{(n-1)/2}$ at $\theta=\pi/4$, in agreement with the standard MABK result, and approaches the normalized classical bound of 1 as $\theta\to0$ or $\pi/2$.
Crucially, it eliminates the nonviolation region of the standard MABK inequality for partially entangled generalized GHZ states.


The rest of the paper is organized as follows. In Sec.~\ref{sec:tensor-framework}, we
introduce the correlation-tensor representation and derive the rank-constrained upper bound using the von Neumann trace inequality. In Sec.~\ref{sec:GHZ-MABKtype}, we determine the correlation tensor and its singular-value spectrum for generalized GHZ states and obtain the corresponding analytic bound. In Sec.~\ref{sec:EMABK}, we construct explicit measurement strategies that saturate the two branches of the bound and establish the exact maximal
EMABK violation. Section~\ref{sec:discussion} summarizes the main results and discusses their physical implications and possible extensions.

\section{Correlation-Tensor Framework and a Tight Upper Bound on Quantum Violation}\label{sec:tensor-framework}

Consider an $n$-qubit Bell scenario in which the $k$-th party has $m_k$ measurement settings, each yielding binary outcomes $\pm1$. An $(n,m,2)$-type full-correlation Bell inequality can be written as a linear combination of correlation functions:
\begin{equation}\label{eq:bell-inequality}
  \mathcal{B} = \sum_{x_1=0}^{m_1-1}\cdots\sum_{x_n=0}^{m_n-1} c_{x_1\cdots x_n}\, E(x_1,\dots,x_n) \le \mathcal{C}_{\text{LHV}},
\end{equation}
where $c_{x_1\cdots x_n}$ are real coefficients and $\mathcal{C}_{\text{LHV}}$ denotes the local-hidden-variable (LHV) bound. 
We assume throughout that the Bell expression has been normalized such that its local hidden-variable bound is unity, i.e., $\mathcal{C}_{\text{LHV}}=1$.

In the quantum setting, the parties share an $n$-qubit pure state $|\Psi\rangle$. For setting $x_k$, the dichotomic observable of the $k$-th party is specified by a unit Bloch vector $\vec{a}_{x_k}^{(k)}\in\mathbb{R}^3$ and the Pauli vector $\boldsymbol{\sigma}^{(k)}=(\sigma_x^{(k)},\sigma_y^{(k)},\sigma_z^{(k)})$:
\begin{equation}\label{eq:observable}
  A_{x_k}^{(k)} = \vec{a}_{x_k}^{(k)}\cdot\boldsymbol{\sigma}^{(k)}, \quad \bigl|\vec{a}_{x_k}^{(k)}\bigr|=1.
\end{equation}
The corresponding quantum correlation function is $E_Q(x_1,\dots,x_n)=\langle\Psi|A_{x_1}^{(1)}\otimes\cdots\otimes A_{x_n}^{(n)}|\Psi\rangle$.

To separate the dependence of the quantum value $\mathcal{B}_Q=\langle\Psi|\mathcal{B}|\Psi\rangle$ on the state from its dependence on the measurement settings, we employ the \emph{correlation-tensor} widely used in the analysis of full-correlation Bell inequalities~\cite{ZukowskiBrukner2002}.Define \emph{the $n$th-order correlation tensor} $\mathcal{T}$ with components
\begin{equation}\label{eq:T-def}
  T_{p_1\cdots p_n} = \bigl\langle\Psi\bigr| \sigma_{p_1}^{(1)}\otimes\cdots\otimes\sigma_{p_n}^{(n)} \bigl|\Psi\bigr\rangle, \quad p_k\in\{x,y,z\},
\end{equation}
where $\sigma_{p_k}^{(k)}$ denotes the Pauli operator acting on the $k$-th qubit. 
All tensor components are real because the Pauli tensor products are Hermitian. Importantly, $\mathcal{T}$ depends only on the quantum state and is independent of the Bell inequality and of the choice of local measurement directions.

In contrast, the Bell coefficients $c_{x_1\cdots x_n}$ and measurement directions $\vec{a}_{x_k}^{(k)}$ can be
combined into a measurement-dependent effective coefficient tensor $\mathbf{C}$,
\begin{equation}\label{eq:C-def}
  C_{p_1\cdots p_n} = \sum_{x_1,\dots,x_n} c_{x_1\cdots x_n}\, a_{x_1,p_1}^{(1)}\cdots a_{x_n,p_n}^{(n)}.
\end{equation}
Substituting \EquRef{eq:observable} into Eq.~\eqref{eq:bell-inequality}, and expanding in the Pauli basis, the quantum value of the Bell expression can be written as the Frobenius inner product of two real tensors of the same order:
\begin{equation}\label{eq:quantum-value}
  \mathcal{B}_Q = \sum_{p_1,\dots,p_n} C_{p_1\cdots p_n}\, T_{p_1\cdots p_n} \equiv  \langle\mathbf{C},\mathcal{T}\rangle_F.
\end{equation}
Flattening $\mathcal{T}$ and $\mathbf{C}$ into the $3^n$-dimensional real vectors $\vec{T}$ and $\vec{C}$, the Cauchy--Schwarz inequality immediately yields
\begin{equation}\label{eq:naive-bound}
  \mathcal{B}_Q = \vec{C}\cdot\vec{T} \le \|\mathcal{T}\|_F\,\|\mathbf{C}\|_F.
\end{equation}
This bound holds for arbitrary numbers of measurement settings but is generally not tight.

Our approach is conceptually analogous to the Horodecki CHSH criterion for two qubits~\cite{Horodecki1995} and the correlation-tensor framework developed by {\. Z}ukowski and Brukner~\cite{ZukowskiBrukner2002}. We tighten the bound by incorporating the rank constraints arising when at least one party has only two measurement settings.
Without loss of generality label this party as the first party, since tensor-index permutations preserve the Frobenius inner product. We thus take $m_1=2$, while other parties may adopt arbitrary setting numbers. Reshape the $n$th-order tensors $\mathcal T$ and $\mathcal C$ with respect to their first index to obtain real matrices $\widetilde{\mathcal T},\widetilde{\mathcal C}\in\mathbb R^{3\times 3^{n-1}}$ with entries
\begin{equation}\label{eq:reshape}
  \widetilde{\mathcal{T}}_{p_1,(p_2\cdots p_n)} = T_{p_1 p_2 \cdots p_n}, \quad
  \widetilde{\mathbf{C}}_{p_1,(p_2\cdots p_n)} = C_{p_1 p_2 \cdots p_n},
\end{equation}
where composite column indices $(p_2\cdots p_n)$ follow lexicographic ordering. The quantum expectation becomes a matrix Frobenius inner product
\begin{equation}
  \mathcal{B}_Q =\operatorname{Tr}\bigl(\widetilde{\mathcal{T}}^{\mathsf{T}}\widetilde{\mathbf{C}}\bigr) \equiv \langle\widetilde{\mathcal{T}},\widetilde{\mathbf{C}}\rangle_F .\label{eq:rewritten-quantum}
\end{equation}

\begin{theorem}\label{thm:tight-bound}
  Consider an $(n,m,2)$-type full-correlation Bell inequality, in which at least one party has only two measurement settings, and we label this party as party one. Then, for any $n$-qubit pure state $|\Psi\rangle$, the quantum value satisfies
  \begin{equation}\label{eq:the1}
    \mathcal{B}_Q \le \sqrt{s_1(\widetilde{\mathcal{T}})^2+s_2(\widetilde{\mathcal{T}})^2}\;\|\widetilde{\mathbf{C}}\|_F,
  \end{equation}
  where $s_1(\widetilde{\mathcal{T}})$ and $s_2(\widetilde{\mathcal{T}})$ denote the two largest singular values of the reshaped correlation matrix $\widetilde{\mathcal{T}}\in\mathbb{R}^{3\times3^{\,n-1}}$.
\end{theorem}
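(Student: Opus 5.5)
The plan is to exploit the rank constraint on $\widetilde{\mathbf{C}}$ that comes from party one having only two settings, and then combine it with the von Neumann trace inequality applied to the matrix form \EquRef{eq:rewritten-quantum}.

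\emph{Step 1 (rank of $\widetilde{\mathbf{C}}$).} From \EquRef{eq:C-def}, each row index $p_1$ enters only through the factor $a^{(1)}_{x_1,p_1}$ with $x_1\in\{0,1\}$. Collect the remaining sums into the vectors
\begin{equation*}
  D_{x_1,(p_2\cdots p_n)}=\sum_{x_2,\dots,x_n}c_{x_1x_2\cdots x_n}\,a^{(2)}_{x_2,p_2}\cdots a^{(n)}_{x_n,p_n}.
\end{equation*}
Then
\begin{equation*}
  \widetilde{\mathbf{C}}=\vec a^{(1)}_{0}\,\vec D_0^{\mathsf T}+\vec a^{(1)}_{1}\,\vec D_1^{\mathsf T},
\end{equation*}
a sum of two outer products. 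Hence $\rank\widetilde{\mathbf{C}}\le 2$, so $s_3(\widetilde{\mathbf{C}})=0$ and $\|\widetilde{\mathbf{C}}\|_F^2=s_1(\widetilde{\mathbf{C}})^2+s_2(\widetilde{\mathbf{C}})^2$.

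\emph{Step 2 (von Neumann).} Both matrices are $3\times 3^{n-1}$, so each has at most three nonzero singular values. The von Neumann trace inequality for rectangular matrices gives
\begin{equation*}
  \mathcal B_Q=\Tr(\widetilde{\mathcal T}^{\mathsf T}\widetilde{\mathbf C})\le\sum_{i=1}^{3}s_i(\widetilde{\mathcal T})\,s_i(\widetilde{\mathbf C}),
\end{equation*}
with both spectra ordered decreasingly. By Step 1 the $i=3$ term vanishes. Applying Cauchy--Schwarz in $\mathbb R^2$ to the remaining two terms yields
\begin{equation*}
  \mathcal B_Q\le\sqrt{s_1(\widetilde{\mathcal T})^2+s_2(\widetilde{\mathcal T})^2}\,\sqrt{s_1(\widetilde{\mathbf C})^2+s_2(\widetilde{\mathbf C})^2},
\end{equation*}
and the last factor is exactly $\|\widetilde{\mathbf C}\|_F$. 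This is \EquRef{eq:the1}.

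\emph{Main obstacle.} The only delicate point is Step 1: the Bell-coefficient structure must be correctly factorized so that the first-party dependence really collapses to two column vectors. This is immediate from the multilinearity of \EquRef{eq:C-def}, and pure-state-ness is not even needed. One should also quote von Neumann's inequality in its rectangular form with ordered singular values. Equality conditions, namely aligned singular vectors and $s_i(\widetilde{\mathbf C})\propto s_i(\widetilde{\mathcal T})$, can be noted for the later tightness discussion but are not required here.
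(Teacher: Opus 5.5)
Your proposal is correct and follows essentially the same route as the paper: your factorization $\widetilde{\mathbf{C}}=\vec a^{(1)}_{0}\vec D_0^{\mathsf T}+\vec a^{(1)}_{1}\vec D_1^{\mathsf T}$ is exactly the paper's $\widetilde{\mathbf{C}}=A^{(1)\mathsf{T}}M$, which gives $\operatorname{rank}\widetilde{\mathbf{C}}\le 2$. From there the paper likewise applies the rectangular von Neumann trace inequality and then Cauchy--Schwarz on the two surviving terms. Your side remark that purity of $|\Psi\rangle$ is not actually used is accurate.
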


\begin{proof}
  From \EquRef{eq:rewritten-quantum}, the quantum value is $\mathcal{B}_Q=\operatorname{Tr}\bigl(\widetilde{\mathcal{T}}^{\mathsf{T}}\widetilde{\mathbf{C}}\bigr)$. Collect the two measurement directions of the party one as rows of a matrix $A^{(1)}\in\mathbb{R}^{2\times3}$, whose $i$-th row is $\vec{a}_i^{(1)\mathsf{T}}$. Using the definition of the effective coefficient tensor in \EquRef{eq:C-def}, the reshaped effective-coefficient matrix factorizes as
  \begin{equation}
    \widetilde{\mathbf{C}} = A^{(1)\mathsf{T}} M,
  \end{equation}
  where
  \begin{equation}
    M_{x_1,(p_2\cdots p_n)}=\sum_{x_2,\dots,x_n}c_{x_1x_2\cdots x_n}\,
    a_{x_2,p_2}^{(2)}\cdots a_{x_n,p_n}^{(n)}
  \end{equation}
  defines a $2\times3^{n-1}$ matrix depending only on the Bell coefficients and the settings of the remaining parties. Since $A^{(1)}$ possesses only two rows, $\rank(A^{(1)})\le2$, and hence
  \begin{equation}\label{eq:rank-constraint}
    \operatorname{rank}(\widetilde{\mathbf{C}})\le 2.
  \end{equation}
  This rank restriction holds irrespective of how many measurement settings are assigned to the remaining parties. Sort the singular values of $\widetilde{\mathcal T}$ and $\widetilde{\mathcal C}$ in non-increasing order. Because $\widetilde{\mathcal C}$ has rank at most two, only its first two singular values can be nonzero, and
  \begin{equation}\label{eq:C-singular-values}
    s_1(\widetilde{\mathbf{C}})^2 + s_2(\widetilde{\mathbf{C}})^2 = \|\widetilde{\mathbf{C}}\|_F^2.
  \end{equation}
  Applying the von Neumann trace inequality for real rectangular matrices (see Appendix~\ref{app:von Neumann trace inequality})
  \begin{equation}\label{eq:vn-trace}
    \operatorname{Tr}\bigl(\widetilde{\mathcal{T}}^{\mathsf{T}}\widetilde{\mathbf{C}}\bigr)
    \le s_1(\widetilde{\mathcal{T}})\,s_1(\widetilde{\mathbf{C}}) + s_2(\widetilde{\mathcal{T}})\,s_2(\widetilde{\mathbf{C}}).
  \end{equation}
  And applying the Cauchy--Schwarz inequality to these two terms on the right-hand side gives
  \begin{multline}
    s_1(\widetilde{\mathcal{T}})s_1(\widetilde{\mathbf{C}})
    +s_2(\widetilde{\mathcal{T}})s_2(\widetilde{\mathbf{C}})\\
    \le
    \sqrt{s_1(\widetilde{\mathcal{T}})^2+s_2(\widetilde{\mathcal{T}})^2}\,
    \sqrt{s_1(\widetilde{\mathbf{C}})^2+s_2(\widetilde{\mathbf{C}})^2}.
  \end{multline}
  Combining the above inequalities recovers Eq.~\EquRef{eq:the1}.
  Thus, the state-dependent factor in the upper bound is fully determined by the two dominant singular values of the reshaped correlation tensor, while the rank constraint restricts the effective coefficient matrix to at most two nonzero singular values.
\end{proof}

\section{Generalized GHZ States and MABK-Type Inequalities}
\label{sec:GHZ-MABKtype}

Consider the generalized $n$-qubit GHZ state
\begin{equation}
  \label{eq:ghz-general}
  |\Psi_\theta^{(n)}\rangle = \cos\theta\,|0\rangle^{\otimes n} + \sin\theta\,|1\rangle^{\otimes n},
  \quad \theta\in[0,\tfrac{\pi}{2}].
\end{equation}

\begin{theorem}[Correlation-tensor structure of generalized GHZ states]\label{thm:ghz-tensor}
  For the $n$-qubit generalized GHZ state \EquRef{eq:ghz-general}, the only nonzero components of its $n$th-order correlation tensor $\mathcal{T}$ belong to the following two families:
  \begin{enumerate}
    \item[(i)] the fully-$z$ component,
      $T_{zz\dots z}=\cos^2\theta+(-1)^n\sin^2\theta$;
    \item[(ii)] components without any $z$ indices and containing an even number $n_y$ of $y$ indices,
      $T_{p_1\dots p_n}=\sin(2\theta)(-1)^{n_y/2}$, where $p_k\in\{x,y\}$.
  \end{enumerate}
  After reshaping $\mathcal{T}$ along its first index into $\widetilde{\mathcal{T}}\in \mathbb R^{3\times 3^{n-1}}$, the singular-value multiset reads
  \begin{equation}\label{eq:singular-values}
    \begin{aligned}
      \lambda_\perp & =2^{\frac{n-2}{2}}\sin 2\theta
                    &                                              & \text{(multiplicity two)}, \\
      \lambda_z     & =\bigl|\cos^2\theta+(-1)^n\sin^2\theta\bigr|
                    &                                              & \text{(multiplicity one)}.
    \end{aligned}
  \end{equation}
\end{theorem}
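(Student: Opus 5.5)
The plan is to compute the components of $\mathcal{T}$ directly from the action of the Pauli operators on the two computational-basis branches $|0\rangle^{\otimes n}$ and $|1\rangle^{\otimes n}$. Then we form the $3\times3$ Gram matrix $\widetilde{\mathcal T}\widetilde{\mathcal T}^{\mathsf T}$ and show that it is diagonal. Throughout we take $n\ge2$, as the reshaping along the first index presupposes.

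\emph{Tensor components.} Write $P=\sigma_{p_1}\otimes\cdots\otimes\sigma_{p_n}$ and expand
\[
T_{p_1\cdots p_n}=\cos^2\theta\,\langle 0^n|P|0^n\rangle+\sin^2\theta\,\langle 1^n|P|1^n\rangle+\cos\theta\sin\theta\bigl(\langle 0^n|P|1^n\rangle+\langle 1^n|P|0^n\rangle\bigr).
\]
The operator $\sigma_z$ is diagonal, whereas $\sigma_x$ and $\sigma_y$ flip the qubit. Hence a diagonal matrix element $\langle 0^n|P|0^n\rangle$ or $\langle 1^n|P|1^n\rangle$ survives only if every $p_k=z$. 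In that case it equals $1$ and $(-1)^n$ respectively, which gives family (i). An off-diagonal element survives only if every $p_k\in\{x,y\}$.

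For the off-diagonal case we use $\sigma_x|0\rangle=|1\rangle$, $\sigma_y|0\rangle=i|1\rangle$ and $\sigma_y|1\rangle=-i|0\rangle$. These give $\langle 1^n|P|0^n\rangle=i^{n_y}$, and the other off-diagonal element is its complex conjugate $(-i)^{n_y}$. The component is therefore $\sin 2\theta\,\mathrm{Re}(i^{n_y})$. This vanishes for odd $n_y$ and equals $(-1)^{n_y/2}\sin2\theta$ for even $n_y$, which gives family (ii). Any mixed string containing both $z$ and a flip operator has all four terms equal to zero.

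\emph{Singular spectrum.} The rows of $\widetilde{\mathcal T}$ are indexed by $p_1\in\{x,y,z\}$.
\begin{itemize}
\item[(a)] The $z$-row has a single nonzero entry, located in column $(z\cdots z)$.
\item[(b)] The $x$- and $y$-rows are supported only on columns $q\in\{x,y\}^{n-1}$. These never include $(z\cdots z)$, so the $z$-row is orthogonal to both.
\item[(c)] The $x$-row is nonzero at $q$ exactly when $n_y(q)$ is even. The $y$-row is nonzero at $q$ exactly when $n_y(q)+1$ is even, i.e.\ when $n_y(q)$ is odd. Their supports are disjoint, so $(\widetilde{\mathcal T}\widetilde{\mathcal T}^{\mathsf T})_{xy}=0$.
\end{itemize}
Thus the Gram matrix is diagonal. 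Its $zz$ entry is $(\cos^2\theta+(-1)^n\sin^2\theta)^2$. Its $xx$ and $yy$ entries each equal $\sin^2 2\theta$ times the number of strings in $\{x,y\}^{n-1}$ with a prescribed parity of $n_y$, and that number is $2^{n-2}$. Taking square roots of the eigenvalues gives $\lambda_\perp=2^{(n-2)/2}\sin2\theta$ with multiplicity two and $\lambda_z$ with multiplicity one, as in \eqref{eq:singular-values}.

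The only step needing care is the phase bookkeeping for $\sigma_y$ in the off-diagonal elements, which produces the factor $(-1)^{n_y/2}$. The parity counting, together with the observation that the $x$- and $y$-row supports are complementary, is what makes the Gram matrix diagonal; everything else is routine.
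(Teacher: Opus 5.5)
Your proof is correct and follows essentially the same route as the paper. You expand in the computational basis to isolate the all-$z$ component and the even-$n_y$ transverse components, then observe that the three rows of $\widetilde{\mathcal T}$ have disjoint supports, each transverse row having $2^{n-2}$ entries of magnitude $\sin2\theta$, so the Gram matrix is diagonal and the row norms are the singular values. Your explicit $\sigma_y$ phase bookkeeping (giving $\sin2\theta\,\mathrm{Re}(i^{n_y})$) and your note that $n\ge2$ is needed for the count $2^{n-2}$ fill in details the paper leaves implicit.
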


\begin{proof}
  Substituting \EquRef{eq:ghz-general} into the expectation value of a Pauli tensor product and expanding in the computational basis. Nonzero contributions to $T_{p_1\dots p_n}$ originate only from the two index classes listed above.

  For the all-$z$ component, $\langle0|\sigma_z|0\rangle=1$ and $\langle1|\sigma_z|1\rangle=-1$, we obtain
  \begin{equation}
    \label{eq:T-n-zzz}
    T_{zz\dots z} = \cos^2\theta + (-1)^n\sin^2\theta,
  \end{equation}
  which equals $1$ for even $n$ and $\cos2\theta$ for odd $n$. The second class consists of components with no $z$ indices and an even number of $y$ indices, the real part of the off-diagonal contribution is nonzero and gives
  \begin{equation}
    \label{eq:T-n-xy}
    T_{p_1\dots p_n} = \sin(2\theta)\,(-1)^{n_y/2},
    \quad p_k\in\{x,y\}.
  \end{equation}
  We now reshape $T_{p_1\dots p_n}$ along its first index into the matrix $\widetilde{\mathcal{T}}\in\mathbb{R}^{3\times3^{n-1}}$. The $x$ and $y$ rows contain only components from the $xy$ sector, whereas the $z$ row contains only the all-$z$ component. Moreover, the supports of the $x$ and $y$ rows are disjoint because the remaining indices contain, respectively, an even and an odd number of $y$ indices. The three rows are therefore mutually orthogonal, so the Gram matrix $\widetilde{\mathcal{T}}\widetilde{\mathcal{T}}^{\mathsf T}$ is diagonal.

  Each of the $x$ and $y$ rows has $2^{n-2}$ nonzero entries of magnitude $\sin2\theta$, while the $z$ row has the single entry given in \EquRef{eq:T-n-zzz}. Hence the row norms, and thus the singular values, are
  \begin{equation}
    \label{eq:T-singular}
    \lambda_\perp=2^{\frac{n-2}{2}}\sin2\theta,\qquad
    \lambda_z=\bigl|\cos^2\theta+(-1)^n\sin^2\theta\bigr|.
  \end{equation}
  Here $\lambda_\perp$ has multiplicity two. For even $n$, $\lambda_z=1$; for odd $n$, $\lambda_z=|\cos2\theta|$.
\end{proof}

The MABK family is defined recursively. For bipartite ($n=2$), the normalized MABK Bell operator coincides with the rescaled CHSH operator
\begin{equation}
  \label{eq:MABK-rec-2}
  \mathcal{B}_2 = \frac{1}{2}(M_1^{(1)}M_1^{(2)} + M_1^{(1)}M_2^{(2)} + M_2^{(1)}M_1^{(2)} - M_2^{(1)}M_2^{(2)}),
\end{equation}
where $M_1^{(k)}$, $M_2^{(k)}$ denote the two dichotomic observables of the $k$-th party, $k=1, 2$. For $n\ge3$,
\begin{equation}
  \label{eq:MABK-rec-n}
  \mathcal{B}_n = \frac{1}{2}\bigl[\mathcal{B}_{n-1}^{(++)}(M_1^{(n)}+M_2^{(n)})
  + \mathcal{B}_{n-1}^{(+-)}(M_1^{(n)}-M_2^{(n)})\bigr],
\end{equation}
where $\mathcal{B}_{n-1}^{(++)}$ is the $(n-1)$-party MABK polynomial and $\mathcal{B}_{n-1}^{(+-)}$ is obtained from it by interchanging the two measurement settings, $1\leftrightarrow2$ for every site.

The EMABK family is constructed similarly but couples two independently parametrized lower-order MABK operators. The $n$-party EMABK operator reads
\begin{equation}
  \label{eq:EMABK-rec}
  \mathcal{B}_{n}^{\text{EMABK}} = \frac{1}{2}\bigl[
  \mathcal{B}_{n-1}^{(1)}(M_1^{(n)}+M_2^{(n)})
  + \mathcal{B}_{n-1}^{(2)}(M_1^{(n)}-M_2^{(n)})\bigr],
\end{equation}
where $\mathcal{B}_{n-1}^{(1)}$ and $\mathcal{B}_{n-1}^{(2)}$ are $(n-1)$-party MABK polynomials with independent measurement directions. We take $\mathcal{B}_{n-1}^{(1)}$ to involve settings 1 and 2 at each party, while $\mathcal{B}_{n-1}^{(2)}$ is obtained by replacing these settings with settings 3 and 4, respectively, at every participating party.
Despite their different coupling structures, the MABK and EMABK Bell polynomials share the same fundamental algebraic properties that make them amenable to the tensor framework of Section~\ref{sec:tensor-framework}.

\begin{lemma}\label{lem:mabk-emabk-norm}
  The MABK Bell polynomial $\mathcal{B}_n$ defined by Eqs.~\eqref{eq:MABK-rec-2} and \eqref{eq:MABK-rec-n}, and the EMABK Bell polynomial $\mathcal{B}_{n}^{\text{EMABK}}$ defined by \EquRef{eq:EMABK-rec}, satisfy
  \begin{equation}
    \|\mathbf{C}\|_F^2 = 1.
  \end{equation}
\end{lemma}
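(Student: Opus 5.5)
The plan is to prove the identity for \emph{every} choice of unit measurement directions, by induction on $n$ using the recursive definitions. The key fact is that the effective coefficient tensor of \EquRef{eq:C-def} is built by multilinearly replacing each observable $M_j^{(k)}$ with its Bloch vector $\vec m_j^{(k)}$. Consequently, a Bell polynomial of the form $\frac12[\mathcal{B}'(M_1^{(n)}+M_2^{(n)})+\mathcal{B}''(M_1^{(n)}-M_2^{(n)})]$ has coefficient tensor
\begin{equation*}
  \mathbf{C}_n=\tfrac12\bigl[\mathbf{C}'\otimes(\vec m_1+\vec m_2)+\mathbf{C}''\otimes(\vec m_1-\vec m_2)\bigr],
\end{equation*}
where $\mathbf{C}',\mathbf{C}''$ are the $(n-1)$-th order coefficient tensors of $\mathcal{B}',\mathcal{B}''$, and $\vec m_j=\vec m_j^{(n)}$.

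\textbf{Step 1: the Frobenius norm of a combination of product tensors.} I will expand $\|\mathbf{C}_n\|_F^2$ using $\langle \mathbf{A}\otimes\vec u,\mathbf{B}\otimes\vec v\rangle_F=\langle\mathbf{A},\mathbf{B}\rangle_F\,(\vec u\cdot\vec v)$. This gives
\begin{equation*}
  \|\mathbf{C}_n\|_F^2=\tfrac14\Bigl[\|\mathbf{C}'\|_F^2|\vec m_1+\vec m_2|^2+\|\mathbf{C}''\|_F^2|\vec m_1-\vec m_2|^2+2\langle\mathbf{C}',\mathbf{C}''\rangle_F\,(\vec m_1+\vec m_2)\cdot(\vec m_1-\vec m_2)\Bigr].
\end{equation*}
Since $|\vec m_1|=|\vec m_2|=1$, we have $(\vec m_1+\vec m_2)\cdot(\vec m_1-\vec m_2)=|\vec m_1|^2-|\vec m_2|^2=0$, so the cross term vanishes whatever $\mathbf{C}'$ and $\mathbf{C}''$ are. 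By the parallelogram law, $|\vec m_1+\vec m_2|^2+|\vec m_1-\vec m_2|^2=4$. Hence, if $\|\mathbf{C}'\|_F=\|\mathbf{C}''\|_F=1$, then $\|\mathbf{C}_n\|_F^2=1$.

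\textbf{Step 2: base case and induction.} The base case is \EquRef{eq:MABK-rec-2}. It has exactly the above form with $n=2$, $\mathbf{C}'=\vec m_1^{(1)}$ and $\mathbf{C}''=\vec m_2^{(1)}$, both of which are unit vectors, so Step 1 gives $\|\mathbf{C}_2\|_F^2=1$.

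For the MABK inductive step \EquRef{eq:MABK-rec-n}, I take $\mathbf{C}'$ to be the tensor of $\mathcal{B}_{n-1}^{(++)}$. I take $\mathbf{C}''$ to be that of $\mathcal{B}_{n-1}^{(+-)}$, which is the same MABK polynomial with settings $1\leftrightarrow2$ relabelled at every site.

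The point requiring care is that the induction hypothesis must be formulated for arbitrary unit directions at every site. With that formulation, the relabelled polynomial is again an $(n-1)$-party MABK polynomial with (permuted) unit directions, so it also has unit norm.

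For EMABK \EquRef{eq:EMABK-rec}, $\mathcal{B}^{(1)}_{n-1}$ and $\mathcal{B}^{(2)}_{n-1}$ are MABK polynomials built from settings $\{1,2\}$ and $\{3,4\}$ respectively. Each therefore has unit norm by the MABK case just proved. Step 1 applies verbatim, because the vanishing of the cross term never required any relation between $\mathbf{C}'$ and $\mathbf{C}''$.

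\textbf{Main obstacle.} I expect no real obstacle; the whole argument rests on the orthogonality $(\vec m_1+\vec m_2)\perp(\vec m_1-\vec m_2)$ for unit vectors. The only subtle step is justifying the factorization of $\mathbf{C}_n$ from \EquRef{eq:C-def}, since the correlation terms of the recursion involve distinct settings at each party. I will verify this by noting that \EquRef{eq:C-def} is linear in the coefficients and multiplicative over parties, so each monomial $M_{x_1}^{(1)}\cdots M_{x_n}^{(n)}$ maps to $\vec m_{x_1}^{(1)}\otimes\cdots\otimes\vec m_{x_n}^{(n)}$.
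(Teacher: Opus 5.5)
Your proof is correct. For the EMABK family it is essentially the paper's argument in Appendix~B: write $C^{\mathrm{EMABK}}=\frac12(\vec S^{(n)}\otimes C^{(A)}+\vec D^{(n)}\otimes C^{(B)})$, use $\vec S^{(n)}\cdot\vec D^{(n)}=0$ and $\|\vec S^{(n)}\|^2+\|\vec D^{(n)}\|^2=4$, and feed in unit norm of the two MABK blocks.

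For the MABK family itself, the paper takes a different route. It derives coupled recursions for $C^{(n)}$ and its $1\leftrightarrow2$ partner $\widetilde C^{(n)}$. It then packages them into the complex tensor $Z^{(n)}=C^{(n)}+i\widetilde C^{(n)}$ and shows that $Z^{(n)}=v^{(1)}\otimes\cdots\otimes v^{(n)}$ is a product tensor. Finally it evaluates
\[
\|C^{(n)}\|_F^2=\tfrac12\prod_k\|v^{(k)}\|^2+\tfrac12\operatorname{Re}\prod_k\bigl(v^{(k)}\cdot v^{(k)}\bigr)=1+0 .
\]

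You instead treat standard MABK as the special case of the same two-block recursion in which the second block is the relabelled first block. You state the induction hypothesis for arbitrary unit directions, so that $\mathcal{B}^{(+-)}_{n-1}$ (the original polynomial evaluated at swapped unit vectors) is covered. With CHSH as the base case (blocks $\vec m^{(1)}_1$ and $\vec m^{(1)}_2$), the single orthogonality computation then handles both families at once.

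Your route is shorter and fully elementary. It also makes explicit that unit norm is inherited by any such two-block combination of unit-norm blocks, with no relation between the blocks needed. The paper's complex-tensor route costs more but yields extra structure your induction does not produce. It gives an explicit closed form $C^{(n)}=\operatorname{Re}[v^{(1)}\otimes\cdots\otimes v^{(n)}]$ for the MABK coefficient tensor. It also gives a norm formula valid for arbitrary, not necessarily unit, direction vectors; in that case your recursion would require tracking $\langle C',C''\rangle_F$ and $\|C''\|_F$ separately.
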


Proofs are given in Appendix~\ref{app:mabk-norm}. Their reshaped effective-coefficient tensors also satisfy the low-rank condition $\rank(\widetilde{\mathbf C})\le2$ (Appendix C). Therefore our general upper bound Theorem 1 \EquRef{eq:the1} applies to both families. Inserting the singular values from Theorem 2 \EquRef{eq:singular-values} yields the piecewise upper bound below.

\begin{theorem}\label{thm:unified-bound}
  For the generalized GHZ state $|\Psi_\theta^{(n)}\rangle$, the quantum values of both the MABK and EMABK Bell polynomials satisfy
  \begin{equation}
    \label{eq:MABK-theoretical-limit}
    \mathcal{B}_Q \le
    \begin{cases}
      2^{\frac{n-1}{2}}\sin2\theta,
       & \lambda_\perp\ge\lambda_z, \\[8pt]
      \sqrt{\lambda_z^2+2^{n-2}\sin^2 2\theta},
       & \lambda_\perp\le\lambda_z,
    \end{cases}
  \end{equation}
  where
  \begin{equation}
    \lambda_\perp=2^{\frac{n-2}{2}}\sin2\theta,\qquad
    \lambda_z=\bigl|\cos^2\theta+(-1)^n\sin^2\theta\bigr|.\nonumber
  \end{equation}
\end{theorem}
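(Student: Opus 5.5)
The bound follows by combining three earlier results. \ThmRef{thm:tight-bound} gives $\mathcal{B}_Q\le\sqrt{s_1(\widetilde{\mathcal T})^2+s_2(\widetilde{\mathcal T})^2}\,\|\widetilde{\mathbf C}\|_F$ whenever $\rank(\widetilde{\mathbf C})\le 2$. That rank condition holds for both MABK and EMABK: in each family the first party contributes only two observables. For MABK these are $M^{(1)}_{1,2}$. For EMABK the lower-order polynomials $\mathcal{B}^{(1)}_{n-1}$ and $\mathcal{B}^{(2)}_{n-1}$ use settings $1,2$ and $3,4$ at every party, so if party one takes part in both there are four settings. I would therefore reshape along the $n$-th party, which has only $M^{(n)}_1,M^{(n)}_2$; the permutation invariance noted before \ThmRef{thm:tight-bound} makes this legitimate. \LemRef{lem:mabk-emabk-norm} gives $\|\widetilde{\mathbf C}\|_F=\|\mathbf C\|_F=1$. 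The bound therefore reduces to $\sqrt{s_1^2+s_2^2}$, where $s_1,s_2$ are the two largest singular values of the reshaped correlation matrix.

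We then identify $s_1,s_2$ using \ThmRef{thm:ghz-tensor}. The generalized GHZ state is symmetric under permutations of the qubits, so reshaping along any party gives the same spectrum: $\lambda_\perp$ with multiplicity two and $\lambda_z$ once. The two largest values are found by ordering $\lambda_\perp$ against $\lambda_z$.

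If $\lambda_\perp\ge\lambda_z$, then $s_1=s_2=\lambda_\perp$. Hence $\sqrt{2}\lambda_\perp=\sqrt2\,2^{(n-2)/2}\sin2\theta=2^{(n-1)/2}\sin2\theta$. If $\lambda_\perp\le\lambda_z$, then $s_1=\lambda_z$ and $s_2=\lambda_\perp$, which gives $\sqrt{\lambda_z^2+2^{n-2}\sin^22\theta}$. The two expressions coincide at $\lambda_\perp=\lambda_z$, so the piecewise bound is consistent on the boundary.

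The main obstacle is checking the rank-two condition rigorously for EMABK along a party with genuinely two settings. This depends on the appendix claim and on the choice of reshaping index, so I would check it explicitly from \EquRef{eq:EMABK-rec}. Writing $\mathcal B^{\text{EMABK}}_n=\tfrac12\bigl[(\mathcal B^{(1)}+\mathcal B^{(2)})M^{(n)}_1+(\mathcal B^{(1)}-\mathcal B^{(2)})M^{(n)}_2\bigr]$ exhibits $\widetilde{\mathbf C}=A^{(n)\mathsf T}M$ with $A^{(n)}\in\mathbb R^{2\times3}$, exactly as in the proof of \ThmRef{thm:tight-bound}.
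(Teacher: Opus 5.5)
Your proposal is correct and follows the paper's own argument. You combine \ThmRef{thm:tight-bound} with the rank-two constraint (Appendix~\ref{app:C}, whose closing remark likewise relies on the last-combined party having only two settings), the unit norm from \LemRef{lem:mabk-emabk-norm}, and the spectrum $\{\lambda_\perp,\lambda_\perp,\lambda_z\}$ of \ThmRef{thm:ghz-tensor}, then order $\lambda_\perp$ against $\lambda_z$. Your explicit reshaping along the $n$-th party and your use of the permutation symmetry of $|\Psi_\theta^{(n)}\rangle$ to transfer that spectrum (which the paper computes along the first index) make precise two points the paper leaves implicit.
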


The first branch $2^{(n-1)/2}\sin2\theta$ governed by the two degenerate transverse singular values $\lambda_\perp$ and corresponds to measurement directions confined to the $xy$ equatorial plane. At $\theta=\pi/4$, this branch yields the well-known maximal MABK quantum value $2^{(n-1)/2}$; it vanishes approaching product-state limits $\theta\to0,\pi/2$. The second branch combines one transverse singular value with $\lambda_z$ and approaches the normalized classical bound of $1$ in the product-state limits. It therefore exploits the fully-$z$ component $T_{zz\dots z}$ to maintain a nonzero Bell value.

Although the same mathematical upper bound \EquRef{eq:MABK-theoretical-limit} holds for both families, they differ fundamentally in its attainability. The standard MABK structure imposes additional constraints on the effective coefficients: its effective coefficient matrix $\widetilde{\mathbf C}$ cannot be continuously adjusted with $\theta$ to simultaneously align with the transverse and $z$-directional components of $\widetilde{\mathcal T}$. Its optimal violation is restricted to the first branch $2^{(n-1)/2}\sin2\theta$. In the weakly-entangled regime, where $\theta$ approaches $0$ or $\pi/2$, this value may drop below the classical LHV bound, producing a non-violation region. For the standard MABK family with $n\ge3$, the second branch of \EquRef{eq:MABK-theoretical-limit} is unattainable away from the maximally-entangled point.
By contrast, the two independent sets of measurement directions in the EMABK construction allow it to exploit the full singular-value structure of the correlation tensor and saturate both branches of \EquRef{eq:MABK-theoretical-limit}. The EMABK family can thus attain the second-branch violation that is inaccessible to the standard MABK family, eliminating the nonviolation region for nonmaximally entangled generalized GHZ states. Figure~\ref{fig:emabk-vs-mabk-n3} illustrates this distinction for $n=3$.

\begin{figure}[t]
  \centering
  \includegraphics[width = 1\columnwidth]{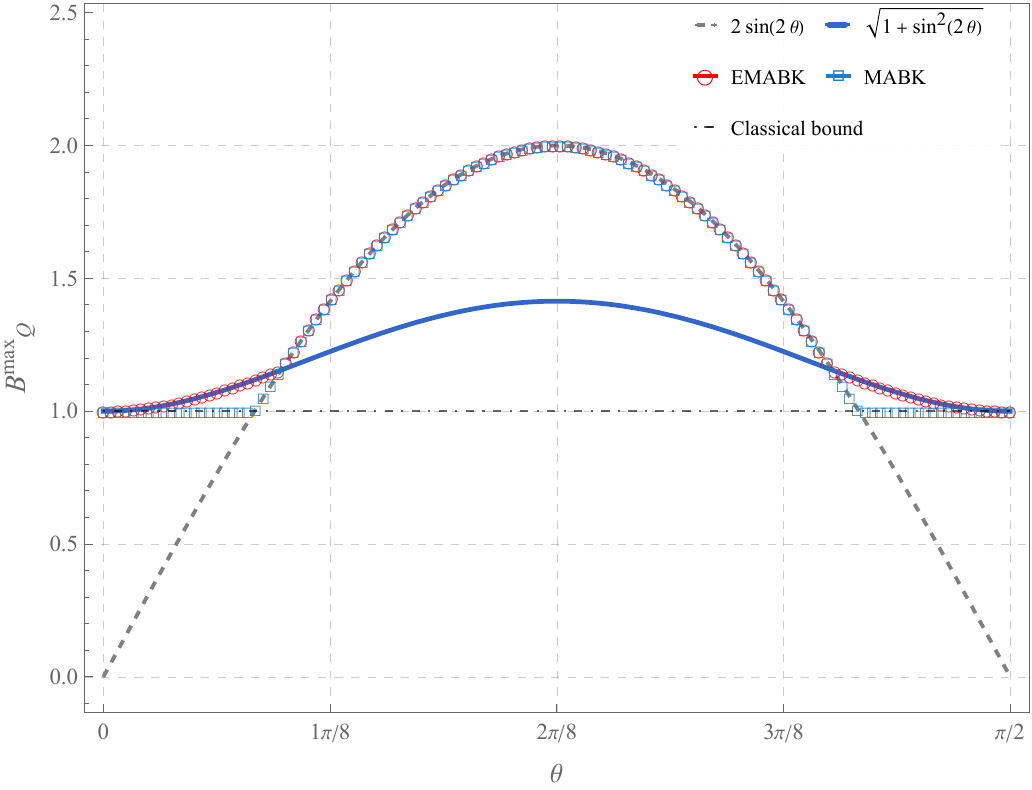}
  \caption{Maximal quantum values of the EMABK and standard MABK expressions for the three-qubit state $\cos\theta|000\rangle+\sin\theta|111\rangle$, with $\theta\in[0,\pi/2]$. Red open circles and blue open squares show numerical optimization results for EMABK and MABK, respectively. The gray dashed and blue solid curves show the corresponding analytic results, and the black dash-dotted line marks the classical bound. The standard MABK expression has a nonviolation region in the weakly entangled regime ($\theta\lesssim0.26$), whereas the EMABK expression violates the classical bound throughout the entangled parameter range by means of a hybrid measurement strategy.}
  \label{fig:emabk-vs-mabk-n3}
\end{figure}

\section{Measurement Strategies Saturating the EMABK Bound}
\label{sec:EMABK}

Both branches of the piecewise upper bound in \EquRef{eq:MABK-theoretical-limit} can be saturated by the EMABK Bell polynomial through suitable choices of measurement directions. In this section, we construct explicit measurement settings for the two optimal strategies, thereby proving that the bound is tight for EMABK over the entire parameter range.

It is worth emphasizing that the EMABK construction in \EquRef{eq:EMABK-rec} inherently expands the optimization space by allowing two independent $(n-1)$-party MABK blocks, $\mathcal{B}_{n-1}^{(1)}$ and $\mathcal{B}_{n-1}^{(2)}$, to employ distinct measurement settings. Consequently, the global maximization over the state parameter $\theta$ may naturally lead to degenerate cases where some settings of the first $n-1$ parties coincide. These degeneracies do not alter the Bell coefficients or the classical bounds but restrict the effective-coefficient tensor to a lower-dimensional subspace.

We parametrize the unit Bloch vector $\hat{m}_i^{(k)}$ associated with the observable $M_i^{(k)}=\hat{m}_i^{(k)}\!\cdot\!\bm{\sigma}^{(k)}$ by the spherical angles $(\vartheta_i^{(k)},\varphi_i^{(k)})$:
\begin{equation}\label{eq:spherical}
  \hat{m}_i^{(k)} = \bigl(\sin\vartheta_i^{(k)}\cos\varphi_i^{(k)},\;
  \sin\vartheta_i^{(k)}\sin\varphi_i^{(k)},\;
  \cos\vartheta_i^{(k)}\bigr),
\end{equation}
where $\vartheta_i^{(k)}\in[0,\pi]$ is the polar angle and $\varphi_i^{(k)}\in[0,2\pi)$ is the azimuthal angle. 

\textbf{Strategy I.}
The first branch saturation value $2^{(n-1)/2}\sin2\theta$ is attained by the standard $n$-party MABK operator. All measurement polar angles are set to $\pi/2$ and azimuthal angles are chosen to cancel phases of antidiagonal matrix elements.
For parties $k=1,\dots,n-1$, which possess four observables $\{M_1^{(k)},M_2^{(k)},M_3^{(k)},M_4^{(k)}\}$ within EMABK construction, we set
\begin{equation}
  \begin{aligned}
    M_1^{(k)} & :\; \vartheta_{1}^{(k)}=\frac{\pi}{2},\; \varphi_{1}^{(k)} = -\frac{(n-1)\pi}{4n},               \\
    M_2^{(k)} & :\; \vartheta_{2}^{(k)}=\frac{\pi}{2},\; \varphi_{2}^{(k)} = -\frac{(n-1)\pi}{4n}+\frac{\pi}{2}, \\
    M_3^{(k)} & := M_2^{(k)},\quad M_4^{(k)}:= M_1^{(k)}.
  \end{aligned}
\end{equation}
Here $M_3^{(k)}$ and $M_4^{(k)}$ are identified with $M_2^{(k)}$ and $M_1^{(k)}$, respectively, so that $\mathcal{B}_{n-1}^{(2)}$ is exactly the $1\leftrightarrow 2$ interchange of $\mathcal{B}_{n-1}^{(1)}$ (the standard MABK partner operator). Thus the EMABK polynomial in \EquRef{eq:EMABK-rec} reduces to the standard MABK polynomial in \EquRef{eq:MABK-rec-n}.

For the $n$-th party with only two observables $\{M_1^{(n)},M_2^{(n)}\}$, we choose
\begin{align}
  M_1^{(n)}:\; \vartheta_{1}^{(n)}=\frac{\pi}{2},\; \varphi_{1}^{(n)} = -\frac{(n-1)\pi}{4n}, \\
  M_2^{(n)}:\; \vartheta_{2}^{(n)}=\frac{\pi}{2},\; \varphi_{2}^{(n)} = -\frac{(n-1)\pi}{4n}+\frac{\pi}{2}.
\end{align}
With these settings, the EMABK polynomial in \EquRef{eq:EMABK-rec} reduces to the standard MABK polynomial in \EquRef{eq:MABK-rec-n}, and its expectation value on $|\Psi_\theta^{(n)}\rangle$ is
\begin{equation}\label{eq:strat1-val}
  \langle\mathcal{B}_n^{\mathrm{EMABK}}\rangle = 2^{\frac{n-1}{2}}\sin 2\theta,
\end{equation}
which coincides with the first branch of \EquRef{eq:MABK-theoretical-limit}. This strategy is optimal when $\lambda_\perp=2^{(n-2)/2}\sin2\theta\ge\lambda_z$ and it recovers the maximal MABK value $2^{(n-1)/2}$ at $\theta=\pi/4$.

\textbf{Strategy II.}
The second branch saturation value $\sqrt{\lambda_z^2+2^{n-2}\sin^2 2\theta}$ is attained by a hybrid construction that exploits the two independently parametrized lower-order operators in \EquRef{eq:EMABK-rec}. The first $n-1$ parties supply both $xy$-plane anti-diagonal structure through $\mathcal{B}_{n-1}^{(1)}$ and fully-$z$ diagonal structure through $\mathcal{B}_{n-1}^{(2)}$, while the observables of the $n$-th party are optimized for given $\theta$.

For $k=1,\dots,n-1$:
\begin{equation}
  \begin{aligned}
    M_1^{(k)} & :\; \vartheta_{1}^{(k)}=\frac{\pi}{2},\; \varphi_{1}^{(k)} = -\frac{(n-2)\pi}{4(n-1)},               \\
    M_2^{(k)} & :\; \vartheta_{2}^{(k)}=\frac{\pi}{2},\; \varphi_{2}^{(k)} = -\frac{(n-2)\pi}{4(n-1)}+\frac{\pi}{2}, \\
    M_3^{(k)} & = M_4^{(k)} \;:\; \vartheta_{3,4}^{(k)}=0,\; \varphi_{3,4}^{(k)}=0.
  \end{aligned}
\end{equation}
Due to $M_3^{(k)}=M_4^{(k)}=\sigma_z$, the second lower-order operator reduces to $\mathcal{B}_{n-1}^{(2)}=\sigma_z^{\otimes(n-1)}$. Consequently, the EMABK operator \EquRef{eq:EMABK-rec} simplifies to
\begin{align}\label{eq:EMABK-degenerate-II}
  \mathcal{B}_{n}^{\mathrm{EMABK}}= & \frac{1}{2}\Big[\mathcal{B}_{n-1}^{(1)}\otimes\big(M_1^{(n)}+M_2^{(n)}\big)\nonumber \\
                                    & +{M_3^{(n)}}^{\otimes(n-1)}\otimes\big(M_1^{(n)}-M_2^{(n)}\big)\Big],
\end{align}
where $M_3^{(k)}=\sigma_z, k=1,2,\dots,n$ and $\mathcal{B}_{n-1}^{(1)}$ is the standard $(n-1)$-party MABK operator with purely anti-diagonal structure. It is noteworthy that, although the EMABK framework assigns four settings to each of the first $n-1$ parties, the optimal strategy in this branch effectively uses only three independent physical directions per party: two orthogonal directions within the $xy$ plane and one $z$ direction.

For a given number of parties $n$ and state parameter $\theta$, define
\begin{align}\label{eq:ABR}
  A_n(\theta) & =\cos^2\theta+(-1)^n\sin^2\theta,\quad\nonumber  \\
  B_n(\theta) & =2^{\frac{n-2}{2}}\sin\theta\cos\theta,\nonumber \\
  R_n(\theta) & =\sqrt{\frac{A_n(\theta)^2}{4}+B_n(\theta)^2}.
\end{align}
For the $n$-th party, the two optimal measurement directions lie in the $xz$ plane ($\varphi=0$), sharing equal $x$ components and opposite $z$ components:
\begin{equation}
  \begin{aligned}
    M_1^{(n)} & :\; \vartheta_{1}^{(n)} = \arccos\!\Bigl(\frac{A_n(\theta)}{2R_n(\theta)}\Bigr),\quad \varphi_{1}^{(n)}=0, \\[4pt]
    M_2^{(n)} & :\; \vartheta_{2}^{(n)} = \pi - \vartheta_{1}^{(n)},\quad \varphi_{2}^{(n)}=0.
  \end{aligned}
\end{equation}
With these settings, the expectation value of the EMABK polynomial in \EquRef{eq:EMABK-rec} is
\begin{equation}\label{eq:strat2-val}
  \langle\mathcal{B}_n^{\mathrm{EMABK}}\rangle
  = \sqrt{\bigl[\cos^{2}\theta+(-1)^{n}\sin^{2}\theta\bigr]^{2}+2^{n-2}\sin^{2}2\theta},
\end{equation}
which is exactly the second branch of \EquRef{eq:MABK-theoretical-limit}. For every $\theta\in(0,\pi/2)$, this value is strictly exceeds the normalized classical bound $1$, including the weakly-entangled regimes where standard MABK expression is not violated (see Fig.~\ref{fig:emabk-vs-mabk-n3}). Thus, the additional degrees of freedom provided by the two independent lower-order branches enable the EMABK inequality to eliminate the nonviolation region of the standard MABK inequality.

\begin{figure}[t]
  \centering
  \includegraphics[width=\columnwidth]{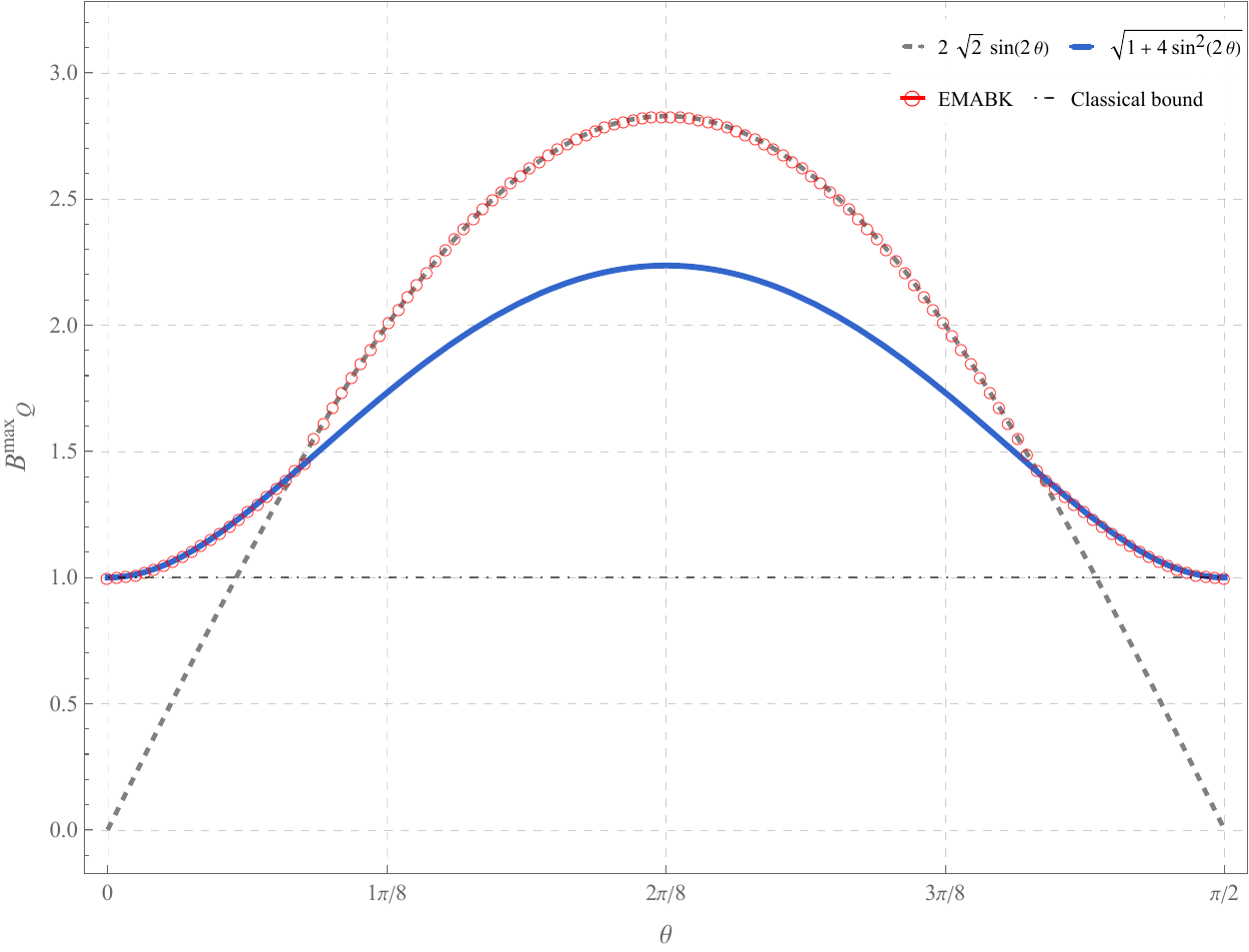}
  \caption{Maximal quantum values of the EMABK inequality for the four-qubit generalized GHZ state $\cos\theta|0000\rangle+\sin\theta|1111\rangle$ with $\theta\in[0,\pi/2]$. The gray dashed curve shows the first branch $2\sqrt{2}\sin2\theta$ attained by Strategy~I (purely equatorial MABK measurements), and the blue solid curve shows the second branch $\sqrt{1+4\sin^2 2\theta}$ attained by Strategy~II (hybrid measurements combining $xy$-plane operators with $\sigma_z$). Red open circles show numerical optimization results, and the black dash-dotted line marks the classical bound. The two branches cross at $\theta=\pi/12$.}
  \label{fig:emabk-n4-violation}
\end{figure}

\begin{figure}[t]
  \centering
  \includegraphics[width=\columnwidth]{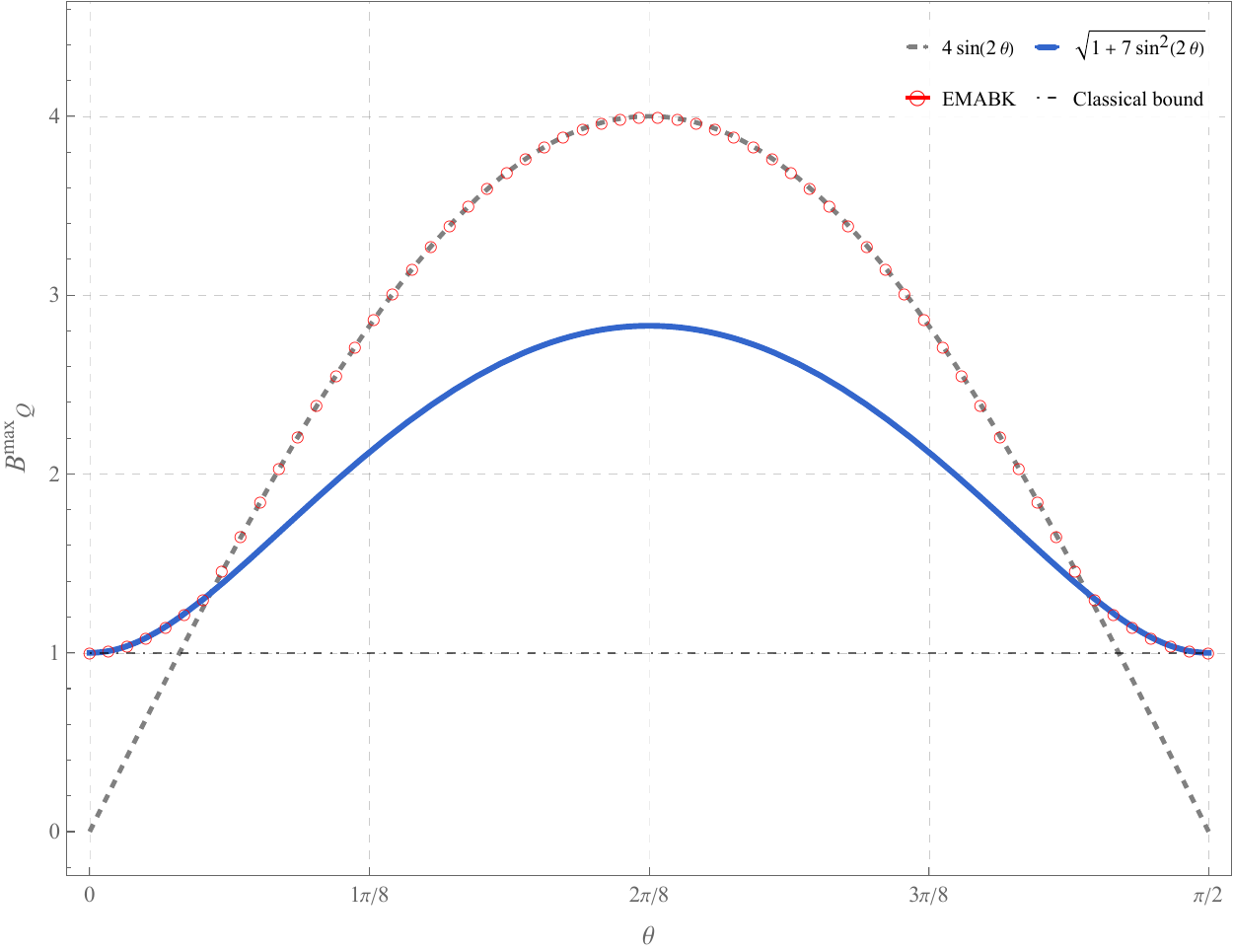}
  \caption{Maximal quantum values of the EMABK inequality for the five-qubit generalized GHZ state $\cos\theta|00000\rangle+\sin\theta|11111\rangle$ with $\theta\in[0,\pi/2]$. The gray dashed curve shows the first branch $4\sin2\theta$ attained by Strategy~I, and the blue solid curve shows the second branch $\sqrt{1+7\sin^2 2\theta}$ attained by Strategy~II. Red open circles show numerical optimization results, and the black dash-dotted line marks the classical bound. The two branches meet at $\sin2\theta=1/3$ ($\theta\approx0.17$).}
  \label{fig:emabk-n5-violation}
\end{figure}

\emph{Example ($n=4$).} In Strategy~I all four parties measure in the $xy$ plane with azimuthal angles $\varphi_1^{(k)}=-3\pi/16$ and $\varphi_2^{(k)}=-3\pi/16+\pi/2$ ($k=1,\dots,4$), while for the first three parties the auxiliary settings are identified as $M_3^{(k)}=M_2^{(k)}$ and $M_4^{(k)}=M_1^{(k)}$. The EMABK operator reduces to the standard MABK polynomial, yielding the first-branch value $2\sqrt{2}\sin 2\theta$. In Strategy~II the first three parties take $\varphi_1^{(k)}=-\pi/6$ and $\varphi_2^{(k)}=-\pi/6+\pi/2$ with $M_3^{(k)}=M_4^{(k)}=\sigma_z$. Since $n$ is even, $A_4(\theta)=1$ and the optimal polar angle for the fourth party satisfies
\begin{equation}
  \cos\vartheta_1^{(4)}=\frac{1}{\sqrt{1+4\sin^2 2\theta}},
\end{equation}
giving the second-branch value $\sqrt{1+4\sin^2 2\theta}$. The two branches meet at $\theta=\pi/12$, where $2\sin 2\theta=1=\lambda_z$.

\emph{Example ($n=5$).} Strategy~I sets $\varphi_1^{(k)}=-\pi/5$ for all five parties, producing the first-branch value $4\sin 2\theta$. For Strategy~II the first four parties choose $\varphi_1^{(k)}=-3\pi/16$ and $M_3^{(k)}=M_4^{(k)}=\sigma_z$, while the odd-$n$ diagonal factor $A_5(\theta)=\cos 2\theta$ gives
\begin{equation}
  \cos\vartheta_1^{(5)}=\frac{\cos 2\theta}{\sqrt{1+7\sin^2 2\theta}}
\end{equation}
for the fifth party. The resulting second-branch value $\sqrt{1+7\sin^2 2\theta}$. The crossing point occurs at $2^{3/2}\sin 2\theta=|\cos 2\theta|$, i.e.\ at $\sin 2\theta=1/3$ ($\theta\approx0.17$).

Figures~\ref{fig:emabk-n4-violation} and \ref{fig:emabk-n5-violation} provide further numerical verification for $n=4$ and $5$. Numerical-optimization data (red open circles) match our piecewise analytical bound across the full $\theta$ interval.
Taken together, the two measurement strategies prove that the piecewise upper bound \EquRef{eq:MABK-theoretical-limit} is attained by the EMABK Bell polynomial for every $\theta\in[0,\pi/2]$. The transition point is given by the condition $\lambda_\perp=\lambda_z$. Switching between the two sets of measurement directions at this point yields a continuous maximal-value curve that equals the classical bound only at the product-state limits and exceeds it everywhere else.

We remark that neither optimal strategy requires four mutually non-commuting observables for first $n-1$ parties: Strategy~I effectively employs only two independent directions (the MABK-equivalent degeneracy). Strategy II uses three directions per site: two orthogonal directions in the $xy$ plane plus $\sigma_z$.
This confirms that the advantage of EMABK is not from adding  genuinely independent noncommuting observables per se, but from the enlarged linear-combination space of two independent lower-order MABK structures, which permits adaptive switching between complementary correlation sectors.

\section{Discussion and Conclusions}\label{sec:discussion}

In this work, we have developed a correlation-tensor framework for analyzing multipartite full-correlation Bell inequalities with a two-setting party. The rank constraint of the corresponding effective coefficient matrix reduces the optimization to the two largest singular values of the reshaped correlation tensor. For generalized GHZ states, this structure leads to two degenerate transverse singular values and one longitudinal singular value, providing a simple geometric picture of the Bell optimization. Applying this framework to the MABK and EMABK families, we obtain a common singular-value upper bound. We further construct explicit measurement strategies that saturate this bound for EMABK over the entire parameter range. The conventional MABK strategy accesses the transverse correlation sector, whereas the additional freedom of EMABK allows one transverse mode
to be combined with the longitudinal correlation. This difference explains why EMABK can exhibit Bell violation in the parameter region where the standard MABK inequality does not.

Our results show that the enhancement of EMABK does not simply arise from using more measurement settings. Instead, the independent lower-order MABK blocks enlarge the accessible coefficient-tensor space and allow different
correlation sectors of the quantum state to be combined. This provides a geometric interpretation of the enhanced nonlocality and establishes the EMABK bound as an exact quantum maximum for the generalized GHZ states considered here. The present analysis focuses on pure generalized GHZ states and projective qubit measurements. Extending the singular-value approach to noisy states, other multipartite entangled states, and experimentally simpler measurement configurations would be natural directions for future work.
More broadly, the correlation-tensor perspective may provide a useful route for identifying optimal measurement structures and designing multipartite Bell inequalities adapted to specific quantum correlations.

\begin{acknowledgments}
  This work is supported by the Quantum Science and Technology-National Science and Technology Major Project (Grant No. 2024ZD0301000), and the National Natural Science Foundation of China (Grant No. 12275136). H. X. Meng was supported by the Beijing Natural Science Foundation (Grant No. 1262039), and the National Natural Science Foundations of China(Grant No. 12461087). J. Z. was supported by Science \& Technology Development Fund of Tianjin Education Commission for Higher Education(No. 2024KJ059).

\end{acknowledgments}

\bibliography{references}

@article{2014RMPBN,
  title   = {Bell Nonlocality},
  author  = {Brunner, Nicolas and Cavalcanti, Daniel and Pironio, Stefano and Scarani, Valerio and Wehner, Stephanie},
  year    = {2014},
  month   = apr,
  journal = {Rev. Mod. Phys.},
  volume  = {86},
  number  = {2},
  pages   = {419--478},
  issn    = {0034-6861, 1539-0756},
  doi     = {10.1103/RevModPhys.86.419},
  url     = {https://link.aps.org/doi/10.1103/RevModPhys.86.419},
  urldate = {2022-05-29},
  langid  = {english}
}

@article{EPR1935,
  title     = {Can Quantum-Mechanical Description of Physical Reality Be Considered Complete?},
  author    = {Einstein, A. and Podolsky, B. and Rosen, N.},
  journal   = {Phys. Rev.},
  volume    = {47},
  issue     = {10},
  pages     = {777--780},
  numpages  = {0},
  year      = {1935},
  month     = {May},
  publisher = {American Physical Society},
  doi       = {10.1103/PhysRev.47.777},
  url       = {https://link.aps.org/doi/10.1103/PhysRev.47.777}
}

@article{Bell1964,
  title     = {On the {Einstein} {Podolsky} {Rosen} paradox},
  author    = {Bell, J. S.},
  journal   = {Physics Physique Fizika},
  volume    = {1},
  issue     = {3},
  pages     = {195--200},
  numpages  = {6},
  year      = {1964},
  month     = {Nov},
  publisher = {American Physical Society},
  doi       = {10.1103/PhysicsPhysiqueFizika.1.195},
  url       = {https://link.aps.org/doi/10.1103/PhysicsPhysiqueFizika.1.195}
}

@article{CHSH1969,
  title     = {Proposed Experiment to Test {Local} {Hidden}-{Variable} Theories},
  author    = {Clauser, John F. and Horne, Michael A. and Shimony, Abner and Holt, Richard A.},
  journal   = {Phys. Rev. Lett.},
  volume    = {23},
  issue     = {15},
  pages     = {880--884},
  numpages  = {0},
  year      = {1969},
  month     = {Oct},
  publisher = {American Physical Society},
  doi       = {10.1103/PhysRevLett.23.880},
  url       = {https://link.aps.org/doi/10.1103/PhysRevLett.23.880}
}

@article{Ekert1991,
  title     = {Quantum cryptography based on {Bell}'s theorem},
  author    = {Ekert, Artur K.},
  journal   = {Phys. Rev. Lett.},
  volume    = {67},
  issue     = {6},
  pages     = {661--663},
  numpages  = {0},
  year      = {1991},
  month     = {Aug},
  publisher = {American Physical Society},
  doi       = {10.1103/PhysRevLett.67.661},
  url       = {https://link.aps.org/doi/10.1103/PhysRevLett.67.661}
}

@article{Acin2006,
  title     = {From {Bell}'s Theorem to Secure Quantum Key Distribution},
  author    = {Ac\'{\i}n, Antonio and Gisin, Nicolas and Masanes, Lluis},
  journal   = {Phys. Rev. Lett.},
  volume    = {97},
  issue     = {12},
  pages     = {120405},
  numpages  = {4},
  year      = {2006},
  month     = {Sep},
  publisher = {American Physical Society},
  doi       = {10.1103/PhysRevLett.97.120405},
  url       = {https://link.aps.org/doi/10.1103/PhysRevLett.97.120405}
}

@article{Acin2007,
  title     = {Device-Independent Security of Quantum Cryptography against Collective Attacks},
  author    = {Ac\'{\i}n, Antonio and Brunner, Nicolas and Gisin, Nicolas and Massar, Serge and Pironio, Stefano and Scarani, Valerio},
  journal   = {Phys. Rev. Lett.},
  volume    = {98},
  issue     = {23},
  pages     = {230501},
  numpages  = {4},
  year      = {2007},
  month     = {Jun},
  publisher = {American Physical Society},
  doi       = {10.1103/PhysRevLett.98.230501},
  url       = {https://link.aps.org/doi/10.1103/PhysRevLett.98.230501}
}

@article{Pironio2010,
  title     = {Random numbers certified by Bell’s theorem},
  volume    = {464},
  issn      = {1476-4687},
  url       = {http://dx.doi.org/10.1038/nature09008},
  doi       = {10.1038/nature09008},
  number    = {7291},
  journal   = {Nature},
  publisher = {Springer Science and Business Media LLC},
  author    = {Pironio, S. and Acín, A. and Massar, S. and de la Giroday, A. Boyer and Matsukevich, D. N. and Maunz, P. and Olmschenk, S. and Hayes, D. and Luo, L. and Manning, T. A. and Monroe, C.},
  year      = {2010},
  month     = Apr,
  pages     = {1021–1024}
}

@article{Cirelson1980,
  title        = {Quantum Generalizations of {{Bell}}'s Inequality},
  author       = {Cirel'son, B. S.},
  date         = {1980-03-01},
  journaltitle = {Letters in Mathematical Physics},
  journal      = {Lett. Math. Phys.},
  volume       = {4},
  number       = {2},
  pages        = {93--100},
  issn         = {1573-0530},
  year         = {1980},
  doi          = {10.1007/BF00417500},
  url          = {https://doi.org/10.1007/BF00417500}
}

@article{Aspect1982,
  title     = {Experimental Test of {Bell}'s Inequalities Using Time-Varying Analyzers},
  author    = {Aspect, Alain and Dalibard, Jean and Roger, G\'erard},
  journal   = {Phys. Rev. Lett.},
  volume    = {49},
  issue     = {25},
  pages     = {1804--1807},
  numpages  = {0},
  year      = {1982},
  month     = {Dec},
  publisher = {American Physical Society},
  doi       = {10.1103/PhysRevLett.49.1804},
  url       = {https://link.aps.org/doi/10.1103/PhysRevLett.49.1804}
}

@article{Hensen2015,
  author  = {Hensen, B. and Bernien, H. and Dréau, A. E. and Reiserer, A. and Kalb, N. and Blok, M. S. and Ruitenberg, J. and Vermeulen, R. F. L. and Schouten, R. N. and Abellán, C. and Amaya, W. and Pruneri, V. and Mitchell, M. W. and Markham, M. and Twitchen, D. J. and Elkouss, D. and Wehner, S. and Taminiau, T. H. and Hanson, R.},
  title   = {Loophole-free {B}ell inequality violation using electron spins separated by 1.3 kilometres},
  journal = {Nature},
  volume  = {526},
  pages   = {682--686},
  year    = {2015},
  issn    = {1476-4687},
  doi     = {10.1038/nature15759},
  url     = {https://doi.org/10.1038/nature15759}
}

@article{Giustina2015,
  title     = {Significant-Loophole-Free Test of {Bell's} Theorem with Entangled Photons},
  author    = {Giustina, Marissa and Versteegh, Marijn A. M. and Wengerowsky, S\"oren and Handsteiner, Johannes and Hochrainer, Armin and Phelan, Kevin and Steinlechner, Fabian and Kofler, Johannes and Larsson, Jan-\AA{}ke and Abell\'an, Carlos and Amaya, Waldimar and Pruneri, Valerio and Mitchell, Morgan W. and Beyer, J\"orn and Gerrits, Thomas and Lita, Adriana E. and Shalm, Lynden K. and Nam, Sae Woo and Scheidl, Thomas and Ursin, Rupert and Wittmann, Bernhard and Zeilinger, Anton},
  journal   = {Phys. Rev. Lett.},
  volume    = {115},
  issue     = {25},
  pages     = {250401},
  numpages  = {7},
  year      = {2015},
  month     = {Dec},
  publisher = {American Physical Society},
  doi       = {10.1103/PhysRevLett.115.250401},
  url       = {https://link.aps.org/doi/10.1103/PhysRevLett.115.250401}
}

@article{Shalm2015,
  title     = {Strong Loophole-Free Test of Local Realism},
  author    = {Shalm, Lynden K. and Meyer-Scott, Evan and Christensen, Bradley G. and Bierhorst, Peter and Wayne, Michael A. and Stevens, Martin J. and Gerrits, Thomas and Glancy, Scott and Hamel, Deny R. and Allman, Michael S. and Coakley, Kevin J. and Dyer, Shellee D. and Hodge, Carson and Lita, Adriana E. and Verma, Varun B. and Lambrocco, Camilla and Tortorici, Edward and Migdall, Alan L. and Zhang, Yanbao and Kumor, Daniel R. and Farr, William H. and Marsili, Francesco and Shaw, Matthew D. and Stern, Jeffrey A. and Abell\'an, Carlos and Amaya, Waldimar and Pruneri, Valerio and Jennewein, Thomas and Mitchell, Morgan W. and Kwiat, Paul G. and Bienfang, Joshua C. and Mirin, Richard P. and Knill, Emanuel and Nam, Sae Woo},
  journal   = {Phys. Rev. Lett.},
  volume    = {115},
  issue     = {25},
  pages     = {250402},
  numpages  = {10},
  year      = {2015},
  month     = {Dec},
  publisher = {American Physical Society},
  doi       = {10.1103/PhysRevLett.115.250402},
  url       = {https://link.aps.org/doi/10.1103/PhysRevLett.115.250402}
}

@article{Mermin1990,
  title     = {Extreme quantum entanglement in a superposition of macroscopically distinct states},
  author    = {Mermin, N. David},
  journal   = {Phys. Rev. Lett.},
  volume    = {65},
  issue     = {15},
  pages     = {1838--1840},
  numpages  = {0},
  year      = {1990},
  month     = {Oct},
  publisher = {American Physical Society},
  doi       = {10.1103/PhysRevLett.65.1838},
  url       = {https://link.aps.org/doi/10.1103/PhysRevLett.65.1838}
}

@incollection{GHZ1989,
  title     = {Going {{Beyond Bell}}'s {{Theorem}}},
  booktitle = {Bell's {{Theorem}}, {{Quantum Theory}} and {{Conceptions}} of the {{Universe}}},
  author    = {Greenberger, Daniel M. and Horne, Michael A. and Zeilinger, Anton},
  editor    = {Kafatos, Menas},
  year      = {1989},
  pages     = {69--72},
  publisher = {{Springer Netherlands}},
  address   = {{Dordrecht}},
  doi       = {10.1007/978-94-017-0849-4_10},
  url       = {https://doi.org/10.1007/978-94-017-0849-4_10},
  isbn      = {978-94-017-0849-4}
}

@article{Ardehali1992,
  title     = {{Bell} inequalities with a magnitude of violation that grows exponentially with the number of particles},
  author    = {Ardehali, M.},
  journal   = {Phys. Rev. A},
  volume    = {46},
  issue     = {9},
  pages     = {5375--5378},
  numpages  = {0},
  year      = {1992},
  month     = {Nov},
  publisher = {American Physical Society},
  doi       = {10.1103/PhysRevA.46.5375},
  url       = {https://link.aps.org/doi/10.1103/PhysRevA.46.5375}
}

@article{BelinskiiKlyshko1993,
  doi       = {10.1070/PU1993v036n08ABEH002299},
  url       = {https://doi.org/10.1070/PU1993v036n08ABEH002299},
  year      = {1993},
  month     = {aug},
  publisher = {},
  volume    = {36},
  number    = {8},
  pages     = {653},
  author    = {A V Belinskiĭ and D N Klyshko},
  title     = {Interference of light and {Bell's} theorem},
  journal   = {Phys. Usp.}
}

@article{ScaraniGisin2001,
  doi       = {10.1088/0305-4470/34/30/314},
  url       = {https://doi.org/10.1088/0305-4470/34/30/314},
  year      = {2001},
  month     = {jul},
  publisher = {},
  volume    = {34},
  number    = {30},
  pages     = {6043},
  author    = {Valerio Scarani and Nicolas Gisin},
  title     = {Spectral decomposition of {Bell's}operators for qubits},
  journal   = {J. Phys. A}
}

@article{Fan2023,
  title     = {Generalized iterative formula for Bell inequalities},
  author    = {Fan, XingYan and Xu, ZhenPeng and Miao, JiaLe and Liu, HongYe and Liu, YiJia and Shang, WeiMin and Zhou, Jie and Meng, HuiXian and G\"uhne, Otfried and Chen, JingLing},
  journal   = {Phys. Rev. A},
  volume    = {108},
  issue     = {6},
  pages     = {062404},
  numpages  = {33},
  year      = {2023},
  month     = {Dec},
  publisher = {American Physical Society},
  doi       = {10.1103/PhysRevA.108.062404},
  url       = {https://link.aps.org/doi/10.1103/PhysRevA.108.062404}
}

@article{WernerWolf2001,
  author    = {Werner, Reinhard F. and Wolf, Michael M.},
  title     = {All-multipartite {Bell}-correlation inequalities for two dichotomic observables per site},
  journal   = {Phys. Rev. A},
  volume    = {64},
  number    = {3},
  pages     = {032112},
  year      = {2001},
  doi       = {10.1103/PhysRevA.64.032112},
  publisher = {American Physical Society}
}

@article{ZukowskiBrukner2002,
  author    = {{\.{Z}}ukowski, Marek and Brukner, {\v C}aslav},
  title     = {{Bell}'s theorem for general {$N$}-qubit states},
  journal   = {Phys. Rev. Lett.},
  volume    = {88},
  number    = {21},
  pages     = {210401},
  year      = {2002},
  doi       = {10.1103/PhysRevLett.88.210401},
  publisher = {American Physical Society}
}

@article{Horodecki1995,
  author  = {Horodecki, Ryszard and Horodecki, Pawe{\l} and Horodecki, Micha{\l}},
  title   = {Violating {Bell} inequality by mixed spin-{$1/2$} states: Necessary and sufficient condition},
  journal = {Phys. Lett. A},
  volume  = {200},
  number  = {5},
  pages   = {340--344},
  year    = {1995},
  doi     = {10.1016/0375-9601(95)00214-N}
}

@article{ChenWuKwekOh2004,
  author    = {Chen, JingLing and Wu, Chunfeng and Kwek, L. C. and Oh, C. H.},
  title     = {{Gisin}'s theorem for three qubits},
  journal   = {Phys. Rev. Lett.},
  volume    = {93},
  number    = {14},
  pages     = {140407},
  year      = {2004},
  doi       = {10.1103/PhysRevLett.93.140407},
  publisher = {American Physical Society}
}

@article{ChenAlbeverioFei2006,
  author    = {Chen, Kai and Albeverio, Sergio and Fei, ShaoMing},
  title     = {Two-setting {Bell} inequalities for many qubits},
  journal   = {Phys. Rev. A},
  volume    = {74},
  number    = {5},
  pages     = {050101},
  year      = {2006},
  doi       = {10.1103/PhysRevA.74.050101},
  publisher = {American Physical Society}
}

@article{YuEtAl2012,
  author    = {Yu, Sixia and Chen, Qing and Zhang, Chengjie and Lai, C. H. and Oh, C. H.},
  title     = {All entangled pure states violate a single {Bell}'s inequality},
  journal   = {Phys. Rev. Lett.},
  volume    = {109},
  number    = {12},
  pages     = {120402},
  year      = {2012},
  doi       = {10.1103/PhysRevLett.109.120402},
  publisher = {American Physical Society}
}

@article{Kaniewski2016,
  author    = {Kaniewski, J{\k e}drzej},
  title     = {Analytic and nearly optimal self-testing bounds for the {Clauser--Horne--Shimony--Holt} and {Mermin} inequalities},
  journal   = {Phys. Rev. Lett.},
  volume    = {117},
  number    = {7},
  pages     = {070402},
  year      = {2016},
  doi       = {10.1103/PhysRevLett.117.070402},
  publisher = {American Physical Society}
}

@article{Panwar2023,
  author  = {Panwar, Ekta and Pandya, Palash and Wie{\'s}niak, Marcin},
  title   = {An elegant scheme of self-testing for multipartite {Bell} inequalities},
  journal = {npj Quantum Information},
  volume  = {9},
  pages   = {71},
  year    = {2023},
  doi     = {10.1038/s41534-023-00735-3}
}

\clearpage

\appendix

\makeatletter

\@addtoreset{equation}{section}
\renewcommand{\theequation}{\thesection.\arabic{equation}}

\@addtoreset{theorem}{section}
\renewcommand{\thetheorem}{\thesection.\arabic{theorem}}

\@addtoreset{lemma}{section}
\renewcommand{\thelemma}{\thesection.\arabic{lemma}}

\@addtoreset{definition}{section}
\renewcommand{\thedefinition}{\thesection.\arabic{definition}}

\@addtoreset{remark}{section}
\renewcommand{\theremark}{\thesection.\arabic{remark}}

\makeatother

\section{Von Neumann's Trace Inequality and Its Extension to Rectangular Matrices}\label{app:von Neumann trace inequality}

In this appendix, we establish the extension of the von Neumann's trace inequality valid for real rectangular matrices, which is central to the upper bound derivation in the main text.

\begin{theorem}[von Neumann trace inequality for rectangular matrices]
  \label{thm:von_neumann_rect}
  Let $X,Y\in\Real^{m\times n}$, set $r=\min(m,n)$, and let their singular values be sorted in non-increasing order
  \begin{align}
    \sigma_1(X)\ge\sigma_2(X)\ge\cdots\ge\sigma_r(X)\ge0, \\
    \sigma_1(Y)\ge\sigma_2(Y)\ge\cdots\ge\sigma_r(Y)\ge0.
  \end{align}
  Then the Frobenius inner product satisfies
  \begin{equation}
    \Tr(X^{\mathsf{T}}Y)\le\sum_{i=1}^{r}\sigma_i(X)\,\sigma_i(Y).
    \label{eq:main_ineq}
  \end{equation}
  Equality holds if and only if there exist matrices $U\in\Real^{m\times r}, V\in\Real^{n\times r}$ with orthonormal columns ($U^{\mathsf{T}}U=I_r$, $V^{\mathsf{T}}V=I_r$) such that
  \begin{equation}
    X=U\Sigma_X V^{\mathsf{T}},\quad Y=U\Sigma_Y V^{\mathsf{T}},
  \end{equation}
  where $\Sigma_X=\diag(\sigma_1(X),\dots,\sigma_r(X))$, $\Sigma_Y=\diag(\sigma_1(Y),\dots,\sigma_r(Y))$.
\end{theorem}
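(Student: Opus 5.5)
The plan is to reduce the inequality to a statement about doubly substochastic matrices, and then obtain the equality case from a variational argument. Let $p=\max(m,n)$. Take full singular value decompositions $X=P\Sigma_X Q^{\mathsf T}$ and $Y=R\Sigma_Y S^{\mathsf T}$, with $P,R\in\Real^{m\times m}$ and $Q,S\in\Real^{n\times n}$ orthogonal. Here $\Sigma_X,\Sigma_Y$ are the rectangular diagonal matrices carrying $\sigma_1\ge\dots\ge\sigma_r\ge0$. Set $a=P^{\mathsf T}R$ and $b=Q^{\mathsf T}S$, which are orthogonal. A direct expansion gives
\begin{equation*}
  \Tr(X^{\mathsf T}Y)=\Tr\bigl(\Sigma_X^{\mathsf T}\,a\,\Sigma_Y\,b^{\mathsf T}\bigr)=\sum_{i,j=1}^{r}\sigma_i(X)\,\sigma_j(Y)\,a_{ij}b_{ij}.
\end{equation*}
Define $D_{ij}=|a_{ij}b_{ij}|$ for $i,j\le r$. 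By Cauchy--Schwarz applied to the rows and columns of the orthogonal matrices $a,b$, every row sum and every column sum of $D$ is at most $1$. Hence $D$ is doubly substochastic.

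Next I bound $\sum_{i,j}\sigma_i(X)\sigma_j(Y)D_{ij}$ using Abel summation rather than Birkhoff's theorem. Write $\sigma_i(X)=\sum_{k\ge i}\delta_k$ with $\delta_k=\sigma_k(X)-\sigma_{k+1}(X)\ge0$ and $\sigma_{r+1}:=0$, and write $\sigma_j(Y)=\sum_{l\ge j}\epsilon_l$ in the same way. The bilinear form then becomes
\begin{equation*}
  \sum_{k,l}\delta_k\epsilon_l\sum_{i\le k,\,j\le l}D_{ij}.
\end{equation*}
The inner block sum is at most $\min(k,l)$, because it is bounded both by its $k$ row sums and by its $l$ column sums. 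Running the same Abel summation backwards on the identity matrix shows that $\sum_{k,l}\delta_k\epsilon_l\min(k,l)=\sum_i\sigma_i(X)\sigma_i(Y)$. This proves Eq.~\eqref{eq:main_ineq}.

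Sufficiency of the equality condition is immediate. If $X=U\Sigma_XV^{\mathsf T}$ and $Y=U\Sigma_YV^{\mathsf T}$, then $\Tr(X^{\mathsf T}Y)=\Tr(\Sigma_X\Sigma_Y)$, since $U^{\mathsf T}U=V^{\mathsf T}V=I_r$.

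Necessity is the step I expect to be the main obstacle, because repeated and zero singular values make "the" singular vectors non-unique. I would attack it variationally. The inequality applies to $X$ and $O_1YO_2$ for all orthogonal $O_1,O_2$, and the bound $\sum_i\sigma_i(X)\sigma_i(Y)$ is attained at suitable $O_1,O_2$. So equality means that $(O_1,O_2)=(I,I)$ maximizes $f(O_1,O_2)=\Tr(X^{\mathsf T}O_1YO_2)$.

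Perturbing along $O_1=e^{tK}$ with $K$ skew gives the first-order condition that $YX^{\mathsf T}$ is symmetric. Perturbing along $O_2$ in the same way gives that $X^{\mathsf T}Y$ is symmetric. The second-order condition along $e^{tK}$ reads $\Tr(K^2\,YX^{\mathsf T})\le0$ for every skew $K$, which I would use to show $YX^{\mathsf T}\succeq0$; similarly $X^{\mathsf T}Y\succeq0$.

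From $XY^{\mathsf T}$ and $X^{\mathsf T}Y$ both being symmetric, one checks that $XX^{\mathsf T}$ commutes with $YY^{\mathsf T}$ and that $X^{\mathsf T}X$ commutes with $Y^{\mathsf T}Y$. Working eigenspace by eigenspace of $XX^{\mathsf T}$, which is where the degeneracies must be handled carefully, this yields common orthonormal bases in which both matrices are diagonal with nonnegative entries. Positivity of the cross products fixes the signs. The remaining issue is the order of the diagonal entries: the paired values might not be ordered consistently. To settle this I would return to the Abel-summation step. Equality there forces $\sum_{i\le k,\,j\le l}D_{ij}=\min(k,l)$ whenever $\delta_k\epsilon_l>0$, and this rules out any mismatched pairing, for example a larger $\sigma(X)$ paired with a strictly smaller $\sigma(Y)$ while some other pairing is reversed. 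After reordering, this gives the common $U,V$ with orthonormal columns as claimed.
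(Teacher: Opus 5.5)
\paragraph{The inequality and sufficiency.} Your proof of the inequality is correct, and it takes a different route from the paper. You bound $\sum_{i,j}\sigma_i(X)\sigma_j(Y)a_{ij}b_{ij}$ directly through the doubly substochastic matrix $D_{ij}=|a_{ij}b_{ij}|$ and a two-sided Abel summation. The paper instead pads to square matrices, passes to the symmetric dilation $\bigl(\begin{smallmatrix}0&\widetilde X\\ \widetilde X^{\mathsf T}&0\end{smallmatrix}\bigr)$, and applies a majorization lemma for symmetric matrices. Your route is shorter and native to the rectangular case. The paper's route is built to inherit the equality case from its symmetric lemma. Your sufficiency argument is also fine.

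\paragraph{The gap: positivity from second-order conditions.} The genuine gap is in necessity, where you claim that the second-order condition $\Tr(K^2YX^{\mathsf T})\le0$ for all skew $K$ yields $YX^{\mathsf T}\succeq0$. It does not. For $K=w_1w_2^{\mathsf T}-w_2w_1^{\mathsf T}$ one has $K^2=-(w_1w_1^{\mathsf T}+w_2w_2^{\mathsf T})$. The condition therefore only says that the two smallest eigenvalues of $YX^{\mathsf T}$ have nonnegative sum, so one negative eigenvalue can survive.

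A concrete counterexample is $X=I_2$, $Y=\diag(2,-1)$. Here $YX^{\mathsf T}$ and $X^{\mathsf T}Y$ are symmetric, and $f(R_\alpha,R_\beta)=\cos(\alpha+\beta)\le f(I,I)$ on all of $SO(2)\times SO(2)$. Yet $\Tr(X^{\mathsf T}Y)=1<3=\sum_i\sigma_i(X)\sigma_i(Y)$. Perturbations $e^{tK}$ never leave the identity component, so no argument of this kind can give positivity.

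There are two ways to repair this.
\begin{enumerate}
\item Use the global maximality you already have over the full orthogonal group. The reflection $O_1=I-2ww^{\mathsf T}$ gives $f(I,I)-f(O_1,I)=2\,w^{\mathsf T}YX^{\mathsf T}w\ge0$, and the same works on the right.
\item Skip positivity. Once $X$ is $\Sigma_X$ and $Y$ is diagonal with signed entries $d_i$ in a common frame, consider $\sum_i\sigma_i(X)d_i\le\sum_i\sigma_i(X)|d_i|\le\sum_i\sigma_i(X)\sigma_i(Y)$, where the second step is the rearrangement inequality. Equality forces $d_i\ge0$ wherever $\sigma_i(X)>0$, and the remaining signs can be flipped freely. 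It also forces $d$ to be ordered across blocks of equal $\sigma_i(X)$, so permuting within those blocks settles the ordering. This is cleaner than returning to the block sums of $D$.
\end{enumerate}

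\paragraph{A caution on the common-frame step.} The relations $[XX^{\mathsf T},YY^{\mathsf T}]=0$ and $[X^{\mathsf T}X,Y^{\mathsf T}Y]=0$ are not sufficient for a common SVD. Take $X=I_2$ and $Y$ a nontrivial rotation: both relations hold, but $U^{\mathsf T}V$ diagonal forces $V=U$ up to signs, and then $U^{\mathsf T}YU$ diagonal would force $Y$ to be symmetric.

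Use the two symmetry conditions themselves. Write $X=P\Sigma_XQ^{\mathsf T}$ with diagonal entries $s_i$, and set $Z=P^{\mathsf T}YQ$. Symmetry of $\Sigma_XZ^{\mathsf T}$ and $\Sigma_X^{\mathsf T}Z$ gives $s_iZ_{ji}=s_jZ_{ij}$ and $s_iZ_{ij}=s_jZ_{ji}$, hence $(s_i^2-s_j^2)Z_{ij}=0$. So $Z$ is block diagonal along the distinct singular values of $X$, with symmetric blocks on the positive values. Diagonalize each positive block by one orthogonal matrix on both sides, which leaves $s\,I$ invariant, and the zero block by its own SVD. This produces the common frame.
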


The proof of this theorem relies on the following lemma concerning the trace inequality for real symmetric matrices.

\begin{lemma}[Trace inequality for real symmetric matrices]
  \label{lem:sym_trace}
  Let $P,Q\in\Real^{N\times N}$ be real symmetric matrices, with eigenvalues sorted in descending order
  \[
    \lambda_1(P)\ge\cdots\ge\lambda_N(P),\qquad \lambda_1(Q)\ge\cdots\ge\lambda_N(Q).
  \]
  Then
  \begin{equation}
    \Tr(PQ)\le\sum_{i=1}^{N}\lambda_i(P)\lambda_i(Q).
    \label{eq:lemma_ineq}
  \end{equation}
  Equality holds if and only if there exists an orthogonal matrix $U$ such that
  \begin{align}
    P=U\diag(\lambda_1(P),\dots,\lambda_N(P))U^{\mathsf{T}}, \\
    Q=U\diag(\lambda_1(Q),\dots,\lambda_N(Q))U^{\mathsf{T}}.
  \end{align}
\end{lemma}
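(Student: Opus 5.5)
The plan is to reduce to a statement about diagonal entries by diagonalizing $P$, and then to use Abel summation together with the Ky Fan maximum principle. First write $P=U_0\,\diag(\lambda_1(P),\dots,\lambda_N(P))\,U_0^{\mathsf T}$ with $U_0$ orthogonal. Set $Q'=U_0^{\mathsf T}QU_0$; it is real symmetric with the same spectrum as $Q$. Then
\begin{equation}
\Tr(PQ)=\sum_{i=1}^N\lambda_i(P)\,q_i,\qquad q_i:=Q'_{ii}.
\end{equation}
Let $S_k=\sum_{i\le k}q_i$ and $\Lambda_k=\sum_{i\le k}\lambda_i(Q)$, with the convention $\lambda_{N+1}(P):=0$. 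Summation by parts gives
\begin{equation}
\sum_{i=1}^N\lambda_i(P)q_i=\sum_{k=1}^{N-1}\bigl(\lambda_k(P)-\lambda_{k+1}(P)\bigr)S_k+\lambda_N(P)S_N ,
\end{equation}
and the same identity holds with $S_k$ replaced by $\Lambda_k$ on the right-hand side.

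The key input is the Ky Fan principle: $S_k=\Tr(E_k^{\mathsf T}Q'E_k)\le\Lambda_k$, where $E_k$ consists of the first $k$ standard basis vectors. This follows by expanding in an eigenbasis of $Q'$. If $w_j=\|E_k^{\mathsf T}v_j\|^2$ for eigenvectors $v_j$, then $0\le w_j\le1$ and $\sum_j w_j=k$, so $\sum_j w_j\lambda_j(Q)\le\Lambda_k$. For $k=N$ we have $S_N=\Tr Q=\Lambda_N$ exactly. The coefficients $\lambda_k(P)-\lambda_{k+1}(P)$ with $k\le N-1$ are nonnegative. The last coefficient $\lambda_N(P)$ may have either sign, but it multiplies an equality, so no sign is needed there. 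Comparing term by term then yields $\Tr(PQ)\le\sum_i\lambda_i(P)\lambda_i(Q)$.

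For the equality case, the direction ``$\Leftarrow$'' is immediate: with a common $U$, $\Tr(PQ)=\Tr(D_PD_Q)$. For ``$\Rightarrow$'', equality forces $S_k=\Lambda_k$ for every $k$ at which $\lambda_k(P)>\lambda_{k+1}(P)$, that is, at the boundaries of the eigenvalue-multiplicity blocks of $P$. I will show that equality in Ky Fan means the weights satisfy $w_j=1$ for the top eigenvalues and $w_j=0$ below, modulo ties in $\lambda(Q)$. Equivalently, $\operatorname{span}(e_1,\dots,e_k)$ is a $Q'$-invariant subspace carrying the top $k$ eigenvalues of $Q$. Applying this at every block boundary shows that $Q'$ is block diagonal with respect to the eigenspace decomposition of $P$, and that the blocks carry successively the largest remaining eigenvalues of $Q$. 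Within each block $P$ acts as a scalar, so I can diagonalize that block of $Q'$ by an orthogonal rotation $W$ confined to the block, with eigenvalues in descending order, without changing $D_P$. Then $U=U_0W$ diagonalizes both matrices with the ordered spectra.

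I expect the main obstacle to be this equality analysis when $P$ has degenerate eigenvalues or $Q$ has ties. In particular, the Ky Fan equality characterization must be stated carefully: ``invariant subspace spanned by eigenvectors for the $k$ largest eigenvalues'', with ties allowed. The block ordering must also be shown to match the descending order of the $\lambda_i(Q)$.
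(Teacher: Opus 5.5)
Your proposal is correct. For the inequality it uses the same mechanism as the paper, but in a leaner form. In both arguments you diagonalize $P$ and write $\Tr(PQ)$ through the diagonal of $U_0^{\mathsf T}QU_0$. You then bound partial sums of that diagonal using weights $w_j\in[0,1]$ with $\sum_j w_j=k$; the paper's doubly stochastic matrix $S_{ij}=v_{ij}^2$ and its column weights $t_j$ are exactly this Ky Fan argument. Both finish with Abel summation.

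Your version saves two steps:
\begin{itemize}
\item You apply the Ky Fan bound directly to the leading partial sums $S_k$ in $P$'s eigen-ordering. So you do not need to sort the diagonal into $z^{\downarrow}$ and invoke the rearrangement inequality. The paper's weight argument already covers every index set, so its sorting step was superfluous too.
\item The last Abel term multiplies the exact identity $S_N=\Tr Q=\Lambda_N$. So you do not need the paper's Step~1 shift to positive semidefinite matrices, which there mainly secures $\lambda_N\ge0$.
\end{itemize}

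The equality analysis is where the two routes genuinely differ. The paper assumes $\lambda(P)$ strictly decreasing and deduces that the diagonal of $Z=U^{\mathsf T}QU$ is $(\mu_1,\dots,\mu_N)$. It then eliminates the off-diagonal entries via $\Tr(Z^2)=\sum_i\mu_i^2$. Repeated eigenvalues are only asserted to work.

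Your argument runs as follows. Equality forces $S_k=\Lambda_k$ at each gap of $\lambda(P)$. Ky Fan equality then makes $\operatorname{span}(e_1,\dots,e_k)$ a $Q'$-invariant subspace carrying $\mu_1,\dots,\mu_k$. This yields block diagonality of $Q'$ along $P$'s eigenspaces, and you finish with rotations inside each block. This handles ties in both spectra, so it proves the ``only if'' direction in the full generality stated, which the paper does not.

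The one step to write out explicitly is the Ky Fan equality characterization. Set $s=\sum_{j>k}w_j=\sum_{j\le k}(1-w_j)$. Equality forces $w_j=1$ whenever $\mu_j>\mu_k$ and $w_j=0$ whenever $\mu_j<\mu_k$. Hence $\operatorname{span}(e_1,\dots,e_k)$ contains all eigenvectors above $\mu_k$ and is orthogonal to those below. It is therefore their span plus its intersection with $\ker(Q'-\mu_k I)$, and this sum is invariant and has the claimed spectrum.
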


\begin{proof}
  \textbf{Step 1: Translation to positive semidefinite matrices.}
  Choose constants $c=-\lambda_N(P)$, $d=-\lambda_N(Q)$, so that $P+cI$ and $Q+dI$ are positive semidefinite.
  Expanding the trace product,
  \begin{equation}
    \begin{aligned}
        & \Tr\bigl((P+cI)(Q+dI)\bigr)  \\
      = & \Tr(PQ)+c\Tr(Q)+d\Tr(P)+Ncd,
    \end{aligned}
  \end{equation}
  \begin{equation}
    \begin{aligned}
        & \sum_{i=1}^{N}\bigl(\lambda_i(P)+c\bigr)\bigl(\lambda_i(Q)+d\bigr)                                  \\
      = & \sum_{i=1}^{N}\lambda_i(P)\lambda_i(Q)+c\sum_{i=1}^{N}\lambda_i(Q)+d\sum_{i=1}^{N}\lambda_i(P)+Ncd.
    \end{aligned}
  \end{equation}
  Since $\Tr(Q)=\sum\lambda_i(Q)$ and $\Tr(P)=\sum\lambda_i(P)$, the inequality \EquRef{eq:lemma_ineq} is equivalent to its validity for the positive semidefinite matrices $P+cI$, $Q+dI$, and the equality condition is invariant under translation.
  Hence we may assume without loss of generality that $P, Q$ are already positive semidefinite, so all their eigenvalues are nonnegative.

  \textbf{Step 2: Proof for the positive semidefinite case.}
  Take the spectral decomposition $P=U\Lambda U^{\mathsf{T}}$, where $\Lambda=\diag(\lambda_1,\dots,\lambda_N)$, $\lambda_i:=\lambda_i(P)\ge0$, and $U$ is orthogonal.
  Then
  \begin{align}
    \Tr(PQ)   & =\Tr(\Lambda\,U^{\mathsf{T}}QU)=\sum_{i=1}^{N}\lambda_i z_i,\nonumber \\
    \quad z_i & :=(U^{\mathsf{T}}QU)_{ii}\ge0.
  \end{align}
  Let $Q=WMW^{\mathsf{T}}$ be the spectral decomposition of $Q$, $M=\diag(\mu_1,\dots,\mu_N)$, $\mu_i:=\lambda_i(Q)\ge0$, with $W$ orthogonal.
  Define $V=U^{\mathsf{T}}W$,  which remains orthogonal. Then $z_i=\sum_{j=1}^{N}v_{ij}^2\mu_j$.
  Introduce $S_{ij}=v_{ij}^2$, the matrix $S$ is doubly stochastic (all entries nonnegative, all row and column sums equal to $1$).

  Set $z=(z_1,\dots,z_N)^{\mathsf{T}}$, so that $z=S\mu$, where $\mu=(\mu_1,\dots,\mu_N)^{\mathsf{T}}$.
  Let $z_1^{\downarrow}\ge\cdots\ge z_N^{\downarrow}$ be the components of $z$ arranged in descending order.
  By the property of doubly stochastic matrices, the vector $z$ is majorized by $\mu$:
  \begin{align}
     & \sum_{i=1}^{k} z_i^{\downarrow}\le\sum_{i=1}^{k}\mu_i\quad(k=1,\dots,N-1),\quad\nonumber \\
     & \sum_{i=1}^{N} z_i = \sum_{i=1}^{N}\mu_i.
    \label{eq:control}
  \end{align}
  For any index set $I\subseteq\{1,\dots,N\}$ with $|I|=k$,
  \[
    \sum_{i\in I}z_i = \sum_{j=1}^{N}\mu_j\Bigl(\sum_{i\in I}S_{ij}\Bigr).
  \]
  Let $t_j=\sum_{i\in I}S_{ij}$, then $0\le t_j\le1$ and $\sum_{j=1}^{N}t_j=k$. Under these constraints, $\sum\mu_j t_j$ is maximized by assigning as much weight as possible to the largest $\mu_j$, i.e., $\le\sum_{j=1}^{k}\mu_j$. Choosing $I$ to be the indices of the $k$ largest components of $z$ yields~\EquRef{eq:control}.

  Since $\lambda=(\lambda_1,\dots,\lambda_N)$ is a nonnegative sequence in descending order. By the rearrangement inequality,
  \[
    \sum_{i=1}^{N}\lambda_i z_i\le\sum_{i=1}^{N}\lambda_i z_i^{\downarrow}.
  \]
  Using the majorization relations and Abel summation:

  \begin{align}\label{eq:abel}
    \sum_{i=1}^{N}\lambda_i\mu_i & - \sum_{i=1}^{N}\lambda_i z_i^{\downarrow}
    = \sum_{k=1}^{N-1}(\lambda_k-\lambda_{k+1})\Bigl(\sum_{i=1}^{k}\mu_i-\sum_{i=1}^{k}z_i^{\downarrow}\Bigr) \nonumber \\
                                 & +\lambda_N\Bigl(\sum_{i=1}^{N}\mu_i-\sum_{i=1}^{N}z_i^{\downarrow}\Bigr) \ge0.
  \end{align}
  Here $\lambda_k-\lambda_{k+1}\ge0$, $\lambda_N\ge0$, and each bracketed expression is nonnegative by~\EquRef{eq:control}.
  Therefore,
  \begin{equation}
    \Tr(PQ)=\sum_{i=1}^{N}\lambda_i z_i \le \sum_{i=1}^{N}\lambda_i\mu_i,
  \end{equation}
  which is exactly \EquRef{eq:lemma_ineq}.

  \textbf{Step 3: Equality condition.}
  If equality holds, all terms in~\EquRef{eq:abel} must be zero. Because we can translate to make all $\lambda_i>0$ (choose $c>-\lambda_N(P)$), and translation does not affect the simultaneous diagonalization conclusion, we may assume without loss of generality that $\lambda_i>0$ for all $i$.
  Then $\lambda_k-\lambda_{k+1}>0$
  For brevity, assume $\lambda_i$ are strictly decreasing; then~\EquRef{eq:control} must hold with equality everywhere, i.e., for all $k$,
  $\sum_{i=1}^{k}z_i^{\downarrow}=\sum_{i=1}^{k}\mu_i$, which yields $z_i^{\downarrow}=\mu_i$.
  Since the rearrangement inequality attains equality (with $\lambda_i$ strictly decreasing), we must have $z_i=\mu_i$ and the $z_i$ are already in descending order.

  Now $Z=U^{\mathsf{T}}QU$ is a symmetric positive semidefinite matrix with diagonal entries $\mu_1,\dots,\mu_N$ and trace $\sum\mu_i$.
  But the eigenvalues of $Z$ are also $\mu_1,\dots,\mu_N$, so
  \begin{equation}
    \Tr(Z^2)=\sum_{i,j}Z_{ij}^2 = \sum_{i=1}^{N}\mu_i^2 + \sum_{i\neq j}Z_{ij}^2.
  \end{equation}
  On the other hand, $\Tr(Z^2)=\sum_{i=1}^{N}\mu_i^2$, hence all off-diagonal entries $Z_{ij}=0$ ($i\neq j$). Thus $Z=\diag(\mu_1,\dots,\mu_N)$, i.e., $U^{\mathsf{T}}QU=\diag(\mu_1,\dots,\mu_N)$. Meanwhile, $U^{\mathsf{T}}PU=\Lambda=\diag(\lambda_1,\dots,\lambda_N)$ holds by construction. Therefore $P$ and $Q$ are simultaneously diagonalized by the same orthogonal matrix, and the eigenvalues are arranged in descending order.
  If there were repeated eigenvalues, the conclusion remains that there exists an orthogonal matrix simultaneously diagonalizing them. Translating back to the original matrices does not destroy this property. The lemma is proved.
\end{proof}

With the lemma established, we now return to the proof of \ThmRef{thm:von_neumann_rect}.

\begin{proof}[Proof of Theorem \ref{thm:von_neumann_rect}]

  Let $N=\max(m,n)$. Pad $X$ and $Y$ with zeros to form square matrices:
  \begin{align}
    \widetilde{X}=
    \begin{cases}
      \begin{pmatrix} X \\ 0_{(n-m)\times n} \end{pmatrix},           & m\le n, \\
      \begin{pmatrix} X & 0_{m\times(m-n)} \end{pmatrix}, & m>n,
    \end{cases} \\
    \widetilde{Y}=
    \begin{cases}
      \begin{pmatrix} Y \\ 0_{(n-m)\times n} \end{pmatrix},           & m\le n, \\
      \begin{pmatrix} Y & 0_{m\times(m-n)} \end{pmatrix}, & m>n.
    \end{cases}
    \label{eq:fill}
  \end{align}
  In either case, $\widetilde{X},\widetilde{Y}\in\Real^{N\times N}$, and
  \begin{equation}
    \Tr(\widetilde{X}^{\mathsf{T}}\widetilde{Y})=\Tr(X^{\mathsf{T}}Y).
    \label{eq:trace_eq}
  \end{equation}
  Moreover, the singular values of $\widetilde{X}$ are $\sigma_1(X),\dots,\sigma_r(X)$ together with $N-r$ zeros, and similarly for $\widetilde{Y}$.
  If the trace inequality for square matrices holds, then
  \begin{equation}
    \begin{aligned}
      \Tr(X^{\mathsf{T}}Y) & =\Tr(\widetilde{X}^{\mathsf{T}}\widetilde{Y})                   \\
                           & \le\sum_{i=1}^{N}\sigma_i(\widetilde{X})\sigma_i(\widetilde{Y})
      =\sum_{i=1}^{r}\sigma_i(X)\sigma_i(Y).
    \end{aligned}
  \end{equation}
  Therefore, it suffices to prove the square matrix version for $m=n$, and to verify that the equality condition corresponds accordingly.
  Let $A,B\in\Real^{n\times n}$ have singular values $\sigma_1\ge\cdots\ge\sigma_n\ge0$ and $\tau_1\ge\cdots\ge\tau_n\ge0$, respectively.
  Construct the $2n\times 2n$ real symmetric matrices
  \begin{equation}
    \mathcal{A}= \begin{pmatrix} 0 & A \\ A^{\mathsf{T}} & 0 \end{pmatrix},\quad
    \mathcal{B}= \begin{pmatrix} 0 & B \\ B^{\mathsf{T}} & 0 \end{pmatrix}.
    \label{eq:sym_mat}
  \end{equation}
  A direct computation gives
  \begin{equation}
    \mathcal{A}\mathcal{B}= \begin{pmatrix} AB^{\mathsf{T}} & 0 \\ 0 & A^{\mathsf{T}}B \end{pmatrix},
  \end{equation}
  hence
  \begin{equation}
    \Tr(\mathcal{A}\mathcal{B}) = \Tr(AB^{\mathsf{T}})+\Tr(A^{\mathsf{T}}B)=2\Tr(A^{\mathsf{T}}B).
    \label{eq:trace_sym_prod}
  \end{equation}

  To obtain the eigenvalues of $\mathcal{A}$, take a full singular value decomposition of $A$ as $A=U\Sigma V^{\mathsf{T}}$, where $U, V$ are orthogonal and $\Sigma=\diag(\sigma_1,\dots,\sigma_n)$.
  Then
  \[
    \mathcal{A}= \begin{pmatrix} U & 0 \\ 0 & V \end{pmatrix}
    \begin{pmatrix} 0 & \Sigma \\ \Sigma & 0 \end{pmatrix}
    \begin{pmatrix} U^{\mathsf{T}} & 0 \\ 0 & V^{\mathsf{T}} \end{pmatrix}.
  \]
  The matrix $\left(\begin{smallmatrix}0&\Sigma\\ \Sigma&0\end{smallmatrix}\right)$ is orthogonally similar to $\bigoplus_{i=1}^{n}\left(\begin{smallmatrix}0&\sigma_i\\ \sigma_i&0\end{smallmatrix}\right)$,
  and each $2\times2$ block has eigenvalues $\pm\sigma_i$. Hence the eigenvalues of $\mathcal{A}$ (in descending order) are
  \begin{equation}
    \lambda(\mathcal{A}) : \sigma_1,\sigma_2,\dots,\sigma_n,-\sigma_n,-\sigma_{n-1},\dots,-\sigma_1.
    \label{eq:eig_A}
  \end{equation}
  Similarly, the eigenvalues of $\mathcal{B}$ in descending order are
  \begin{equation}
    \lambda(\mathcal{B}) : \tau_1,\tau_2,\dots,\tau_n,-\tau_n,-\tau_{n-1},\dots,-\tau_1.
    \label{eq:eig_B}
  \end{equation}

  We now apply \LemRef{lem:sym_trace} to $\mathcal{A}$ and $\mathcal{B}$ (here $N=2n$):

  \begin{equation}
    \begin{aligned}
      \Tr(\mathcal{A}\mathcal{B})
       & \le\sum_{i=1}^{2n}\lambda_i(\mathcal{A})\lambda_i(\mathcal{B})             \\
       & =\sum_{i=1}^{n}\bigl(\sigma_i\tau_i+(-\sigma_{n-i+1})(-\tau_{n-i+1})\bigr) \\
       & =2\sum_{i=1}^{n}\sigma_i\tau_i.
    \end{aligned}
  \end{equation}

  Combining with \EquRef{eq:trace_sym_prod} immediately yields the inequality for square matrices
  \begin{equation}
    \Tr(A^{\mathsf{T}}B)\le\sum_{i=1}^{n}\sigma_i(A)\sigma_i(B).
    \label{eq:square_ineq}
  \end{equation}

  In the square matrix case, by the equality condition of \LemRef{lem:sym_trace}, $\Tr(A^{\mathsf{T}}B)=\sum_{i=1}^{n}\sigma_i\tau_i$ holds if and only if there exists an orthogonal matrix $\mathcal{U}$ that simultaneously diagonalizes $\mathcal{A}$ and $\mathcal{B}$. This is equivalent to the existence of orthogonal matrices $U,V\in\Real^{n\times n}$ such that
  \begin{equation}
    A=U\Sigma_A V^{\mathsf{T}},\qquad B=U\Sigma_B V^{\mathsf{T}},
  \end{equation}
  where $\Sigma_A=\diag(\sigma_1,\dots,\sigma_n)$ and $\Sigma_B=\diag(\tau_1,\dots,\tau_n)$. That is, $A$ and $B$ share the same left and right singular vectors, with corresponding singular values in the same order.

  Now return to the padded square matrices $\widetilde{X},\widetilde{Y}\in\Real^{N\times N}$. If equality holds in the rectangular inequality \EquRef{eq:main_ineq}, then equality holds for the square matrix case as well. Hence there exist orthogonal matrices $\widetilde{U},\widetilde{V}\in\Real^{N\times N}$ such that
  \begin{equation}
    \widetilde{X}=\widetilde{U}\widetilde{\Sigma}_X\widetilde{V}^{\mathsf{T}},\quad
    \widetilde{Y}=\widetilde{U}\widetilde{\Sigma}_Y\widetilde{V}^{\mathsf{T}},
  \end{equation}
  where $\widetilde{\Sigma}_X=\diag(\sigma_1(X),\dots,\sigma_r(X),0,\dots,0)$, and similarly for $\widetilde{\Sigma}_Y$.
  According to the padding construction (see \EquRef{eq:fill}):
  \begin{enumerate}
    \item[1)] If $m\le n$ ($N=n$), the last $n-m$ rows of $\widetilde{X}$ are zero, so the last $n-m$ rows of the first $r$ columns of $\widetilde{U}$ (corresponding to the nonzero singular values) are zero. Taking the first $m$ rows of these columns gives a matrix $U\in\Real^{m\times r}$ with $U^{\mathsf{T}}U=I_r$, the first $r$ columns of $\widetilde{V}$ are denoted $V\in\Real^{n\times r}$ and satisfy $V^{\mathsf{T}}V=I_r$.
    \item[2)] If $m>n$ ($N=m$), symmetrically, the last $m-n$ rows of the first $r$ columns of $\widetilde{V}$ (corresponding to the nonzero singular values) are zero. Taking the first $n$ rows yields $V\in\Real^{n\times r}$ with $V^{\mathsf{T}}V=I_r$, the first $r$ columns of $\widetilde{U}$ give $U\in\Real^{m\times r}$ with $U^{\mathsf{T}}U=I_r$.
  \end{enumerate}

  In either case we obtain
  \begin{equation}
    X=U\Sigma_X V^{\mathsf{T}},\qquad Y=U\Sigma_Y V^{\mathsf{T}},
  \end{equation}
  where $\Sigma_X=\diag(\sigma_1(X),\dots,\sigma_r(X))$ and $\Sigma_Y=\diag(\sigma_1(Y),\dots,\sigma_r(Y))$.
  This is precisely the necessary and sufficient condition for equality to hold in the rectangular case.

  This completes the proof of \ThmRef{thm:von_neumann_rect}.

\end{proof}

\section{The Norm of MABK and EMABK Bell Polynomials is Constant}\label{app:mabk-norm}

This appendix proves that the Frobenius norm of the effective-coefficient tensor for both the standard MABK and the EMABK constructions is constant and independent of the number of parties, the specific measurement directions, and the choice of the constituent MABK blocks.

\begin{theorem}
  Under the iterative formula
  \begin{align}
    \mathcal{B}_2 & =\frac{1}{2} \left(M^{(1)}_1 M^{(2)}_1 + M^{(1)}_1 M^{(2)}_2 + M^{(1)}_2 M^{(2)}_1 - M^{(1)}_2 M^{(2)} _2\right),                                                         \\
    \mathcal{B}_n & =\frac{1}{2} \left[  \mathcal{B}_{n-1}^{(++)}\left(M^{(n)}_1 + M^{(n)}_2\right) + \mathcal{B}_{n-1}^{(+-)}\left(M^{(n)}_1 - M^{(n)}_2\right) \right]\label{eq:MABK-itera}
  \end{align}
  the Frobenius norm of the MABK effective coefficient tensor satisfies
  \begin{equation}
    \|\mathbf{C}\|_F^2 =1
    \label{eq:MABK_C_norm}
  \end{equation}
  and is independent of the specific measurement directions. Here, $\mathcal{B}_{n-1}^{(+-)}$ denotes the operator obtained by interchanging all measurement settings $1\leftrightarrow 2$ in $\mathcal{B}_{n-1}^{(++)}$.

\end{theorem}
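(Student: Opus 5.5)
The plan is to prove \EquRef{eq:MABK_C_norm} by induction on $n$, working directly with the effective-coefficient tensor as a sum of tensor products of Bloch vectors. The key observation is that \EquRef{eq:C-def} turns each Bell polynomial into a tensor by replacing every observable $M_i^{(k)}$ with its unit Bloch vector $\hat m_i^{(k)}\in\Real^3$ and every operator product with a tensor product. Hence the recursion \EquRef{eq:MABK-itera} becomes a recursion on tensors:
\begin{equation}
  \mathbf C_n=\tfrac12\bigl[\mathbf C_{n-1}^{(++)}\otimes(\hat m_1^{(n)}+\hat m_2^{(n)})+\mathbf C_{n-1}^{(+-)}\otimes(\hat m_1^{(n)}-\hat m_2^{(n)})\bigr].
\end{equation}
The Frobenius inner product factorizes over tensor products: $\langle X\otimes u, Y\otimes v\rangle_F=\langle X,Y\rangle_F\,(u\cdot v)$.

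\textbf{Base case.} Take $n=2$ and write $\mathbf C_2=\tfrac12[\hat m_1^{(1)}\otimes(\hat m_1^{(2)}+\hat m_2^{(2)})+\hat m_2^{(1)}\otimes(\hat m_1^{(2)}-\hat m_2^{(2)})]$. Expanding the squared norm gives
\begin{equation}
  \tfrac14\bigl[|\hat m_1^{(2)}+\hat m_2^{(2)}|^2+|\hat m_1^{(2)}-\hat m_2^{(2)}|^2\bigr]
  +\tfrac12(\hat m_1^{(1)}\cdot\hat m_2^{(1)})\,(\hat m_1^{(2)}+\hat m_2^{(2)})\cdot(\hat m_1^{(2)}-\hat m_2^{(2)}).
\end{equation}
The cross term vanishes because $(\hat m_1+\hat m_2)\cdot(\hat m_1-\hat m_2)=|\hat m_1|^2-|\hat m_2|^2=0$ for unit vectors. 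The parallelogram law then gives $\tfrac14\cdot4=1$, for arbitrary directions, including coinciding ones.

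\textbf{Inductive step.} The hypothesis is stated for every MABK polynomial on $n-1$ parties with arbitrary unit directions. This matters because $\mathbf C_{n-1}^{(+-)}$ is again an MABK tensor, only with the roles of the two settings swapped at each site, so it also has unit norm. Expanding $\|\mathbf C_n\|_F^2$ gives three pieces. The two diagonal terms are $\tfrac14\|\mathbf C_{n-1}^{(\pm)}\|_F^2\,|\hat m_1^{(n)}\pm\hat m_2^{(n)}|^2$. The cross term is proportional to $\langle\mathbf C^{(++)}_{n-1},\mathbf C^{(+-)}_{n-1}\rangle_F\,(|\hat m_1^{(n)}|^2-|\hat m_2^{(n)}|^2)=0$. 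So the total is $\tfrac14(|\hat m_1+\hat m_2|^2+|\hat m_1-\hat m_2|^2)=1$.

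\textbf{EMABK and the main difficulty.} The same computation applies verbatim to \EquRef{eq:EMABK-rec}, with $\mathbf C^{(1)}_{n-1},\mathbf C^{(2)}_{n-1}$ two independently parametrized MABK tensors, each of unit norm by the MABK result. The main point to watch is exactly this cross term. One might worry that it requires some orthogonality between the two lower-order blocks, which in general fails, e.g.\ under the degenerate settings of Strategy~II. The step goes through without any such assumption because the last-party factor $|\hat m_1|^2-|\hat m_2|^2$ kills the cross term regardless of $\langle\mathbf C^{(1)},\mathbf C^{(2)}\rangle_F$.
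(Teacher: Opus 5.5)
Your proof is correct, but for the MABK statement it takes a different route from the paper.

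\textbf{The paper's route.} It first derives the coupled real recursions for $C^{(n)}$ and its $1\leftrightarrow2$ partner $\widetilde C^{(n)}$. It then decouples them through the complex encoding $Z^{(n)}=C^{(n)}+i\widetilde C^{(n)}$. This tensor obeys a purely multiplicative recursion, so it is a rank-one complex tensor $Z^{(n)}=v^{(1)}\otimes\cdots\otimes v^{(n)}$, with $v^{(1)}=\vec a_1^{(1)}+i\vec a_2^{(1)}$ and $v^{(k)}=\frac{1-i}{2}\vec a_1^{(k)}+\frac{1+i}{2}\vec a_2^{(k)}$ for $k\ge2$. The identity $(\operatorname{Re}z)^2=\frac14(z^2+2|z|^2+\bar z^2)$ then factorizes the norm over parties:
$\|\mathbf C^{(n)}\|_F^2=\frac12\prod_k\|v^{(k)}\|^2+\frac12\operatorname{Re}\prod_k(v^{(k)}\cdot v^{(k)})=1+\frac12\operatorname{Re}\bigl(2i\prod_k\cos\theta_k\bigr)=1$, where $\cos\theta_k=\vec a_1^{(k)}\cdot\vec a_2^{(k)}$.

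\textbf{Your route.} You run a direct real induction. Quantifying the hypothesis over all unit directions lets you apply it to both $\mathbf C^{(++)}_{n-1}$ and $\mathbf C^{(+-)}_{n-1}$. The identity $(\hat m_1^{(n)}+\hat m_2^{(n)})\cdot(\hat m_1^{(n)}-\hat m_2^{(n)})=0$ then removes the cross term, whatever the overlap of the two blocks. This is exactly the computation the paper uses only for the EMABK theorem, after importing the MABK result for the blocks.

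\textbf{What each buys.} Your version is shorter, stays entirely real, and handles MABK and EMABK with a single computation. It also shows more generally that any recursive step of the EMABK form whose two $(n-1)$-party blocks have unit-norm coefficient tensors again yields unit norm. The complex encoding gives structure your induction does not produce. It yields the closed form $C^{(n)}=\operatorname{Re}[v^{(1)}\otimes\cdots\otimes v^{(n)}]$. It also gives the block overlap $\langle C^{(n)},\widetilde C^{(n)}\rangle=\frac12\operatorname{Im}\prod_k(v^{(k)}\cdot v^{(k)})=\prod_k\cos\theta_k$. This confirms that the cross term you discard is generically nonzero and vanishes only because of the last-party factor.
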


The proof of this theorem relies on the following lemmas and auxiliary results, which we establish first.

\begin{lemma}[Recursive Relation of the Effective Coefficient Tensor]
  \label{lem:spatial-recursion}
  Let the measurement operator of the $k$-th party be $M_i^{(k)} = \sum_{p=1}^3 a_{ip}^{(k)} \sigma_p^{(k)}$ ($i=1,2$),
  and define the sum and difference vectors
  \begin{equation}
    S^{(k)}_p = a_{1p}^{(k)} + a_{2p}^{(k)}, \quad
    D^{(k)}_p = a_{1p}^{(k)} - a_{2p}^{(k)}.
  \end{equation}
  Expanding $\mathcal{B}_n$ in the product basis of Pauli matrices:
  \begin{equation}
    \mathcal{B}_n = \sum_{p_1,\dots,p_n=1}^3 C^{(n)}_{p_1\dots p_n}\, \sigma_{p_1}^{(1)}\cdots\sigma_{p_n}^{(n)},
  \end{equation}
  where $C^{(n)}_{p_1\dots p_n}$ is the effective coefficient tensor of $\mathcal{B}_n^{(++)}$, and $\widetilde{C}^{(n)}$ is the effective coefficient tensor of $\mathcal{B}_n^{(+-)}$ (obtained by interchanging all $M_1\leftrightarrow M_2$).
  Then for $n\ge 3$, the tensors satisfy the recursion:
  \begin{equation}\label{eq:spatial-recursion}
    \begin{aligned}
      C^{(n)}_{p_1\dots p_n}             & = \frac{1}{2}\Bigl(S^{(n)}_{p_n}\,C^{(n-1)}_{p_1\dots p_{n-1}} + D^{(n)}_{p_n}\,\widetilde{C}^{(n-1)}_{p_1\dots p_{n-1}}\Bigr), \\[4pt]
      \widetilde{C}^{(n)}_{p_1\dots p_n} & = \frac{1}{2}\Bigl(S^{(n)}_{p_n}\,\widetilde{C}^{(n-1)}_{p_1\dots p_{n-1}} - D^{(n)}_{p_n}\,C^{(n-1)}_{p_1\dots p_{n-1}}\Bigr).
    \end{aligned}
  \end{equation}
  The initial conditions for $n=2$ are
  \begin{equation}
    \label{eq:spatial-init}
    \begin{aligned}
      C^{(2)}_{p_1p_2}             & = \frac{1}{2}\bigl(a^{(1)}_{1p_1}S^{(2)}_{p_2} + a^{(1)}_{2p_1}D^{(2)}_{p_2}\bigr), \\[2pt]
      \widetilde{C}^{(2)}_{p_1p_2} & = \frac{1}{2}\bigl(a^{(1)}_{2p_1}S^{(2)}_{p_2} - a^{(1)}_{1p_1}D^{(2)}_{p_2}\bigr).
    \end{aligned}
  \end{equation}
\end{lemma}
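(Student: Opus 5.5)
The plan is to prove the recursion by direct multilinear expansion of the defining recursion in the Pauli product basis, then obtain the initial conditions from the bipartite polynomial $\mathcal{B}_2$ in the same way.

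First, substitute $M_i^{(n)}=\sum_p a^{(n)}_{ip}\sigma^{(n)}_p$ into the last-party factors. This gives
\begin{equation*}
M_1^{(n)}+M_2^{(n)}=\sum_{p_n}S^{(n)}_{p_n}\sigma^{(n)}_{p_n},\qquad M_1^{(n)}-M_2^{(n)}=\sum_{p_n}D^{(n)}_{p_n}\sigma^{(n)}_{p_n}.
\end{equation*}
Inserting the expansions $\mathcal{B}_{n-1}^{(++)}=\sum C^{(n-1)}_{p_1\dots p_{n-1}}\sigma_{p_1}\cdots\sigma_{p_{n-1}}$ and $\mathcal{B}_{n-1}^{(+-)}=\sum \widetilde C^{(n-1)}_{p_1\dots p_{n-1}}\sigma_{p_1}\cdots\sigma_{p_{n-1}}$ into \EquRef{eq:MABK-itera} and collecting the coefficient of $\sigma^{(1)}_{p_1}\cdots\sigma^{(n)}_{p_n}$ gives the first line of \EquRef{eq:spatial-recursion}. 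This identification is legitimate because Pauli product strings form a linear basis. The coefficients are also well defined, since the operators on different parties act on different tensor factors.

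For the second line, I would use that $\mathcal{B}_n^{(+-)}$ is obtained from $\mathcal{B}_n^{(++)}$ by swapping $1\leftrightarrow2$ at every site, including site $n$. Under this swap $S^{(n)}\mapsto S^{(n)}$ and $D^{(n)}\mapsto -D^{(n)}$. The lower blocks also change, with $\mathcal{B}_{n-1}^{(++)}\mapsto\mathcal{B}_{n-1}^{(+-)}$ and $\mathcal{B}_{n-1}^{(+-)}\mapsto\mathcal{B}_{n-1}^{(++)}$, because swapping twice is the identity. Applying the swap to \EquRef{eq:MABK-itera} therefore yields
\begin{equation*}
\mathcal{B}_n^{(+-)}=\tfrac12\bigl[\mathcal{B}_{n-1}^{(+-)}(M_1^{(n)}+M_2^{(n)})-\mathcal{B}_{n-1}^{(++)}(M_1^{(n)}-M_2^{(n)})\bigr].
\end{equation*}
Expanding this exactly as in the first step gives the second line.

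The one point needing care is that the global swap acts on every earlier site, not only the last. This must be justified by an inductive claim that the swap maps $\mathcal{B}_{m}^{(++)}$ to $\mathcal{B}_{m}^{(+-)}$ for all $m$. The claim holds by definition of $\mathcal{B}^{(+-)}$, and involution gives the reverse direction.

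For $n=2$, I would rewrite $\mathcal{B}_2=\tfrac12[M_1^{(1)}(M_1^{(2)}+M_2^{(2)})+M_2^{(1)}(M_1^{(2)}-M_2^{(2)})]$. Expanding it yields the first line of \EquRef{eq:spatial-init}. Applying the swap at both sites turns this into $\tfrac12[M_2^{(1)}(M_1^{(2)}+M_2^{(2)})-M_1^{(1)}(M_1^{(2)}-M_2^{(2)})]$, which gives $\widetilde C^{(2)}$.

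I do not expect a real obstacle, since every step is bookkeeping. The most error-prone step is the sign in the $\widetilde C$ recursion, which comes from $D\mapsto -D$, so I would check it explicitly at $n=3$ against a direct expansion.
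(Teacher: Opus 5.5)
Your proposal is correct and follows essentially the same route as the paper: you expand $M_1^{(n)}\pm M_2^{(n)}$ via $S^{(n)},D^{(n)}$, compare coefficients using linear independence of the Pauli product strings, and obtain the $\widetilde C$ recursion and $\widetilde C^{(2)}$ from the global $1\leftrightarrow2$ swap, under which $S\mapsto S$, $D\mapsto -D$ and $C^{(n-1)}\leftrightarrow\widetilde C^{(n-1)}$. The only difference is presentation: you write out the swapped operator identity for $\mathcal{B}_n^{(+-)}$ explicitly, whereas the paper does the same sign bookkeeping directly on the coefficients.
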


\begin{proof}
  We first address the recursive relation for $n\ge 3$. From the recursive construction of MABK, we have
  \begin{equation}
    \label{eq:MABK-recursion-proof}
    \mathcal{B}_n = \frac{1}{2}\Bigl[\mathcal{B}_{n-1}^{(++)}\otimes\bigl(M_1^{(n)}+M_2^{(n)}\bigr) + \mathcal{B}_{n-1}^{(+-)}\otimes\bigl(M_1^{(n)}-M_2^{(n)}\bigr)\Bigr],
  \end{equation}
  where $\mathcal{B}_{n-1}^{(++)} = \mathcal{B}_{n-1}$ is the standard $(n-1)$-party MABK operator, and $\mathcal{B}_{n-1}^{(+-)} = \widetilde{\mathcal{B}}_{n-1}$ is the operator obtained by interchanging all measurement settings $1\leftrightarrow 2$.
  Expanding $\mathcal{B}_{n-1}$ and $\widetilde{\mathcal{B}}_{n-1}$ in terms of the effective coefficient tensors:
  \begin{align}
    \mathcal{B}_{n-1}             & = \sum_{p_1,\dots,p_{n-1}} C^{(n-1)}_{p_1\dots p_{n-1}}\, \sigma_{p_1}^{(1)}\cdots\sigma_{p_{n-1}}^{(n-1)}, \label{eq:B-expand}                   \\
    \widetilde{\mathcal{B}}_{n-1} & = \sum_{p_1,\dots,p_{n-1}} \widetilde{C}^{(n-1)}_{p_1\dots p_{n-1}}\, \sigma_{p_1}^{(1)}\cdots\sigma_{p_{n-1}}^{(n-1)}. \label{eq:B-tilde-expand}
  \end{align}
  Meanwhile, using the definitions of the sum and difference vectors, the measurement combinations for the $n$-th party can be written as
  \begin{align}
    M_1^{(n)}+M_2^{(n)} & = \sum_{p_n=1}^3 \bigl(a_{1p_n}^{(n)}+a_{2p_n}^{(n)}\bigr)\sigma_{p_n}^{(n)} = \sum_{p_n} S^{(n)}_{p_n}\sigma_{p_n}^{(n)}, \label{eq:S-def} \\
    M_1^{(n)}-M_2^{(n)} & = \sum_{p_n=1}^3 \bigl(a_{1p_n}^{(n)}-a_{2p_n}^{(n)}\bigr)\sigma_{p_n}^{(n)} = \sum_{p_n} D^{(n)}_{p_n}\sigma_{p_n}^{(n)}. \label{eq:D-def}
  \end{align}

  Substituting \EquRef{eq:B-expand}--\EquRef{eq:D-def} into \EquRef{eq:MABK-recursion-proof}:
  \begin{align*}
    \mathcal{B}_n = \frac{1}{2}\Biggl[
      & \Bigl(\sum_{P} C^{(n-1)}_P \sigma_P\Bigr) \otimes \Bigl(\sum_{p_n} S^{(n)}_{p_n} \sigma_{p_n}^{(n)}\Bigr)             \\
    + & \Bigl(\sum_{P} \widetilde{C}^{(n-1)}_P \sigma_P\Bigr) \otimes \Bigl(\sum_{p_n} D^{(n)}_{p_n} \sigma_{p_n}^{(n)}\Bigr)
      \Biggr],
  \end{align*}
  where $P=(p_1,\dots,p_{n-1})$ denotes the first $n-1$ spatial indices, and $\sigma_P = \sigma_{p_1}^{(1)}\cdots\sigma_{p_{n-1}}^{(n-1)}$.

  Expanding the two tensor products and combining the sums:
  \begin{align}
    \mathcal{B}_n & = \frac{1}{2} \sum_{P} \sum_{p_n} \Bigl( C^{(n-1)}_P S^{(n)}_{p_n} + \widetilde{C}^{(n-1)}_P D^{(n)}_{p_n} \Bigr) \sigma_P \otimes \sigma_{p_n}^{(n)} \nonumber                                 \\
                  & = \sum_{p_1,\dots,p_n} \frac{1}{2}\Bigl( C^{(n-1)}_{p_1\dots p_{n-1}} S^{(n)}_{p_n} + \widetilde{C}^{(n-1)}_{p_1\dots p_{n-1}} D^{(n)}_{p_n} \Bigr) \sigma_{p_1}^{(1)}\cdots\sigma_{p_n}^{(n)}.
    \label{eq:Bn-expanded}
  \end{align}
  Since $\{\sigma_{p_1}^{(1)}\cdots\sigma_{p_n}^{(n)}\}_{p_1,\dots,p_n=1}^3$ forms a linearly independent basis (different combinations of spatial indices provide a basis for the $n$-body operator space), the coefficient of $\sigma_{p_1}^{(1)}\cdots\sigma_{p_n}^{(n)}$ in \EquRef{eq:Bn-expanded} must equal $C^{(n)}_{p_1\dots p_n}$. Comparing term by term yields
  \begin{equation}
    C^{(n)}_{p_1\dots p_n} = \frac{1}{2}\Bigl( C^{(n-1)}_{p_1\dots p_{n-1}} S^{(n)}_{p_n} + \widetilde{C}^{(n-1)}_{p_1\dots p_{n-1}} D^{(n)}_{p_n} \Bigr).
  \end{equation}

  To derive the recursion for $\widetilde{C}^{(n)}$, consider interchanging all $M_1\leftrightarrow M_2$ in $\mathcal{B}_n$. The effect of this operation is as follows: (1) For the $n$-th party: $S^{(n)} = \vec a_1^{(n)}+\vec a_2^{(n)} \to \vec a_2^{(n)}+\vec a_1^{(n)} = S^{(n)}$ (unchanged), while $D^{(n)} = \vec a_1^{(n)}-\vec a_2^{(n)} \to \vec a_2^{(n)}-\vec a_1^{(n)} = -D^{(n)}$ (sign flip); (2) For the first $n-1$ parties: $C^{(n-1)} \leftrightarrow \widetilde{C}^{(n-1)}$.
  Therefore
  \begin{equation}
    \widetilde{C}^{(n)}_{p_1\dots p_n} = \frac{1}{2}\Bigl( \widetilde{C}^{(n-1)}_{p_1\dots p_{n-1}} S^{(n)}_{p_n} - C^{(n-1)}_{p_1\dots p_{n-1}} D^{(n)}_{p_n} \Bigr).
  \end{equation}
  This proves \EquRef{eq:spatial-recursion}.

  Next, we verify the initial condition for $n=2$. In the normalized version, the CHSH coefficients are $c_{11}=c_{12}=c_{21}=1/2$, $c_{22}=-1/2$. According to the definition of the effective coefficient tensor:
  \begin{align}
    C^{(2)}_{p_1p_2} & = \sum_{i_1,i_2=1}^2 c_{i_1i_2}\, a^{(1)}_{i_1p_1} a^{(2)}_{i_2p_2}                                                                                                       \nonumber \\
                     & = \frac12\,a_{1p_1}^{(1)}a_{1p_2}^{(2)} + \frac12\,a_{1p_1}^{(1)}a_{2p_2}^{(2)} + \frac12\,a_{2p_1}^{(1)}a_{1p_2}^{(2)} - \frac12\,a_{2p_1}^{(1)}a_{2p_2}^{(2)} \nonumber           \\
                     & = \frac12\Bigl( a_{1p_1}^{(1)}\bigl(a_{1p_2}^{(2)}+a_{2p_2}^{(2)}\bigr) + a_{2p_1}^{(1)}\bigl(a_{1p_2}^{(2)}-a_{2p_2}^{(2)}\bigr) \Bigr) \nonumber                                  \\
                     & = \frac12\bigl( a_{1p_1}^{(1)}S^{(2)}_{p_2} + a_{2p_1}^{(1)}D^{(2)}_{p_2} \bigr).
  \end{align}

  For $\widetilde{C}^{(2)}$, it is the expansion coefficient of the operator obtained by interchanging all $M_1\leftrightarrow M_2$ in $\mathcal{B}_2$. Equivalently, $\widetilde{C}^{(2)}$ corresponds to interchanging all $\vec a_1^{(k)}$ and $\vec a_2^{(k)}$ in $C^{(2)}$. For $k=1$, $a_{1p_1}^{(1)}\leftrightarrow a_{2p_1}^{(1)}$; for $k=2$, $S^{(2)}$ remains unchanged (since $\vec a_1^{(2)}+\vec a_2^{(2)}$ is symmetric), while $D^{(2)}\to -D^{(2)}$ (since $\vec a_1^{(2)}-\vec a_2^{(2)}$ is antisymmetric). Thus
  \begin{align}
    \widetilde{C}^{(2)}_{p_1p_2} & = \frac12\bigl( a_{2p_1}^{(1)}S^{(2)}_{p_2} + a_{1p_1}^{(1)}(-D^{(2)}_{p_2}) \bigr) \nonumber \\
                                 & = \frac12\bigl( a_{2p_1}^{(1)}S^{(2)}_{p_2} - a_{1p_1}^{(1)}D^{(2)}_{p_2} \bigr).
  \end{align}
  This verifies \EquRef{eq:spatial-init}.
\end{proof}

To disentangle the coupled recursion relations \EquRef{eq:spatial-recursion} into a single multiplicative iteration, we encode the two real tensors into one complex tensor as follows.

\begin{lemma}[Multiplicative Iteration of Complex Tensors]
  \label{lem:complex-iteration}
  Define the complex tensor
  \begin{equation}
    Z^{(n)}_{p_1\dots p_n} = C^{(n)}_{p_1\dots p_n} + i\,\widetilde{C}^{(n)}_{p_1\dots p_n},
  \end{equation}
  then the coupled system \EquRef{eq:spatial-recursion} can be written as
  \begin{equation}
    \label{eq:Z-iteration}
    Z^{(n)}_{p_1\dots p_n} = \frac{1}{2}\Bigl(S^{(n)}_{p_n} - i D^{(n)}_{p_n}\Bigr)\, Z^{(n-1)}_{p_1\dots p_{n-1}}, \qquad (n\ge 3)
  \end{equation}
  where
  \begin{equation}
    \begin{aligned}
      \frac{1}{2}\bigl(S^{(n)}-iD^{(n)}\bigr)
       & = \frac{1-i}{2}\,\vec a_1^{(n)} + \frac{1+i}{2}\,\vec a_2^{(n)}                       \\
       & = \frac{1}{\sqrt{2}}\Bigl(e^{-i\pi/4}\vec a_1^{(n)} + e^{i\pi/4}\vec a_2^{(n)}\Bigr).
    \end{aligned}
  \end{equation}
\end{lemma}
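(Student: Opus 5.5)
The plan is to verify the complex recursion \EquRef{eq:Z-iteration} directly, by combining the two real recursions of \LemRef{lem:spatial-recursion} into one complex equation. The identity is purely algebraic and holds componentwise, so fix an index string $(p_1,\dots,p_n)$. Write $C=C^{(n-1)}_{p_1\dots p_{n-1}}$, $\widetilde C=\widetilde C^{(n-1)}_{p_1\dots p_{n-1}}$, $S=S^{(n)}_{p_n}$ and $D=D^{(n)}_{p_n}$. All four are real numbers.

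First, form $C^{(n)}+i\widetilde C^{(n)}$ using \EquRef{eq:spatial-recursion}:
\begin{equation*}
  Z^{(n)}=\tfrac12\bigl(SC+D\widetilde C\bigr)+\tfrac{i}{2}\bigl(S\widetilde C-DC\bigr)
  =\tfrac12\bigl[S(C+i\widetilde C)-iD(C+i\widetilde C)\bigr].
\end{equation*}
The second equality is a regrouping: the terms $SC+iS\widetilde C$ give $S(C+i\widetilde C)$. The remaining terms $D\widetilde C-iDC$ equal $-iD(C+i\widetilde C)$, because $-iD\cdot i\widetilde C=D\widetilde C$. Factoring out $Z^{(n-1)}=C+i\widetilde C$ then gives $Z^{(n)}=\tfrac12(S-iD)Z^{(n-1)}$, which is \EquRef{eq:Z-iteration}.

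Second, substitute $S=\vec a_1^{(n)}+\vec a_2^{(n)}$ and $D=\vec a_1^{(n)}-\vec a_2^{(n)}$. This gives $\tfrac12(S-iD)=\tfrac{1-i}{2}\vec a_1^{(n)}+\tfrac{1+i}{2}\vec a_2^{(n)}$. The exponential form follows from $1\mp i=\sqrt2\,e^{\mp i\pi/4}$.

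There is no genuine obstacle. The only point needing care is the sign convention. The minus sign in the $\widetilde C^{(n)}$ recursion is exactly what makes the combination close under multiplication by $S-iD$, and not by $S+iD$. So I would check the $D$-terms explicitly, as above. I would also note that the reality of $S$, $D$, $C$ and $\widetilde C$ is what justifies identifying real and imaginary parts, which makes the encoding lossless. One could also record the base case $Z^{(2)}=\tfrac12(\vec a_1^{(1)}-i\vec a_2^{(1)})\otimes\ldots$ from \EquRef{eq:spatial-init}, but the statement only requires $n\ge3$.
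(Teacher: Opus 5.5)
Your proof is correct and is essentially the paper's own: the paper also expands $\tfrac12(S-iD)(C+i\widetilde C)$ into $\tfrac12(SC+D\widetilde C)+\tfrac{i}{2}(S\widetilde C-DC)$, matches this with the recursion to get \EquRef{eq:Z-iteration}, and then rewrites $\tfrac12(S-iD)$ using $1\mp i=\sqrt2\,e^{\mp i\pi/4}$. Your optional aside on $Z^{(2)}$ has a small sign slip: \EquRef{eq:spatial-init} gives $Z^{(2)}=\tfrac12(\vec a_1^{(1)}+i\,\vec a_2^{(1)})(S^{(2)}-iD^{(2)})$, with a plus sign in the first-party factor, though this is not needed for the $n\ge3$ statement.
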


\begin{proof}
  Direct verification yields:
  \begin{align*}
    \frac{1}{2}(S-iD)\cdot(C+i\widetilde{C})
     & = \frac{1}{2}\left(SC + D\widetilde{C}\right) + \frac{i}{2}\left(S\widetilde{C} - DC\right) \\
     & = C^{(n)} + i\,\widetilde{C}^{(n)} = Z^{(n)},
  \end{align*}
  which is consistent with \EquRef{eq:spatial-recursion}. Moreover,
  \begin{align*}
    \frac{1}{2}(S-iD) & = \frac{1}{2}\bigl[(\vec a_1+\vec a_2) - i(\vec a_1-\vec a_2)\bigr] \\
                      & = \frac{1-i}{2}\,\vec a_1 + \frac{1+i}{2}\,\vec a_2,
  \end{align*}
  where $(1-i)/2 = e^{-i\pi/4}/\sqrt{2}$ and $(1+i)/2 =e^{i\pi/4}/\sqrt{2}$, which yields the desired form.
\end{proof}

Iterating the above multiplicative relation yields an explicit tensor-product representation of $Z^{(n)}$, which is summarized below.
\begin{lemma}[Effective Coefficient Tensor Expressed as a Complex Tensor]
  \label{lem:tensor-product}
  Define the complex vectors for each party:
  \begin{align}
    \label{eq:v-vectors}
    v^{(1)} & = \bigl(\vec a_1^{(1)} + i\,\vec a_2^{(1)}\bigr),                                         \\
    v^{(k)} & = \frac{1-i}{2}\,\vec a_1^{(k)} + \frac{1+i}{2}\,\vec a_2^{(k)}, \quad (k\ge 2).\nonumber
  \end{align}
  Then for all $n\ge 2$,
  \begin{equation}
    \label{eq:Z-tensor-product}
    Z^{(n)}_{p_1\dots p_n} = v^{(1)}_{p_1}\, v^{(2)}_{p_2}\cdots v^{(n)}_{p_n},
  \end{equation}
  that is, $Z^{(n)} = v^{(1)}\otimes v^{(2)}\otimes\cdots\otimes v^{(n)}$, and
  \begin{equation}
    \label{eq:C-from-Z}
    C^{(n)}_{p_1\dots p_n} = \operatorname{Re}\!\bigl[Z^{(n)}_{p_1\dots p_n}\bigr].
  \end{equation}
\end{lemma}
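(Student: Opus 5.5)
Induction on $n$ settles the statement, with the $n=2$ case checked directly from the initial conditions \EquRef{eq:spatial-init} and the inductive step supplied by \LemRef{lem:complex-iteration}. One subtlety comes first. The base case in the form stated, $Z^{(2)}=v^{(1)}\otimes v^{(2)}$ with $v^{(1)}=\vec a_1^{(1)}+i\vec a_2^{(1)}$ and $v^{(2)}=\tfrac{1-i}{2}\vec a_1^{(2)}+\tfrac{1+i}{2}\vec a_2^{(2)}$, must be verified on real and imaginary parts separately. Write $v^{(2)}=\tfrac12(S^{(2)}-iD^{(2)})$, as in \LemRef{lem:complex-iteration}. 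Then
\begin{equation*}
  v^{(1)}_{p_1}v^{(2)}_{p_2}
  =\tfrac12\bigl(a^{(1)}_{1p_1}+i a^{(1)}_{2p_1}\bigr)\bigl(S^{(2)}_{p_2}-iD^{(2)}_{p_2}\bigr)
  =\tfrac12\bigl(a^{(1)}_{1p_1}S^{(2)}_{p_2}+a^{(1)}_{2p_1}D^{(2)}_{p_2}\bigr)
  +\tfrac{i}{2}\bigl(a^{(1)}_{2p_1}S^{(2)}_{p_2}-a^{(1)}_{1p_1}D^{(2)}_{p_2}\bigr).
\end{equation*}
The real part is exactly $C^{(2)}_{p_1p_2}$ and the imaginary part is exactly $\widetilde C^{(2)}_{p_1p_2}$ from \EquRef{eq:spatial-init}. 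Hence $Z^{(2)}=C^{(2)}+i\widetilde C^{(2)}=v^{(1)}\otimes v^{(2)}$.

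For the inductive step, assume $Z^{(n-1)}=v^{(1)}\otimes\cdots\otimes v^{(n-1)}$ for some $n\ge3$. \LemRef{lem:complex-iteration} gives $Z^{(n)}_{p_1\dots p_n}=\tfrac12\bigl(S^{(n)}_{p_n}-iD^{(n)}_{p_n}\bigr)Z^{(n-1)}_{p_1\dots p_{n-1}}$. The prefactor is precisely $v^{(n)}_{p_n}$ by the identity established in that lemma. Therefore $Z^{(n)}=v^{(1)}\otimes\cdots\otimes v^{(n)}$, which closes the induction and proves \EquRef{eq:Z-tensor-product}.

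Finally, \EquRef{eq:C-from-Z} follows immediately from the definition $Z^{(n)}=C^{(n)}+i\widetilde C^{(n)}$ in \LemRef{lem:complex-iteration}. Both $C^{(n)}$ and $\widetilde C^{(n)}$ are real tensors, since they are built from real Bloch-vector components through the real recursion \EquRef{eq:spatial-recursion}, so $\operatorname{Re}Z^{(n)}=C^{(n)}$.

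The only point needing care is the base case. The first party's vector has a different normalization and phase from the others, and one must confirm that this asymmetric choice reproduces both $C^{(2)}$ and $\widetilde C^{(2)}$, not just $C^{(2)}$. The computation above shows that it does. Everything else is bookkeeping.
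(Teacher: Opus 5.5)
Your proof is correct and follows the paper's own argument. The base case $n=2$ is obtained by factoring $Z^{(2)}=\tfrac12\bigl(\vec a_1^{(1)}+i\vec a_2^{(1)}\bigr)\bigl(S^{(2)}-iD^{(2)}\bigr)$ from the initial conditions, the inductive step is the multiplicative iteration of \LemRef{lem:complex-iteration}, and the real-part identity comes from the definition of $Z^{(n)}$; your explicit remark that $C^{(n)}$ and $\widetilde C^{(n)}$ are real tensors is a small detail the paper leaves implicit.
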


\begin{proof}
  By \LemRef{lem:complex-iteration}, for $n\ge 3$ we have $Z^{(n)}_{p_1\dots p_n} = v^{(n)}_{p_n} Z^{(n-1)}_{p_1\dots p_{n-1}}$.
  For $n=2$, from \EquRef{eq:spatial-init}:
  \begin{align*}
    Z^{(2)}_{p_1p_2} & = C^{(2)}_{p_1p_2} + i\,\widetilde{C}^{(2)}_{p_1p_2}                                                                                                                    \\
                     & = \frac{1}{2}\Bigl[\bigl(a_{1p_1}^{(1)}S^{(2)}_{p_2} + a_{2p_1}^{(1)}D^{(2)}_{p_2}\bigr) + i\bigl(a_{2p_1}^{(1)}S^{(2)}_{p_2} - a_{1p_1}^{(1)}D^{(2)}_{p_2}\bigr)\Bigr] \\
                     & = \frac{1}{2}\Bigl[ a_{1p_1}^{(1)}(S^{(2)}_{p_2}-iD^{(2)}_{p_2}) + i\,a_{2p_1}^{(1)}(S^{(2)}_{p_2}-iD^{(2)}_{p_2}) \Bigr]                                               \\
                     & = \frac{1}{2}(a_{1p_1}^{(1)}+i\,a_{2p_1}^{(1)})(S^{(2)}_{p_2}-iD^{(2)}_{p_2})                                                                                           \\
                     & = \frac{1}{2}v^{(1)}_{p_1}\cdot 2v^{(2)}_{p_2} = v^{(1)}_{p_1}v^{(2)}_{p_2}.
  \end{align*}
  Equation \EquRef{eq:Z-tensor-product} then follows by induction. Equation \EquRef{eq:C-from-Z} follows directly from the definition $C^{(n)} = \operatorname{Re}[Z^{(n)}]$.
\end{proof}

With the tensor-product structure established, we now turn to computing the Frobenius norm of the real coefficient tensor. The following lemma provides a decomposition that separates the norm into two tractable terms.

\begin{lemma}[Decomposition Identity for the Frobenius Norm]
  \label{lem:norm-decomposition}
  Let $Z^{(n)}_{p_1\dots p_n} = v^{(1)}_{p_1}\cdots v^{(n)}_{p_n}$ be the complex tensor defined in \LemRef{lem:tensor-product}, and $C^{(n)}_{p_1\dots p_n} = \operatorname{Re}[Z^{(n)}_{p_1\dots p_n}]$. Then the square of its Frobenius norm can be decomposed as
  \begin{equation}
    \label{eq:norm-decomposition-thm}
    \|\mathbf{C}^{(n)}\|_F^2 = \frac{1}{2}\sum_P|Z_P|^2 + \frac{1}{2}\operatorname{Re}\!\left(\sum_P Z_P^2\right),
  \end{equation}
  where $P=(p_1,\dots,p_n)$, and the sums can be decomposed into products of independent contributions from each party:
  \begin{align}
    \label{eq:mod-square-thm}
    \sum_P |Z_P|^2 & = \prod_{k=1}^n \|v^{(k)}\|^2,                    \\
    \label{eq:square-thm}
    \sum_P Z_P^2   & = \prod_{k=1}^n \bigl(v^{(k)}\cdot v^{(k)}\bigr).
  \end{align}
\end{lemma}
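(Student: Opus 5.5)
The plan is to start from the elementary identity $(\operatorname{Re} w)^2 = \tfrac12|w|^2 + \tfrac12\operatorname{Re}(w^2)$, valid for every complex number $w$. Writing $w=a+ib$, we have $|w|^2=a^2+b^2$ and $\operatorname{Re}(w^2)=a^2-b^2$, so the right-hand side is $a^2$. Applying this with $w=Z_P$ and using $C^{(n)}_P=\operatorname{Re}[Z_P]$ from \LemRef{lem:tensor-product} gives
\begin{equation*}
  \|\mathbf{C}^{(n)}\|_F^2=\sum_P \bigl(C^{(n)}_P\bigr)^2=\sum_P\Bigl(\tfrac12|Z_P|^2+\tfrac12\operatorname{Re}(Z_P^2)\Bigr).
\end{equation*}
Since the sum over the multi-index $P$ is finite and $\operatorname{Re}$ is real-linear, we may pull it out of the sum. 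This yields \EquRef{eq:norm-decomposition-thm}.

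For the two factorizations we use the product structure $Z_P=\prod_k v^{(k)}_{p_k}$.

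For \EquRef{eq:mod-square-thm}, observe that $|Z_P|^2=\prod_k |v^{(k)}_{p_k}|^2$. The sum over $P=(p_1,\dots,p_n)$ ranges over a Cartesian product of independent index sets $\{1,2,3\}$. By distributivity it therefore factorizes as $\prod_k\sum_{p_k}|v^{(k)}_{p_k}|^2=\prod_k\|v^{(k)}\|^2$, where $\|\cdot\|$ denotes the Hermitian norm on $\mathbb{C}^3$.

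For \EquRef{eq:square-thm} we proceed in the same way. Here $Z_P^2=\prod_k (v^{(k)}_{p_k})^2$, and summing factor by factor gives $\prod_k \sum_{p_k}(v^{(k)}_{p_k})^2=\prod_k\bigl(v^{(k)}\cdot v^{(k)}\bigr)$. In this expression $v\cdot v=\sum_p v_p^2$ is the bilinear (non-conjugated) dot product.

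None of these steps poses a real obstacle; the argument is purely algebraic. The only point needing care is notational. We must state clearly that $\|v\|^2=\sum_p \bar v_p v_p$, whereas $v\cdot v$ carries no complex conjugation. This distinction matters for the subsequent evaluation: for $k\ge2$ one gets $\|v^{(k)}\|^2=1$ and $v^{(k)}\cdot v^{(k)}=-\tfrac{i}{2}\bigl(|\vec a_1|^2-|\vec a_2|^2\bigr)+\vec a_1\cdot\vec a_2=\vec a_1\cdot\vec a_2$, which is where the normalization of the final theorem will be decided. The lemma itself does not require those evaluations.
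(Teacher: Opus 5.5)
Your proposal is correct and follows essentially the same route as the paper. The paper uses the equivalent identity $(\operatorname{Re} z)^2=(z^2+2|z|^2+\bar z^2)/4$, sums over $P$, and then factorizes both sums over the Cartesian product of index sets exactly as you do, with the same distinction between the Hermitian norm and the non-conjugated product $v\cdot v$.
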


\begin{proof}
  From $C^{(n)}_{p_1\dots p_n} = \operatorname{Re}[Z^{(n)}_{p_1\dots p_n}]$, the square of the Frobenius norm is
  \begin{equation}
    \|\mathbf{C}^{(n)}\|_F^2 = \sum_{p_1,\dots,p_n=1}^3 \bigl(C^{(n)}_{p_1\dots p_n}\bigr)^2 = \sum_P \bigl(\operatorname{Re}\,Z_P\bigr)^2.
  \end{equation}
  For any complex number $z$, using the identity
  \begin{equation}
    (\operatorname{Re}\,z)^2 = \frac{z^2 + 2|z|^2 + \bar{z}^2}{4},
  \end{equation}
  we obtain
  \begin{align}
    \|\mathbf{C}^{(n)}\|_F^2
     & = \frac{1}{4}\sum_P\left(Z_P^2 + 2|Z_P|^2 + \bar{Z}_P^2\right) \nonumber              \\
     & = \frac{1}{2}\sum_P|Z_P|^2 + \frac{1}{2}\operatorname{Re}\!\left(\sum_P Z_P^2\right),
  \end{align}
  where we used $\sum_P \bar{Z}_P^2 = \overline{\sum_P Z_P^2}$, so $\sum_P Z_P^2 + \sum_P \bar{Z}_P^2 = 2\operatorname{Re}(\sum_P Z_P^2)$.

  Since $Z_P = v^{(1)}_{p_1}\cdots v^{(n)}_{p_n}$ is of product form, the sum over $P$ can be decomposed:
  \begin{equation}
    \begin{aligned}
      \sum_P |Z_P|^2
       & = \sum_{p_1,\dots,p_n} |v^{(1)}_{p_1}|^2\cdots|v^{(n)}_{p_n}|^2 \\
       & = \prod_{k=1}^n \left(\sum_{p=1}^3 |v^{(k)}_p|^2\right)
      = \prod_{k=1}^n \|v^{(k)}\|^2,                                     \\
    \end{aligned}
  \end{equation}
  \begin{equation}
    \begin{aligned}
      \sum_P Z_P^2
       & = \sum_{p_1,\dots,p_n} (v^{(1)}_{p_1})^2\cdots(v^{(n)}_{p_n})^2 \\
       & = \prod_{k=1}^n \left(\sum_{p=1}^3 (v^{(k)}_p)^2\right)
      = \prod_{k=1}^n \bigl(v^{(k)}\cdot v^{(k)}\bigr),
    \end{aligned}
  \end{equation}
  where $v^{(k)}\cdot v^{(k)} = \sum_p (v^{(k)}_p)^2$ is the inner product of the complex vector with itself (without conjugation).
\end{proof}

To evaluate the decomposed norm explicitly, we require the squared norms and self-inner products of the complex vectors $v^{(k)}$. These are given by the next lemma.

\begin{lemma}[Key Quantities of the Complex Vectors]
  \label{lem:vector-quantities}
  Let the measurement directions of the $k$-th party be unit vectors $\vec a_1^{(k)}, \vec a_2^{(k)}$ in $\mathbb{R}^3$, with the complex vectors $v^{(k)}$ defined as in \EquRef{eq:v-vectors}. Denote $\alpha = (1-i)/2$ and $\beta = (1+i)/2$. Then the squared norms and self-inner products of the complex vectors are given by:
  (1) First party,
  \begin{align}
    \label{eq:v1-norm-thm}
    \|v^{(1)}\|^2          & = \|\vec a_1^{(1)}\|^2 + \|\vec a_2^{(1)}\|^2,                                                       \\
    \label{eq:v1-dot-thm}
    v^{(1)}\!\cdot v^{(1)} & = \bigl(\|\vec a_1^{(1)}\|^2 - \|\vec a_2^{(1)}\|^2\bigr) + 2i\,\vec a_1^{(1)}\!\cdot\vec a_2^{(1)}.
  \end{align}
  $k$-th party $(k\ge 2)$,
  \begin{align}
    \label{eq:vk-norm-thm}
    \|v^{(k)}\|^2          & = \frac{1}{2}\bigl(\|\vec a_1^{(k)}\|^2 + \|\vec a_2^{(k)}\|^2\bigr),                                       \\
    \label{eq:vk-dot-thm}
    v^{(k)}\!\cdot v^{(k)} & = \vec a_1^{(k)}\!\cdot\vec a_2^{(k)} + \frac{i}{2}\bigl(\|\vec a_2^{(k)}\|^2 - \|\vec a_1^{(k)}\|^2\bigr).
  \end{align}
\end{lemma}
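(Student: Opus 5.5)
The lemma is a direct algebraic computation from the definitions in \EquRef{eq:v-vectors}. I will use only bilinearity of the real dot product and the fact that $\vec a_1^{(k)},\vec a_2^{(k)}$ are real vectors. I will keep the norms $\|\vec a_j^{(k)}\|^2$ general rather than setting them to one, since the statement is written that way. For a complex vector $w=x+iy$ with $x,y\in\Real^3$ I will use two identities:
\[
\|w\|^2=\sum_p|w_p|^2=\|x\|^2+\|y\|^2,
\qquad
w\cdot w=\sum_p w_p^2=\|x\|^2-\|y\|^2+2i\,x\cdot y .
\]
Both follow componentwise from $|x_p+iy_p|^2=x_p^2+y_p^2$ and $(x_p+iy_p)^2=x_p^2-y_p^2+2ix_py_p$.

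\textbf{First party.} Here $v^{(1)}=\vec a_1^{(1)}+i\,\vec a_2^{(1)}$, so I take $x=\vec a_1^{(1)}$ and $y=\vec a_2^{(1)}$. The two identities then give \EquRef{eq:v1-norm-thm} and \EquRef{eq:v1-dot-thm} immediately.

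\textbf{Parties $k\ge2$.} I separate real and imaginary parts:
\[
v^{(k)}=\tfrac12\bigl(\vec a_1^{(k)}+\vec a_2^{(k)}\bigr)+\tfrac{i}{2}\bigl(\vec a_2^{(k)}-\vec a_1^{(k)}\bigr),
\]
so $x=\tfrac12 S^{(k)}$ and $y=-\tfrac12 D^{(k)}$.

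For the norm, $\|v^{(k)}\|^2=\tfrac14\bigl(\|S^{(k)}\|^2+\|D^{(k)}\|^2\bigr)$. The parallelogram law turns this into $\tfrac12\bigl(\|\vec a_1^{(k)}\|^2+\|\vec a_2^{(k)}\|^2\bigr)$, which is \EquRef{eq:vk-norm-thm}.

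For the self-inner product, the real part is $\tfrac14\bigl(\|S^{(k)}\|^2-\|D^{(k)}\|^2\bigr)=\vec a_1^{(k)}\cdot\vec a_2^{(k)}$ by the polarization identity. The imaginary part is $2\,x\cdot y=-\tfrac12 S^{(k)}\cdot D^{(k)}=-\tfrac12\bigl(\|\vec a_1^{(k)}\|^2-\|\vec a_2^{(k)}\|^2\bigr)$. Together these give \EquRef{eq:vk-dot-thm}.

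As a cross-check I will redo this case with $\alpha,\beta$ directly. Expanding gives $v\cdot v=\alpha^2\|\vec a_1\|^2+2\alpha\beta\,\vec a_1\cdot\vec a_2+\beta^2\|\vec a_2\|^2$, and with $\alpha^2=-i/2$, $\beta^2=i/2$, $\alpha\beta=1/2$ this reproduces the same result. Similarly, $|\alpha|^2=|\beta|^2=\tfrac12$ and $\bar\alpha\beta+\alpha\bar\beta=\operatorname{Re}(2\bar\alpha\beta)=\operatorname{Re}(i)=0$, so the cross term $\vec a_1\cdot\vec a_2$ cancels in $\|v\|^2$.

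There is no real obstacle. The only point needing care is to keep conjugated and unconjugated products distinct: $\|v\|^2$ uses $\bar v_p v_p$, while $v\cdot v$ is the bilinear form without conjugation. It is exactly the vanishing of $\operatorname{Re}(\bar\alpha\beta)$ that removes the dependence on the angle between the two settings in the norm. With unit vectors, this lemma gives $\prod_k\|v^{(k)}\|^2=2\cdot2^{-(n-1)}$, and fed into \LemRef{lem:norm-decomposition} it yields the constancy of $\|\mathbf C\|_F$ claimed in \LemRef{lem:mabk-emabk-norm}.
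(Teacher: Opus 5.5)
Your computation is correct and is essentially the paper's proof. The paper expands $v^{(k)}=\alpha\vec a_1^{(k)}+\beta\vec a_2^{(k)}$ directly using $|\alpha|^2=|\beta|^2=\tfrac12$, $\operatorname{Re}(\alpha\bar\beta)=0$, $\alpha^2=-\tfrac{i}{2}$, $\beta^2=\tfrac{i}{2}$ and $2\alpha\beta=1$, which is exactly your cross-check; your split $v^{(k)}=\tfrac12 S^{(k)}-\tfrac{i}{2}D^{(k)}$ with the parallelogram and polarization identities is just a repackaging of the same algebra.

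There is one slip, but it lies outside the lemma itself. For unit vectors the lemma gives $\|v^{(k)}\|^2=1$ (not $\tfrac12$) for $k\ge2$. Hence $\prod_k\|v^{(k)}\|^2=2$ rather than $2\cdot2^{-(n-1)}$. Your value would actually give $\|\mathbf C\|_F^2=2^{1-n}$, which contradicts the constancy you cite.
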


\begin{proof}
  From $|\alpha|^2 = |\beta|^2 = \frac{1}{2}$ and $\alpha\bar{\beta} = \frac{-i}{2}$, $\alpha^2 = \frac{-i}{2}$, $\beta^2 = \frac{i}{2}$, $2\alpha\beta = 1$.

  (2)First party, $v^{(1)} = \vec a_1^{(1)} + i\,\vec a_2^{(1)}$.
  Squared norm:
  \begin{align}
    \|v^{(1)}\|^2
     & = (\vec a_1^{(1)} + i\,\vec a_2^{(1)})\cdot(\vec a_1^{(1)} - i\,\vec a_2^{(1)}) \nonumber                                        \\
     & = \|\vec a_1^{(1)}\|^2 + \|\vec a_2^{(1)}\|^2 + 2\operatorname{Re}\!\bigl(i\,\vec a_1^{(1)}\!\cdot\vec a_2^{(1)}\bigr) \nonumber \\
     & = \|\vec a_1^{(1)}\|^2 + \|\vec a_2^{(1)}\|^2,
  \end{align}
  where the cross term is purely imaginary since $i$ multiplied by a real inner product is imaginary, so its real part vanishes.
  Self-inner product (without conjugation):
  \begin{align}
    v^{(1)}\!\cdot v^{(1)}
     & = (\vec a_1^{(1)} + i\,\vec a_2^{(1)})^2 \nonumber                                                   \\
     & = \|\vec a_1^{(1)}\|^2 + 2i\,\vec a_1^{(1)}\!\cdot\vec a_2^{(1)} + i^2\|\vec a_2^{(1)}\|^2 \nonumber \\
     & = \bigl(\|\vec a_1^{(1)}\|^2 - \|\vec a_2^{(1)}\|^2\bigr) + 2i\,\vec a_1^{(1)}\!\cdot\vec a_2^{(1)}.
  \end{align}
  $k$-th party $(k\ge 2)$, $v^{(k)} = \alpha\,\vec a_1^{(k)} + \beta\,\vec a_2^{(k)}$.
  Squared norm:
  \begin{align}
    \|v^{(k)}\|^2
     & = |\alpha|^2\|\vec a_1^{(k)}\|^2 + |\beta|^2\|\vec a_2^{(k)}\|^2 + 2\operatorname{Re}\!\bigl(\alpha\bar{\beta}\,\vec a_1^{(k)}\!\cdot\vec a_2^{(k)}\bigr) \nonumber \\
     & = \frac{1}{2}\|\vec a_1^{(k)}\|^2 + \frac{1}{2}\|\vec a_2^{(k)}\|^2 + 2\operatorname{Re}\!\Bigl(\frac{-i}{2}\,\vec a_1^{(k)}\!\cdot\vec a_2^{(k)}\Bigr) \nonumber   \\
     & = \frac{1}{2}\bigl(\|\vec a_1^{(k)}\|^2 + \|\vec a_2^{(k)}\|^2\bigr),
  \end{align}
  where $-i/2$ multiplied by the real inner product yields a purely imaginary number, so the real part vanishes.

  Self-inner product:
  \begin{align}
    v^{(k)}\!\cdot v^{(k)}
     & = \alpha^2\|\vec a_1^{(k)}\|^2 + \beta^2\|\vec a_2^{(k)}\|^2 + 2\alpha\beta\,(\vec a_1^{(k)}\!\cdot\vec a_2^{(k)}) \nonumber \\
     & = \frac{-i}{2}\|\vec a_1^{(k)}\|^2 + \frac{i}{2}\|\vec a_2^{(k)}\|^2 + \vec a_1^{(k)}\!\cdot\vec a_2^{(k)} \nonumber         \\
     & = \vec a_1^{(k)}\!\cdot\vec a_2^{(k)} + \frac{i}{2}\bigl(\|\vec a_2^{(k)}\|^2 - \|\vec a_1^{(k)}\|^2\bigr).
  \end{align}
\end{proof}

\begin{theorem}[The Frobenius Norm of the MABK Effective Coefficient Tensor]
  \label{thm:MABK-norm-final}
  Under the construction of the iterative formula \EquRef{eq:MABK-itera}, and when the measurement directions of all parties are unit vectors, the square of the Frobenius norm of the MABK effective coefficient tensor
  \begin{equation}
    \label{eq:MABK-norm-final}
    \|\mathbf{C}^{(n)}\|_F^2 = 1
  \end{equation}
  is independent of both the number of parties $n$ and the measurement directions of all parties.
\end{theorem}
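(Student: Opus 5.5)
The plan is to combine the decomposition of \LemRef{lem:norm-decomposition} with the explicit per-party quantities of \LemRef{lem:vector-quantities}. First invoke \LemRef{lem:tensor-product}, which writes $C^{(n)}=\operatorname{Re}[v^{(1)}\otimes\cdots\otimes v^{(n)}]$. Then \LemRef{lem:norm-decomposition} gives
\begin{equation*}
\|\mathbf{C}^{(n)}\|_F^2=\tfrac12\prod_{k=1}^n\|v^{(k)}\|^2+\tfrac12\operatorname{Re}\prod_{k=1}^n\bigl(v^{(k)}\cdot v^{(k)}\bigr).
\end{equation*}
So the whole task reduces to evaluating the two products once the unit-norm condition $\|\vec a_i^{(k)}\|=1$ is imposed.

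Next, substitute unit vectors into \LemRef{lem:vector-quantities}.

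\begin{itemize}
\item For the first party, $\|v^{(1)}\|^2=2$ and $v^{(1)}\cdot v^{(1)}=2i\,c_1$, where $c_1=\vec a_1^{(1)}\cdot\vec a_2^{(1)}$.
\item For each $k\ge2$, $\|v^{(k)}\|^2=1$ and $v^{(k)}\cdot v^{(k)}=c_k$, where $c_k=\vec a_1^{(k)}\cdot\vec a_2^{(k)}$. The imaginary part vanishes because the two norms are equal.
\end{itemize}

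Hence the first product equals $2$, contributing $1$. The second product equals $2i\,c_1c_2\cdots c_n$, which is purely imaginary, so its real part is zero. Therefore $\|\mathbf{C}^{(n)}\|_F^2=1$ for every $n\ge2$, independent of all measurement directions.

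The main point needing care is the base case and range of validity. \LemRef{lem:tensor-product} covers all $n\ge2$, including $n=2$ through its direct computation, so the argument applies uniformly. One should also check that the sign and phase conventions for $v^{(1)}$ (no factor $\tfrac12$, with the factor $2$ absorbed) are used consistently with those for $v^{(k)}$, $k\ge2$. Otherwise the "purely imaginary" cancellation, which is the crux, would fail. As a sanity check, compute $n=2$ directly from the CHSH coefficients: there $\|\mathbf C\|_F^2=\tfrac14\|\vec a_1^{(1)}\otimes S+\vec a_2^{(1)}\otimes D\|^2=\tfrac14(\|S\|^2+\|D\|^2)=1$. This uses the orthogonality of $S$ and $D$, which holds because $\|S\|^2-\|D\|^2$ vanishes for unit vectors in the cross term.
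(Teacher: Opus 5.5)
Your proposal is correct and matches the paper's proof: it feeds the unit-vector values from \LemRef{lem:vector-quantities} into the decomposition of \LemRef{lem:norm-decomposition}, so the modulus product contributes $\tfrac12\cdot 2=1$ and the self-inner-product product $2i\prod_k \vec a_1^{(k)}\cdot\vec a_2^{(k)}$ is purely imaginary. One slip in your $n=2$ sanity check: $\vec S$ and $\vec D$ are orthogonal because $\vec S\cdot\vec D=\|\vec a_1\|^2-\|\vec a_2\|^2=0$, not because $\|\vec S\|^2-\|\vec D\|^2$ vanishes (that quantity equals $4\,\vec a_1\cdot\vec a_2$, which is generally nonzero).
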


\begin{proof}
  From Theorem \LemRef{lem:norm-decomposition}, the squared norm decomposes as
  \begin{equation}
    \|\mathbf{C}^{(n)}\|_F^2 = \frac{1}{2}\prod_{k=1}^n \|v^{(k)}\|^2 + \frac{1}{2}\operatorname{Re}\!\left[\prod_{k=1}^n \bigl(v^{(k)}\!\cdot v^{(k)}\bigr)\right].
  \end{equation}

  In quantum mechanics, the measurement directions are unit vectors:
  \begin{equation}
    \|\vec a_1^{(k)}\| = \|\vec a_2^{(k)}\| = 1, \qquad \vec a_1^{(k)}\!\cdot\vec a_2^{(k)} = \cos\theta_k.
  \end{equation}
  Substituting the results from Theorem \LemRef{lem:vector-quantities}:
  \begin{align}
    \|v^{(1)}\|^2          & = 1+1 = 2,                                                       \\
    \|v^{(k)}\|^2          & = \frac{1}{2}(1+1) = 1 \quad (k\ge 2),                           \\
    v^{(1)}\!\cdot v^{(1)} & = (1-1) + 2i\cos\theta_1 = 2i\cos\theta_1,                       \\
    v^{(k)}\!\cdot v^{(k)} & = \cos\theta_k + \frac{i}{2}(1-1) = \cos\theta_k \quad (k\ge 2).
  \end{align}
  Therefore, the first term:
  \begin{equation}
    \frac{1}{2}\prod_{k=1}^n \|v^{(k)}\|^2 = \frac{1}{2}\cdot 2 \cdot 1^{n-1} = 1.
  \end{equation}
  The second term:
  \begin{equation}
    \prod_{k=1}^n \bigl(v^{(k)}\!\cdot v^{(k)}\bigr) = 2i\cos\theta_1 \cdot \prod_{k=2}^n \cos\theta_k = 2i\prod_{k=1}^n \cos\theta_k,
  \end{equation}
  which is purely imaginary, so its real part vanishes:
  \begin{equation}
    \frac{1}{2}\operatorname{Re}\!\left[\prod_{k=1}^n \bigl(v^{(k)}\!\cdot v^{(k)}\bigr)\right] = 0.
  \end{equation}
  In summary,
  \begin{equation}
    \|\mathbf{C}^{(n)}\|_F^2 = 1 + 0 = 1.
  \end{equation}
  This result is independent of both $n$ and the angles $\theta_k$ between the measurement directions of all parties.
\end{proof}

\begin{remark}\label{rem:Normalization of Frobenius}
  The above recursion uses the normalized CHSH starting point, for which the classical LHV bound equals $1$ and $\|C\|_F^2=1$. If the unnormalized CHSH operator (classical bound $2$) is used instead, all coefficients scale by factor $2$, yielding $\|C\|_F^2=4$.
\end{remark}

The results presented above for the standard MABK can be naturally generalized to the more general EMABK (Extended MABK) construction. In EMABK, the two blocks in the recursive definition can be two independent MABK operators, rather than necessarily being the $1\leftrightarrow 2$ interchange of the same operator.

\begin{definition}[EMABK Operator]
  \label{def:EMABK}
  Let $\mathcal{B}_{n-1}^{(A)}$ and $\mathcal{B}_{n-1}^{(B)}$ be two independent standard MABK operators acting on the first $n-1$ parties (which may employ identical or distinct measurement settings). The $n$-party EMABK operator is defined as
  \begin{equation}
    \label{eq:EMABK-def}
    \mathcal{B}_n^{\mathrm{EMABK}} = \frac{1}{2}\Bigl[\,\mathcal{B}_{n-1}^{(A)}\otimes\bigl(M_1^{(n)}+M_2^{(n)}\bigr) + \mathcal{B}_{n-1}^{(B)}\otimes\bigl(M_1^{(n)}-M_2^{(n)}\bigr)\Bigr].
  \end{equation}
  When $\mathcal{B}_{n-1}^{(B)} = \widetilde{\mathcal{B}}_{n-1}^{(A)}$ (i.e., the $1\leftrightarrow 2$ interchange), this reduces to the standard MABK.
\end{definition}

We now show that the Frobenius norm of the effective coefficient tensor remains constant under this extended construction.

\begin{lemma}[Expansion of the EMABK Effective Coefficient Tensor]
  \label{lem:EMABK-spatial}
  Let the effective-coefficient tensors of $\mathcal{B}_{n-1}^{(A)}$ and $\mathcal{B}_{n-1}^{(B)}$ be $C^{(A)}_{p_1\dots p_{n-1}}$ and $C^{(B)}_{p_1\dots p_{n-1}}$, respectively. Then the effective-coefficient tensor of the EMABK operator \EquRef{eq:EMABK-def} is
  \begin{equation}
    \label{eq:EMABK-spatial}
    C^{\mathrm{EMABK}}_{p_1\dots p_n} = \frac{1}{2}\Bigl(S^{(n)}_{p_n}\,C^{(A)}_{p_1\dots p_{n-1}} + D^{(n)}_{p_n}\,C^{(B)}_{p_1\dots p_{n-1}}\Bigr),
  \end{equation}
  where $S^{(n)}_{p_n} = a^{(n)}_{1p_n} + a^{(n)}_{2p_n}$, $D^{(n)}_{p_n} = a^{(n)}_{1p_n} - a^{(n)}_{2p_n}$.
\end{lemma}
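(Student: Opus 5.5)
The plan is to mirror the proof of \LemRef{lem:spatial-recursion}. We expand each ingredient of \EquRef{eq:EMABK-def} in the product Pauli basis and then read off coefficients. Concretely, since $\mathcal{B}_{n-1}^{(A)}$ and $\mathcal{B}_{n-1}^{(B)}$ are standard MABK operators on the first $n-1$ parties, we write
\[
\mathcal{B}_{n-1}^{(A)}=\sum_{P}C^{(A)}_P\,\sigma_P,\qquad
\mathcal{B}_{n-1}^{(B)}=\sum_{P}C^{(B)}_P\,\sigma_P,
\]
with $P=(p_1,\dots,p_{n-1})$ and $\sigma_P=\sigma_{p_1}^{(1)}\cdots\sigma_{p_{n-1}}^{(n-1)}$. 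These expansions are exactly the definitions of their effective-coefficient tensors, via \EquRef{eq:C-def}. This step uses the fact that each block, whatever its measurement settings (settings 1,2 for block $A$ and settings 3,4 for block $B$), is a multilinear polynomial in observables of the form $\vec a\cdot\bm{\sigma}$. For the $n$-th party we reuse \EquRef{eq:S-def} and \EquRef{eq:D-def}, namely $M_1^{(n)}\pm M_2^{(n)}=\sum_{p_n}(S^{(n)}_{p_n}\text{ or }D^{(n)}_{p_n})\,\sigma^{(n)}_{p_n}$.

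Next we substitute these four expansions into \EquRef{eq:EMABK-def} and distribute the tensor products. This gives
\[
\mathcal{B}_n^{\mathrm{EMABK}}=\sum_{P,p_n}\tfrac12\bigl(S^{(n)}_{p_n}C^{(A)}_P+D^{(n)}_{p_n}C^{(B)}_P\bigr)\,\sigma_P\otimes\sigma^{(n)}_{p_n}.
\]
On the other hand, expanding the EMABK polynomial directly through its Bell coefficients and \EquRef{eq:C-def} also produces $\sum C^{\mathrm{EMABK}}_{p_1\dots p_n}\sigma_{p_1}^{(1)}\cdots\sigma_{p_n}^{(n)}$. The $3^n$ operators $\sigma_{p_1}^{(1)}\otimes\cdots\otimes\sigma_{p_n}^{(n)}$ with $p_k\in\{x,y,z\}$ are linearly independent, since they are mutually Hilbert--Schmidt orthogonal. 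Comparing coefficients therefore yields \EquRef{eq:EMABK-spatial}.

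The only delicate point is to confirm that the two descriptions of $C^{\mathrm{EMABK}}$ agree: the coefficient tensor from \EquRef{eq:C-def}, where the Bell coefficients are sums over all settings including settings 3 and 4, and the Pauli-basis coefficients of the operator. This holds because the map from settings to operators is multilinear and the full-correlation expression contains no identity components. The expansion is also unambiguous even in the degenerate case where settings coincide (for example $M_3=M_2$, as in Strategy~I). The reason is that $C$ is defined as the Pauli coefficient of the operator, and coinciding settings merely add their coefficients. I expect this bookkeeping, rather than any estimate, to be the main thing to state carefully. Unlike \LemRef{lem:spatial-recursion}, no tilde partner recursion is needed, because $C^{(B)}$ is an independent input and not the $1\leftrightarrow2$ image of $C^{(A)}$.
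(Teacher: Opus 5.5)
Your proposal is correct and is essentially the paper's proof. The paper likewise reruns the argument of \LemRef{lem:spatial-recursion} with $\mathcal{B}_{n-1}$ and $\widetilde{\mathcal{B}}_{n-1}$ replaced by the independent blocks $\mathcal{B}_{n-1}^{(A)}$ and $\mathcal{B}_{n-1}^{(B)}$: it substitutes their Pauli-basis expansions and the vectors $S^{(n)},D^{(n)}$, then compares coefficients in the product Pauli basis. Your extra remarks (linear independence via Hilbert--Schmidt orthogonality, agreement of \EquRef{eq:C-def} with the Pauli coefficients by multilinearity, and coinciding settings being harmless) only make explicit bookkeeping that the paper leaves implicit.
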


\begin{proof}
  The proof is identical to that of  \LemRef{lem:spatial-recursion}, except that $\mathcal{B}_{n-1}$ is replaced by $\mathcal{B}_{n-1}^{(A)}$, $\widetilde{\mathcal{B}}_{n-1}$ is replaced by $\mathcal{B}_{n-1}^{(B)}$, and the coefficients are directly compared in the Pauli basis.
\end{proof}

\begin{lemma}[Geometric Identities of the Sum and Difference Vectors]
  \label{lem:SD-identities}
  Let $\vec a_1, \vec a_2 \in \mathbb{R}^3$ be unit vectors with angle $\theta$ between them (i.e., $\vec a_1 \cdot \vec a_2 = \cos\theta$). Define
  \begin{equation}
    \vec S = \vec a_1 + \vec a_2, \quad \vec D = \vec a_1 - \vec a_2.
  \end{equation}
  Then
  \begin{align}
    \label{eq:S-norm}
    \|\vec S\|^2        & = 2 + 2\cos\theta, \\
    \label{eq:D-norm}
    \|\vec D\|^2        & = 2 - 2\cos\theta, \\
    \label{eq:SD-orthog}
    \vec S \cdot \vec D & = 0.
  \end{align}
  In particular,
  \begin{equation}
    \label{eq:SD-sum}
    \|\vec S\|^2 + \|\vec D\|^2 = 4.
  \end{equation}
\end{lemma}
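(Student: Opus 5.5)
The statement to prove is \LemRef{lem:SD-identities}. The approach is direct expansion of Euclidean inner products, using only $\|\vec a_1\|=\|\vec a_2\|=1$ and $\vec a_1\cdot\vec a_2=\cos\theta$. Bilinearity and symmetry of the real dot product give
\begin{equation*}
\|\vec S\|^2=(\vec a_1+\vec a_2)\cdot(\vec a_1+\vec a_2)=\|\vec a_1\|^2+2\,\vec a_1\cdot\vec a_2+\|\vec a_2\|^2=2+2\cos\theta .
\end{equation*}
The same expansion with a minus sign gives $\|\vec D\|^2=2-2\cos\theta$.

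For the orthogonality relation, expand
\begin{equation*}
\vec S\cdot\vec D=(\vec a_1+\vec a_2)\cdot(\vec a_1-\vec a_2)=\|\vec a_1\|^2-\|\vec a_2\|^2 .
\end{equation*}
The cross terms $-\vec a_1\cdot\vec a_2+\vec a_2\cdot\vec a_1$ cancel by symmetry. The remaining difference vanishes because both vectors are unit vectors. This is the only step where equal norms, and not just the angle, matter.

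Finally, adding the two squared norms cancels the $\pm2\cos\theta$ terms and yields $\|\vec S\|^2+\|\vec D\|^2=4$. This is the parallelogram law $\|\vec S\|^2+\|\vec D\|^2=2(\|\vec a_1\|^2+\|\vec a_2\|^2)$, which holds even without the angle parametrization.

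There is no real obstacle; the only care needed is to note that $\vec S\cdot\vec D=0$ relies on the equal (unit) lengths of the measurement directions. That is why the lemma is stated for physical unit Bloch vectors, and it is exactly what will later make the two terms $S^{(n)}\otimes C^{(A)}$ and $D^{(n)}\otimes C^{(B)}$ in \LemRef{lem:EMABK-spatial} orthogonal in Frobenius inner product.
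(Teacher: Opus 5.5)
Your proposal is correct and essentially the paper's own proof. Both expand $\|\vec S\|^2$, $\|\vec D\|^2$ and $\vec S\cdot\vec D$ by bilinearity using $\|\vec a_1\|=\|\vec a_2\|=1$ and $\vec a_1\cdot\vec a_2=\cos\theta$, then add the first two to obtain $4$; your remarks that the sum is the parallelogram law and that $\vec S\cdot\vec D=0$ is what removes the cross term $2(\vec S^{(n)}\cdot\vec D^{(n)})\langle C^{(A)},C^{(B)}\rangle$ in the EMABK norm computation are also accurate.
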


\begin{proof}
  Direct calculation yields:
  \begin{equation}
    \begin{aligned}
      \|\vec S\|^2        & = \|\vec a_1\|^2 + \|\vec a_2\|^2 + 2\vec a_1\!\cdot\!\vec a_2 \\
                          & = 1 + 1 + 2\cos\theta = 2 + 2\cos\theta,                       \\
      \|\vec D\|^2        & = \|\vec a_1\|^2 + \|\vec a_2\|^2 - 2\vec a_1\!\cdot\!\vec a_2 \\
                          & = 1 + 1 - 2\cos\theta = 2 - 2\cos\theta,                       \\
      \vec S \cdot \vec D & = (\vec a_1+\vec a_2)\cdot(\vec a_1-\vec a_2)                  \\
                          & = \|\vec a_1\|^2 - \|\vec a_2\|^2 = 1 - 1 = 0.
    \end{aligned}
  \end{equation}
  Adding the first two equations gives \EquRef{eq:SD-sum}.
\end{proof}

\begin{theorem}[The Frobenius Norm of the EMABK Effective Coefficient Tensor]
  \label{thm:EMABK-norm}
  Under the EMABK construction of Definition \ref{def:EMABK}, and when the measurement directions of all parties are unit vectors, the square of the Frobenius norm of its effective-coefficient tensor
  \begin{equation}
    \label{eq:EMABK-norm-result}
    \|\mathbf{C}^{\mathrm{EMABK}}\|_F^2 = 1
  \end{equation}
  is independent of the number of parties $n$, the measurement directions of all parties, and the specific choices of the two MABK blocks $\mathcal{B}_{n-1}^{(A)}$ and $\mathcal{B}_{n-1}^{(B)}$.
\end{theorem}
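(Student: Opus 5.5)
We start from the expansion in \LemRef{lem:EMABK-spatial}, $C^{\mathrm{EMABK}}_{P p_n}=\tfrac12\bigl(S^{(n)}_{p_n}C^{(A)}_P+D^{(n)}_{p_n}C^{(B)}_P\bigr)$ with $P=(p_1,\dots,p_{n-1})$. We then compute the squared Frobenius norm by summing over $p_n$ first. Expanding the square gives three groups of terms:
\begin{equation*}
\|\mathbf C^{\mathrm{EMABK}}\|_F^2=\tfrac14\Bigl[\|\vec S^{(n)}\|^2\,\|\mathbf C^{(A)}\|_F^2+\|\vec D^{(n)}\|^2\,\|\mathbf C^{(B)}\|_F^2+2\,(\vec S^{(n)}\!\cdot\vec D^{(n)})\,\langle \mathbf C^{(A)},\mathbf C^{(B)}\rangle_F\Bigr].
\end{equation*}
This factorization holds because the $p_n$ index enters only through the vectors $\vec S^{(n)},\vec D^{(n)}$, while the $P$ indices enter only through the lower-order tensors.

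The key point is that the cross term vanishes identically. By \LemRef{lem:SD-identities}, $\vec S^{(n)}\cdot\vec D^{(n)}=\|\vec a_1^{(n)}\|^2-\|\vec a_2^{(n)}\|^2=0$ for unit vectors. Hence no information about the overlap $\langle\mathbf C^{(A)},\mathbf C^{(B)}\rangle_F$ is needed. This is what makes the result independent of how the two blocks are chosen, whether they are identical, share settings, or are completely unrelated.

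For the remaining two terms, both blocks are standard $(n-1)$-party MABK operators built from unit vectors. \ThmRef{thm:MABK-norm-final} therefore gives $\|\mathbf C^{(A)}\|_F^2=\|\mathbf C^{(B)}\|_F^2=1$. This requires $n-1\ge2$. For $n=2$ the EMABK operator is just the normalized CHSH (MABK) polynomial, and the claim is \ThmRef{thm:MABK-norm-final} itself. We then invoke \EquRef{eq:SD-sum}, $\|\vec S^{(n)}\|^2+\|\vec D^{(n)}\|^2=4$, which gives $\|\mathbf C^{\mathrm{EMABK}}\|_F^2=\tfrac14\cdot4=1$, independent of $n$, of the angle between $\vec a_1^{(n)}$ and $\vec a_2^{(n)}$, and of the blocks.

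The only step needing care is justifying that each block $\mathcal B_{n-1}^{(A)},\mathcal B_{n-1}^{(B)}$ individually satisfies the hypotheses of \ThmRef{thm:MABK-norm-final}, even in degenerate settings. Examples are $\mathcal B^{(B)}_{n-1}=\sigma_z^{\otimes(n-1)}$, obtained when settings 3 and 4 coincide, or blocks that are relabelings of one another. Since \ThmRef{thm:MABK-norm-final} holds for arbitrary unit vectors, including $\vec a_1^{(k)}=\vec a_2^{(k)}$, these cases are covered. For instance, $\sigma_z^{\otimes(n-1)}$ is an $(n-1)$-party MABK polynomial with coinciding settings and has unit norm. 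The bound therefore holds for all admissible EMABK settings.
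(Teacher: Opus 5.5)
Your proof is correct and follows the paper's own argument step for step. You expand the square of $\tfrac12\bigl(S^{(n)}_{p_n}C^{(A)}_P+D^{(n)}_{p_n}C^{(B)}_P\bigr)$, drop the cross term using $\vec S^{(n)}\cdot\vec D^{(n)}=0$, insert $\|\mathbf C^{(A)}\|_F^2=\|\mathbf C^{(B)}\|_F^2=1$ from the MABK norm theorem, and finish with $\|\vec S^{(n)}\|^2+\|\vec D^{(n)}\|^2=4$. Your extra remarks on the $n=2$ case and on degenerate blocks such as $\sigma_z^{\otimes(n-1)}$ are correct and simply make explicit what the paper leaves implicit.
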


\begin{proof}
  By \LemRef{lem:EMABK-spatial}, the EMABK effective coefficient tensor is
  \begin{equation}
    C^{\mathrm{EMABK}}_{p_1\dots p_n} = \frac{1}{2}\Bigl(S^{(n)}_{p_n}\,C^{(A)}_{p_1\dots p_{n-1}} + D^{(n)}_{p_n}\,C^{(B)}_{p_1\dots p_{n-1}}\Bigr).
  \end{equation}
  Computing the square of the Frobenius norm:
  \begin{align}
     & \|\mathbf{C}^{\mathrm{EMABK}}\|_F^2
    = \sum_{p_1,\dots,p_n} \bigl(C^{\mathrm{EMABK}}_{p_1\dots p_n}\bigr)^2 \nonumber                                                                    \\
     & = \frac{1}{4}\sum_{p_n}\sum_{p_1,\dots,p_{n-1}} \Bigl(S^{(n)}_{p_n}C^{(A)}_{p_1\dots p_{n-1}} + D^{(n)}_{p_n}C^{(B)}_{p_1\dots p_{n-1}}\Bigr)^2.
  \end{align}
  Expanding the square:
  \begin{multline}
    \|\mathbf{C}^{\mathrm{EMABK}}\|_F^2 = \frac{1}{4}\sum_{p_n}\Bigl[
    (S^{(n)}_{p_n})^2 \sum_{p_1,\dots,p_{n-1}}(C^{(A)}_{p_1\dots p_{n-1}})^2\\
    + (D^{(n)}_{p_n})^2 \sum_{p_1,\dots,p_{n-1}}(C^{(B)}_{p_1\dots p_{n-1}})^2\\
    + 2S^{(n)}_{p_n}D^{(n)}_{p_n}\sum_{p_1,\dots,p_{n-1}}C^{(A)}_{p_1\dots p_{n-1}}C^{(B)}_{p_1\dots p_{n-1}}\Bigr].
  \end{multline}
  This gives
  \begin{multline}
    \label{eq:EMABK-norm-expand}
    \|\mathbf{C}^{\mathrm{EMABK}}\|_F^2 = \frac{1}{4}\Bigl[
    \|\vec S^{(n)}\|^2 \cdot \|C^{(A)}\|_F^2
    + \|\vec D^{(n)}\|^2 \cdot \|C^{(B)}\|_F^2\\
    + 2(\vec S^{(n)}\!\cdot\!\vec D^{(n)}) \cdot \langle C^{(A)}, C^{(B)} \rangle\Bigr],
  \end{multline}
  where $\langle C^{(A)}, C^{(B)} \rangle = \sum_{p_1,\dots,p_{n-1}}C^{(A)}_{p_1\dots p_{n-1}}C^{(B)}_{p_1\dots p_{n-1}}$ is the inner product of the two block tensors.

  By  \ThmRef{thm:MABK-norm-final}, the norm of a standard MABK block is always
  \begin{equation}
    \|C^{(A)}\|_F^2 = \|C^{(B)}\|_F^2 = 1.
  \end{equation}
  By \ThmRef{lem:SD-identities}, the sum and difference vectors of the last party satisfy
  \begin{equation}
    \|\vec S^{(n)}\|^2 + \|\vec D^{(n)}\|^2 = 4, \quad \vec S^{(n)}\!\cdot\!\vec D^{(n)} = 0.
  \end{equation}
  Substituting the above results into \EquRef{eq:EMABK-norm-expand}:
  \begin{align}
     & \|\mathbf{C}^{\mathrm{EMABK}}\|_F^2                                                                                                             \\
     & = \frac{1}{4}\Bigl[ \|\vec S^{(n)}\|^2 \cdot 1 + \|\vec D^{(n)}\|^2 \cdot 1 + 2 \cdot 0 \cdot \langle C^{(A)}, C^{(B)} \rangle \Bigr] \nonumber \\
     & = \frac{1}{4} \cdot 1 \cdot \bigl(\|\vec S^{(n)}\|^2 + \|\vec D^{(n)}\|^2\bigr) \nonumber
    = \frac{1}{4}  \cdot 4 \nonumber
    = 1.
  \end{align}
  This completes the proof.
\end{proof}

\begin{remark}
  An important corollary of \ThmRef{thm:EMABK-norm} is that, regardless of whether the two MABK blocks in EMABK are:
  \begin{enumerate}
    \item the standard MABK case, $\mathcal{B}_{n-1}^{(B)} = \widetilde{\mathcal{B}}_{n-1}^{(A)}$ (i.e., the $1\leftrightarrow 2$ interchange);
    \item completely different measurement settings (e.g., $\mathcal{B}_{n-1}^{(A)}$ uses settings $\{1,2\}$ while $\mathcal{B}_{n-1}^{(B)}$ uses settings $\{3,4\}$);
    \item any intermediate case (partially overlapping settings),
  \end{enumerate}
  the square of the Frobenius norm of the effective coefficient tensor is always $1$.

  This guarantees that, in the estimation of the upper bound on the quantum value based on rank constraints, the tight Tsirelson-type bound after the normalization of Remark \ref{rem:Normalization of Frobenius} can be attained as
  \begin{equation}
    \mathcal{B}_Q \le \sqrt{\sigma_1(\widetilde{T})^2 + \sigma_2(\widetilde{T})^2},
  \end{equation}
  which is applicable to EMABK inequalities, where the constant factor in the upper bound is $\|\mathbf{C}^{\mathrm{EMABK}}\|_F = 1$ obtained after normalization.
\end{remark}

\section{Rank of the Coefficient Tensor of MABK and EMABK Bell Polynomials}\label{app:C}

\begin{theorem}[Rank Constraint on the Reshaped Matrix of the Effective Coefficient Tensor]
  Consider an $n$-party Bell inequality with effective-coefficient tensor  $\mathbf{C}\in\mathbb{R}^{3\times\cdots\times 3}$, whose components are defined as
  \begin{equation}\label{eq:deftensorC}
    C_{p_1p_2\cdots p_n} = \sum_{i_1,\ldots,i_n} c_{i_1\cdots i_n}\, a^{(1)}_{i_1p_1} a^{(2)}_{i_2p_2}\cdots a^{(n)}_{i_np_n},
  \end{equation}
  where $c_{i_1\cdots i_n}$ are real coefficients, $a^{(k)}_{i_kp_k}$ is the $p_k$-th spatial component of the $i_k$-th measurement direction of the $k$-th party ($p_k\in\{x,y,z\}$), and each measurement direction is a unit vector in $\mathbb{R}^3$. If there exists at least one party, which we may take as the first party without loss of generality, whose number of measurement settings is $m_1=2$ (i.e., only two measurement directions), then the $3\times 3^{n-1}$ real matrix obtained by separating and reshaping with respect to the first index
  \begin{equation}\label{eq:reshape-app}
    \widetilde{C}_{p_1,\,(p_2\cdots p_n)} = C_{p_1p_2\cdots p_n},
  \end{equation}
  where the column index $(p_2\cdots p_n)$ is arranged in lexicographic order, satisfies the rank constraint
  \begin{equation}\label{eq:rankbound}
    \operatorname{rank}(\widetilde{C})\le 2.
  \end{equation}
\end{theorem}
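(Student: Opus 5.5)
The plan is to exhibit an explicit factorization of $\widetilde{C}$ through a two-dimensional space, exactly as sketched in the proof of \ThmRef{thm:tight-bound}, and then invoke the elementary rank inequality for matrix products. Since the first party has $m_1=2$, the sum over $i_1$ in \EquRef{eq:deftensorC} runs over only two values. We isolate it by defining, for $i_1\in\{1,2\}$ and each composite column index $(p_2\cdots p_n)$,
\begin{equation*}
  M_{i_1,(p_2\cdots p_n)} = \sum_{i_2,\ldots,i_n} c_{i_1 i_2\cdots i_n}\, a^{(2)}_{i_2p_2}\cdots a^{(n)}_{i_np_n}.
\end{equation*}
This is a real $2\times 3^{n-1}$ matrix. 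It depends only on the Bell coefficients and the settings of parties $2,\dots,n$, and no constraint on $m_2,\dots,m_n$ is needed.

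Next, we collect the two Bloch vectors of party one into $A^{(1)}\in\mathbb{R}^{2\times 3}$, with $A^{(1)}_{i_1,p_1}=a^{(1)}_{i_1p_1}$. Interchanging the finite sums in \EquRef{eq:deftensorC} gives, entrywise,
\begin{equation*}
  \widetilde{C}_{p_1,(p_2\cdots p_n)} = \sum_{i_1=1}^{2} a^{(1)}_{i_1p_1} M_{i_1,(p_2\cdots p_n)},
\end{equation*}
that is, $\widetilde{C}=A^{(1)\mathsf{T}}M$. This uses the reshaping convention \EquRef{eq:reshape-app}, and the lexicographic ordering of the columns plays no role because the same ordering is used for $M$.

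Finally, we apply $\operatorname{rank}(XY)\le\min\{\operatorname{rank}X,\operatorname{rank}Y\}$. Here $A^{(1)\mathsf{T}}$ is $3\times 2$, so its rank is at most $2$, and this yields \EquRef{eq:rankbound}. Equivalently, the column space of $\widetilde{C}$ lies in $\operatorname{span}\{\vec a^{(1)}_1,\vec a^{(1)}_2\}$. The unit-norm assumption on the directions is not needed for the argument.

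It remains to check that MABK and EMABK, as used in the main text, fall under the hypothesis. For MABK every party has two settings. For EMABK \EquRef{eq:EMABK-rec}, the $n$-th party carries only $M_1^{(n)},M_2^{(n)}$, while the first $n-1$ parties may carry four settings each. So the two-setting party should be taken as party $n$, relabeled as party one via the index permutation that preserves the Frobenius inner product. A direct alternative is to read off the factorization from Lemma~\ref{lem:EMABK-spatial}: reshaping along $p_n$ gives rows that are combinations of $\vec S^{(n)}$ and $\vec D^{(n)}$ times the fixed tensors $C^{(A)},C^{(B)}$, so the rank is at most $2$.

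The main obstacle is only the bookkeeping in this last step: making sure the hypothesis is applied to the correct (two-setting) party for EMABK. The factorization itself is routine.
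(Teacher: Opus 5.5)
Your proposal is correct and follows the paper's proof. The paper also factors $\widetilde{C}=A^{\mathsf T}D$, where $A\in\mathbb{R}^{2\times 3}$ collects party one's two Bloch vectors and $D\in\mathbb{R}^{2\times 3^{n-1}}$ is exactly your $M$, and then applies $\operatorname{rank}(XY)\le\min\{\operatorname{rank}X,\operatorname{rank}Y\}$; your relabeling of the two-setting party $n$ for EMABK matches the paper's accompanying remark.
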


\begin{proof}
  Since the first party has only two measurement settings, we can arrange its measurement vectors into a matrix
  \begin{equation}
    A = \begin{pmatrix}
      a^{(1)}_{1x} & a^{(1)}_{1y} & a^{(1)}_{1z} \\[2pt]
      a^{(1)}_{2x} & a^{(1)}_{2y} & a^{(1)}_{2z}
    \end{pmatrix}\in\mathbb{R}^{2\times 3}.
  \end{equation}
  Define another matrix $D\in\mathbb{R}^{2\times 3^{n-1}}$, whose elements are
  \begin{equation}\label{eq:defD}
    D_{i,\,(p_2\cdots p_n)} = \sum_{i_2,\ldots,i_n} c_{i\,i_2\cdots i_n}\, a^{(2)}_{i_2p_2}\cdots a^{(n)}_{i_np_n},
  \end{equation}
  where $i=1,2,$ the ordering of the column index $(p_2\cdots p_n)$ is exactly the same as in $\widetilde{C}$.

  Substituting \EquRef{eq:deftensorC} into \EquRef{eq:reshape-app}, and explicitly writing out the summation index $i_1$ (note that $i_1\in\{1,2\}$):
  \begin{equation}
    \begin{aligned}
      \widetilde{C}_{p_1,\,(p_2\cdots p_n)}
       & = \sum_{i_1=1}^{2} a^{(1)}_{i_1p_1} \Bigg( \sum_{i_2,\ldots,i_n} c_{i_1i_2\cdots i_n}\, a^{(2)}_{i_2p_2}\cdots a^{(n)}_{i_np_n} \Bigg) \\
       & = \sum_{i=1}^{2} (A^{\mathsf{T}})_{p_1,i} \, D_{i,\,(p_2\cdots p_n)},
    \end{aligned}
  \end{equation}
  where $(A^{\mathsf{T}})_{p_1,i} = A_{i,p_1} = a^{(1)}_{i p_1}$. This shows that
  \begin{equation}
    \widetilde{C} = A^{\mathsf{T}} D.
  \end{equation}
  By the rank property of matrix products (the rank of a product does not exceed that of any factor matrix), we obtain
  \begin{equation}
    \operatorname{rank}(\widetilde{C}) \le \min\{\operatorname{rank}(A^{\mathsf{T}}), \operatorname{rank}(D)\} \le \operatorname{rank}(A).
  \end{equation}
  Since $A$ is a $2\times 3$ matrix, its row rank (or column rank) is at most 2, hence $\operatorname{rank}(\widetilde{C})\le 2$.

\end{proof}

\begin{remark}[Explanation of the General Case]
  In the proof above, we assumed that "the party with only 2 settings is the first party." If this special party is not the first party, we simply renumber all $n$ parties and label it as the first party. This operation is equivalent to performing a permutation of the tensor indices, and the rank of the reshaped matrix only undergoes a rearrangement of its rows (or columns) under this permutation, so the rank remains unchanged. The permuted form is also a legitimate Bell polynomial. Therefore, as long as the Bell inequality contains at least one party with only 2 measurement settings, regardless of its original position, the rank of the matrix obtained by reshaping the effective coefficient tensor with respect to that party's index does not exceed 2.

  In particular, the MABK inequalities studied in this paper and their generalization (EMABK) both satisfy this condition---in the recursive construction, the party that is combined last always has only two measurement settings, so the rank constraint always holds.
\end{remark}

\section{The xy-Plane Structure of MABK Operators}\label{app:emabk-anti}

\begin{theorem}
  When all measurement directions lie within the $xy$ plane and satisfy two observables are mutually orthogonal $\mathbf{a}_k\perp\mathbf{a}'_k$, the matrix of the $n$-party MABK operator $B_n$ in the computational basis contains only anti-diagonal entries. Its non-zero matrix elements are $(B_n)_{1,2^n} = c_n$ and $(B_n)_{2^n,1} = c_n^*$, with
  \begin{equation}
    B_n = 2^{\frac{n-1}{2}}\left(e^{i\alpha}|0\rangle^{\otimes n}\langle 1|^{\otimes n} + e^{-i\alpha}|1\rangle^{\otimes n}\langle 0|^{\otimes n}\right),
  \end{equation}
  where $|c_n|=2^{(n-1)/2}$, and the phase $\alpha=-\bigl(\sum_{j=1}^n\theta_j+\frac{(n-1)\pi}{4}\bigr)$ is determined by the azimuthal angles $\theta_j$ of the individual particles.
\end{theorem}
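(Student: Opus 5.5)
The plan is to lift the real-coefficient recursion of \LemRef{lem:spatial-recursion}--\LemRef{lem:tensor-product} from the level of coefficient tensors to the level of operators, and then evaluate the resulting product on the computational basis. Define the non-Hermitian operator $\mathcal{Z}_n=\mathcal{B}_n+i\widetilde{\mathcal{B}}_n$, where $\widetilde{\mathcal{B}}_n=\mathcal{B}_n^{(+-)}$ is the $1\leftrightarrow2$ interchanged polynomial. The computation in \LemRef{lem:complex-iteration} and \LemRef{lem:tensor-product} only uses linearity in the observables. It therefore holds verbatim with the vectors $\vec a$ replaced by the operators $M=\vec a\cdot\bm\sigma$, giving the product form
\begin{equation*}
\mathcal{Z}_n=\bigl(M_1^{(1)}+iM_2^{(1)}\bigr)\otimes\bigotimes_{k=2}^{n}\Bigl(\tfrac{1-i}{2}M_1^{(k)}+\tfrac{1+i}{2}M_2^{(k)}\Bigr).
\end{equation*}
Since $\mathcal{B}_n$ and $\widetilde{\mathcal{B}}_n$ are both Hermitian, we recover $\mathcal{B}_n=\tfrac12(\mathcal{Z}_n+\mathcal{Z}_n^\dagger)$. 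This reduces the theorem to computing each single-party factor.

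Next I use the $xy$-plane parametrization. An observable with azimuth $\varphi$ on the equator is $\cos\varphi\,\sigma_x+\sin\varphi\,\sigma_y=e^{-i\varphi}|0\rangle\langle1|+e^{i\varphi}|1\rangle\langle0|$. Let party $k$ have azimuth $\theta_k$ for $M_1^{(k)}$ and $\theta_k+\pi/2$ for the orthogonal $M_2^{(k)}$, which is the convention of Strategy~I.
\begin{itemize}
\item For the first party, $M_1+iM_2$ has $|0\rangle\langle1|$ coefficient $e^{-i\theta_1}+i(-i)e^{-i\theta_1}=2e^{-i\theta_1}$ and $|1\rangle\langle0|$ coefficient $e^{i\theta_1}+i\cdot i\,e^{i\theta_1}=0$.
\item For $k\ge2$, the $|0\rangle\langle1|$ coefficient is $e^{-i\theta_k}\bigl(\tfrac{1-i}{2}-i\tfrac{1+i}{2}\bigr)=(1-i)e^{-i\theta_k}=\sqrt2\,e^{-i(\theta_k+\pi/4)}$, and the $|1\rangle\langle0|$ coefficient is $e^{i\theta_k}\bigl(\tfrac{1-i}{2}+i\tfrac{1+i}{2}\bigr)=0$.
\end{itemize}
Hence every factor is a pure lowering-type rank-one operator, and
\begin{equation*}
\mathcal{Z}_n=2\cdot2^{\frac{n-1}{2}}\,e^{-i\left(\sum_j\theta_j+\frac{(n-1)\pi}{4}\right)}\,|0\rangle^{\otimes n}\langle1|^{\otimes n}.
\end{equation*}
Taking $\tfrac12(\mathcal{Z}_n+\mathcal{Z}_n^\dagger)$ gives exactly the stated form, with $c_n=2^{(n-1)/2}e^{i\alpha}$ and $\alpha=-\bigl(\sum_j\theta_j+\tfrac{(n-1)\pi}{4}\bigr)$. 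In particular, only the $(1,2^n)$ and $(2^n,1)$ entries are nonzero.

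The main obstacle is bookkeeping rather than analysis. I must check that the operator-level base case $\mathcal{B}_2+i\widetilde{\mathcal{B}}_2=(M_1^{(1)}+iM_2^{(1)})\otimes(\alpha M_1^{(2)}+\beta M_2^{(2)})$ matches the CHSH normalization in \EquRef{eq:MABK-rec-2}. This is the same algebra as in the proof of \LemRef{lem:tensor-product}. I must also fix the orientation of orthogonality. If $M_2^{(k)}$ is at $\theta_k-\pi/2$ instead, the surviving projector flips to $|1\rangle\langle0|$ on that site and the anti-diagonal structure can break. So the statement implicitly requires the uniform $+\pi/2$ orientation, and I would make that convention explicit in the proof.
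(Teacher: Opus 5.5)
Your argument is correct, and it organizes the computation differently from the paper.

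\emph{The paper's route.} The paper works directly with the pair $(B_k,B'_k)$. It writes $B_k=c_k\sigma_+^{\otimes k}+c_k^*\sigma_-^{\otimes k}$ and $B'_k=d_k\sigma_+^{\otimes k}+d_k^*\sigma_-^{\otimes k}$, and proves by induction that the phase relation $d_k=-ic_k$ is preserved. This relation is what makes the coefficients of the mixed terms such as $\sigma_+^{\otimes(k-1)}\otimes\sigma_-$ cancel explicitly, and it yields $c_k=\sqrt2\,e^{-i(\theta_k+\pi/4)}c_{k-1}$.

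\emph{Your route.} You lift the complex factorization $Z^{(n)}=v^{(1)}\otimes\cdots\otimes v^{(n)}$ of Appendix~B to the operator level. This is legitimate: the map $C\mapsto\sum_P C_P\sigma_P$ is complex-linear and sends a product tensor to $\bigotimes_k v^{(k)}\cdot\boldsymbol{\sigma}$. The proof then reduces to a single-site $2\times2$ computation showing that each factor is a multiple of $\sigma_+$.

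\emph{How they relate.} The two proofs track the same invariant. The relation $d_k=-ic_k$ is equivalent to $B_k+iB'_k=2c_k\sigma_+^{\otimes k}$, i.e.\ to your $\mathcal{Z}_k$ having no $\sigma_-$ component. Your version gets the mixed-term cancellation for free and makes the role of orientation transparent. It also gives $\widetilde{\mathcal{B}}_n$ (hence $d_n=-ic_n$) at no extra cost. The paper's version is self-contained and does not depend on the Appendix~B machinery.

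\emph{A correction to your closing caveat.} Suppose some sites use $\theta_k-\pi/2$ and others $\theta_k+\pi/2$. Then $\mathcal{Z}_n$ is a tensor product of $\sigma_+$'s and $\sigma_-$'s, i.e.\ a multiple of $|x\rangle\langle\bar x|$ with $\bar x$ the bitwise complement of $x$. Such a matrix still lies on the anti-diagonal. What fails is the corner form $|0\rangle^{\otimes n}\langle1|^{\otimes n}$, which is the part that matters for the GHZ state.

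With a uniform $-\pi/2$ orientation, the corners survive but the phase becomes $\alpha=-\sum_j\theta_j+(n-1)\pi/4$. So the uniform $+\pi/2$ convention, which the paper's proof also fixes explicitly, is needed for the stated corner form and phase, not for anti-diagonality itself.
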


\begin{proof}
  For $n$ spin-$1/2$ particles, denote the two measurement directions on the $k$-th particle as unit vectors $\mathbf{a}_k,\mathbf{a}'_k$, with corresponding measurement operators $A_k=\mathbf{a}_k\cdot\boldsymbol{\sigma},\;A'_k=\mathbf{a}'_k\cdot\boldsymbol{\sigma}$. A standard recursive definition of the MABK Bell operator is
  \begin{align}
    B_1  & = A_1 , \qquad B'_1 = A'_1 ,                                                             \\[2mm]
    B_k  & = \frac{1}{2}\Big[B_{k-1}\otimes (A_k+A'_k) + B'_{k-1}\otimes (A_k-A'_k)\Big], \nonumber \\
    B'_k & = \frac{1}{2}\Big[B'_{k-1}\otimes (A_k+A'_k) - B_{k-1}\otimes (A_k-A'_k)\Big].\nonumber
  \end{align}
  The final inequality corresponds to the classical (local realistic) upper bound of $\langle B_n\rangle$.

  In the computational basis $\{|0\rangle,|1\rangle\}$, introduce the operators $\sigma_{+}=|0\rangle\langle 1|,\;\sigma_{-}=|1\rangle\langle 0|$. The Pauli matrices can be written as
  \begin{equation}
    X = \sigma_{+} + \sigma_{-},\quad Y = i(\sigma_{-} - \sigma_{+}).
  \end{equation}
  Choose all measurement directions to lie in the $xy$ plane, i.e.,
  \begin{equation}
    \begin{aligned}
      \mathbf{a}_k  & = (\cos\theta_k,\;\sin\theta_k,\;0),                                             \\
      \mathbf{a}'_k & = \bigl(\cos(\theta_k+\tfrac{\pi}{2}),\;\sin(\theta_k+\tfrac{\pi}{2}),\;0\bigr),
    \end{aligned}
  \end{equation}
  meaning that the two measurement directions of each particle are mutually orthogonal (differing by $90^\circ$). Then we have
  \begin{equation}
    \begin{aligned}
      A_k  & = e^{-i\theta_k}\sigma_{+} + e^{i\theta_k}\sigma_{-} ,      \\
      A'_k & = -i e^{-i\theta_k}\sigma_{+} + i e^{i\theta_k}\sigma_{-} .
    \end{aligned}
  \end{equation}
  To simplify notation, let
  \begin{equation}
    \begin{aligned}
      A_k+A'_k & = u\,\sigma_{+} + u^*\sigma_{-},\qquad u = \sqrt2\,e^{-i(\theta_k+\pi/4)}, \\
      A_k-A'_k & = v\,\sigma_{+} + v^*\sigma_{-},\qquad v = \sqrt2\,e^{-i(\theta_k-\pi/4)}.
    \end{aligned}
  \end{equation}
  Clearly $u^*/v^* = i$.

  \begin{theorem}\label{thm:antidiag}
    For all $n$, under the chosen measurement settings, $B_n$ and $B'_n$ contain only $\sigma_{+}^{\otimes n}$ and $\sigma_{-}^{\otimes n}$, i.e., they can be written as
    \begin{equation}
      B_n = c_n\,\sigma_{+}^{\otimes n} + c_n^*\,\sigma_{-}^{\otimes n},\qquad
      B'_n = d_n\,\sigma_{+}^{\otimes n} + d_n^*\,\sigma_{-}^{\otimes n},
    \end{equation}
    i.e., only anti-diagonal entries.
  \end{theorem}

  \begin{proof}
    Direct calculation from the definition gives:
    \begin{equation}
      B_1 = e^{-i\theta_1}\sigma_{+} + e^{i\theta_1}\sigma_{-} ,\quad
      B'_1 = -i e^{-i\theta_1}\sigma_{+} + i e^{i\theta_1}\sigma_{-} ,
    \end{equation}
    so $c_1 = e^{-i\theta_1},\;d_1 = -i c_1$, and the condition holds.

    Assume the conclusion holds for $k-1$, and $d_{k-1}=-i c_{k-1}$. Applying the recursion for the $k$-th particle:
    \begin{equation}
      \begin{aligned}
        B_k  & = \frac12\Big[B_{k-1}\otimes (u\sigma_{+}+u^*\sigma_{-}) + B'_{k-1}\otimes (v\sigma_{+}+v^*\sigma_{-})\Big], \\
        B'_k & = \frac12\Big[B'_{k-1}\otimes (u\sigma_{+}+u^*\sigma_{-}) - B_{k-1}\otimes (v\sigma_{+}+v^*\sigma_{-})\Big].
      \end{aligned}
    \end{equation}
    After expansion, the coefficients of the mixed terms such as $\sigma_{+}^{\otimes(k-1)}\!\otimes\sigma_{-}$ (and its Hermitian conjugate) are respectively
    \begin{equation}
      B_k:\; c_{k-1}u^* + d_{k-1}v^*,\qquad
      B'_k:\; d_{k-1}u^* - c_{k-1}v^*.
    \end{equation}
    Using $d_{k-1}=-i c_{k-1}$ and $u^*/v^* = i$, we obtain
    \begin{equation}
      c_{k-1}u^* + d_{k-1}v^* = c_{k-1}(u^* - i v^*) = 0,
    \end{equation}
    and similarly the other mixed term also vanishes. Therefore all mixed terms cancel exactly, leaving only $\sigma_{+}^{\otimes k}$ and $\sigma_{-}^{\otimes k}$.

    The coefficients of the pure terms are then
    \begin{equation}
      c_k = \frac12\big(c_{k-1}u + d_{k-1}v\big),\qquad
      d_k = \frac12\big(d_{k-1}u - c_{k-1}v\big).
    \end{equation}
    Substituting $d_{k-1}=-i c_{k-1}$ and simplifying:
    \begin{align}
      c_k & = \frac12 c_{k-1}(u - i v),                                                      \\
      d_k & = \frac12 c_{k-1}(-i u - v) = -i\cdot\frac12 c_{k-1}(u - i v) = -i c_k.\nonumber
    \end{align}
    Thus the phase matching condition is preserved at step $k$. The induction is complete.

    Since $\sigma_{+}^{\otimes n}=|0\cdots0\rangle\langle1\cdots1|$ and $\sigma_{-}^{\otimes n}=|1\cdots1\rangle\langle0\cdots0|$,
    this means that in the computational basis, the matrix of $B_n$ has non-zero elements only at the two endpoints $(1,2^n)$ and $(2^n,1)$ of the anti-diagonal, with all other entries being zero.
  \end{proof}

  \begin{lemma}\label{lem:coeff}
    From the recurrence relation $c_k = \frac12 c_{k-1}(u - i v)$, substituting the expressions for $u,v$:
    \begin{equation}
      u - i v = \sqrt2\,e^{-i\theta_k}\bigl(e^{-i\pi/4} - i e^{i\pi/4}\bigr) = 2\sqrt2\,e^{-i(\theta_k+\pi/4)},
    \end{equation}
    so
    \begin{equation}
      c_k = \sqrt2\,e^{-i(\theta_k+\pi/4)}c_{k-1},\quad c_1 = e^{-i\theta_1}.
    \end{equation}
    Iterating yields
    \begin{equation}
      c_n = 2^{\frac{n-1}{2}} \exp\!\Bigl[-i\Bigl(\sum_{j=1}^n\theta_j + \frac{(n-1)\pi}{4}\Bigr)\Bigr].
    \end{equation}
    Therefore, the only non-zero matrix elements of $B_n$ are
    \begin{equation}
      (B_n)_{1,2^n} = c_n,\qquad (B_n)_{2^n,1} = c_n^*,
    \end{equation}
    with modulus always
    \begin{equation}
      |c_n| = 2^{\frac{n-1}{2}}.
    \end{equation}
  \end{lemma}
  From the above lemma, under the chosen measurement settings (all $\mathbf{a}_k\perp\mathbf{a}'_k$ lying in the $xy$ plane),
  \begin{equation}
    B_n = 2^{\frac{n-1}{2}}\bigl(e^{i\alpha}|0\cdots0\rangle\langle1\cdots1| + e^{-i\alpha}|1\cdots1\rangle\langle0\cdots0|\bigr),
  \end{equation}
  where the phase $\alpha = -\bigl(\sum_j\theta_j + \frac{(n-1)\pi}{4}\bigr)$ varies continuously with the measurement azimuthal angles.
\end{proof}

\end{document}